\documentclass{article}
\usepackage[a4paper]{geometry}
\usepackage{enumerate}
\usepackage{amsmath}
\usepackage{amssymb,latexsym}
\usepackage{amsthm}
\usepackage{color}
\usepackage{cancel}
\usepackage{graphicx}
\usepackage{cite}
\usepackage{longtable}
\usepackage{amscd}
\usepackage{lineno}

\makeatletter
\usepackage{amssymb,amsmath,amsthm,latexsym}
\usepackage[mathscr]{eucal} \usepackage{mathrsfs}
\usepackage{color}
\usepackage{hyperref}
\usepackage{comment}
\let\wfs@comment@comment\comment
\let\comment\@undefined

\usepackage{changes}
\let\wfs@changes@comment\comment
\let\comment\@undefined

\newcommand\comment{%
    \ifthenelse{\equal{\@currenvir}{comment}}
    {\wfs@comment@comment}
    {\wfs@changes@comment}%
}
\definechangesauthor[name=Daniele, color=red]{DAN}
\definechangesauthor[name=Giuseppe, color=green]{GIUS}

\makeatother

\def\C{\mathcal C}

\def\x{\mathbf x}
\def\y{\mathbf y}

\def\mX{\mathcal X}

\def\Tr{{\rm Tr}}

\def\F{\mathbb F}

\def\ps@headings{
 \def\@oddhead{\footnotesize\rm\hfill\runningheadodd\hfill\thepage}
 \def\@evenhead{\footnotesize\rm\thepage\hfill\runningheadeven\hfill}
 \def\@oddfoot{}
 \def\@evenfoot{\@oddfoot}
}

\DeclareMathOperator{\rank}{rank}
\DeclareMathOperator{\rk}{rk}
\DeclareMathOperator{\GL}{GL}
\DeclareMathOperator{\PR}{PR}
\DeclareMathOperator{\LP}{LP}
\DeclareMathOperator{\WW}{W}
\DeclareMathOperator{\ZZ}{Z}
\DeclareMathOperator{\NN}{N}
\DeclareMathOperator{\End}{End}

\newcommand{\Fmk}{[n,k]_{q^m/q}}
\newcommand{\Fmkd}{[n,k,d]_{q^m/q}}

\usepackage{comment}

\usepackage{changes}
\begin{document}


\newtheorem{thm}{Theorem}[section]
\newtheorem{lemma}[thm]{Lemma}
\newtheorem{proposition}[thm]{Proposition}
\newtheorem{corollary}[thm]{Corollary}
\newtheorem{Property}[thm]{Property}
\newtheorem*{Main}{Main Theorem}
\newtheorem{Notation}[thm]{Notation}
\newtheorem{conj}[thm]{Conjecture}
\newtheorem{crit}[thm]{Criterion}
\newtheorem{oss}[thm]{Observation}
\newtheorem{question}[thm]{Question}

\newtheorem{definition}[thm]{Definition}
\newtheorem{rem}[thm]{Remark}
\newtheorem{ex}[thm]{Example}

\newtheorem*{main}{Main Theorem}

\newcommand{\fa}[1]{{\color{cyan}{#1}}}

\def\q{{^{q}}}
\def\qq{^{q^2}}
\def\qqq{{^{q^3}}}
\def\qqqq{{^{q^4}}}
\def\qqqqq{{^{q^5}}}
\def\F{\mathbb{F}}
\def\m{{m}}
\def\mq{{m^{q}}}
\def\mqq{{m^{q^2}}}
\def\mqqq{{m^{q^3}}}
\def\mqqqq{{m^{q^4}}}
\def\mqqqqq{{m^{q^5}}}
\def\eps{{\epsilon}}
\def\epsq{{\epsilon\q}}
\def\epsqq{{\epsilon\qq}}
\def\epsqqq{{\epsilon\qqq}}
\def\epsqqqq{{\epsilon\qqqq}}
\def\epsqqqqq{{\epsilon\qqqqq}}

\title{An infinite family of non-extendable MRD codes}


\author{
Daniele Bartoli\thanks{Dipartimento di Matematica e Informatica,
Universit\`a degli Studi di Perugia, Perugia, Italy.
\texttt{daniele.bartoli@unipg.it}}, 
Alessandro Giannoni\thanks{Dipartimento di Matematica e Applicazioni ``R.~Caccioppoli'',
Universit\`a di Napoli Federico II, Napoli, Italy.
\texttt{alessandro.giannoni@unina.it}, \texttt{giuseppe.marino@unina.it}, \texttt{alessandro.neri@unina.it}}, 
Giuseppe Marino\footnotemark[2], 
Alessandro Neri\footnotemark[2]
}
\date{}

\maketitle
\begin{abstract}
In the realm of rank-metric codes, Maximum Rank Distance (MRD) codes are optimal algebraic structures attaining the Singleton-like bound. A major open problem in this field is determining whether an MRD code can be extended to a longer one while preserving its optimality. 
This work investigates $\mathbb{F}_{q^m}$-linear MRD codes that are non-extendable but do not attain the maximum possible length. Geometrically, these correspond to scattered subspaces with respect to hyperplanes that are maximal with respect to inclusion but not of maximum dimension.
By exploiting this geometric connection, we introduce the first infinite family of non-extendable $[4,2,3]_{q^5/q}$ MRD codes. Furthermore, we prove that these codes are self-dual up to equivalence.
\end{abstract}

\section{Introduction}

Rank-metric codes, originally introduced by Delsarte \cite{delsarte1978bilinear} and Gabidulin \cite{gabidulin1985theory}, have gathered significant attention due to their applications in network coding, cryptography, and space-time coding. An $\mathbb{F}_{q^m}$-linear rank-metric code of length $n$, dimension $k$, and minimum rank distance $d$ attains the Singleton-like bound $d \leq n - k + 1$ (assuming $n \leq m$). Codes meeting this bound are called \emph{Maximum Rank Distance (MRD) codes}.

A fundamental problem in coding theory is the \emph{extension} of optimal codes. For $n < m$, the vast majority of known MRD codes are obtained by puncturing maximum-length codes (where $n=m$). An MRD code is called \emph{non-extendable} if it cannot be obtained by puncturing a longer MRD code. Equivalently, its generator matrix cannot be augmented with an additional column without strictly decreasing the minimum rank distance. Classifying non-extendable MRD codes is a difficult algebraic task, as it requires proving the non-existence of any valid extension.

To tackle this, we translate the problem into finite geometry using the theory of $q$-systems. As established in recent literature \cite{randrianarisoa2020geometric, zini2021scattered}, a nondegenerate $\mathbb{F}_{q^m}$-linear MRD code of dimension $k$ naturally corresponds to a $(k-1)$-scattered $\mathbb{F}_q$-subspace. Crucially, a non-extendable MRD code corresponds exactly to a \emph{maximally scattered subspace}: a scattered subspace that is not properly contained in any larger scattered subspace. 

While maximum scattered subspaces (corresponding to maximum-length MRD codes) have been extensively studied and classified in low dimensions, geometric objects that are less structured and more difficult to classify emerge when studying maximality without maximum dimension. Currently, only one sporadic example of a maximally scattered subspace not of maximum dimension is known \cite{blokhuis2000scattered, MR3587265}. 

In this paper, we construct the first infinite family of such objects. Our main result (Theorem \ref{MainTh}) provides an infinite family of non-extendable $[4,2,3]_{q^5/q}$ MRD codes for $q=3^{2h+1}$. We achieve this by analyzing $4$-dimensional $\mathbb{F}_q$-subspaces in $\mathbb{F}_{q^5}^2$ and exploiting the recent partial classification of maximum scattered subspaces in $\mathbb{F}_{q^5}^2$ obtained in \cite{Longobardi2025classificationmaximumscatteredlinear}. Finally, in Section \ref{Sec:DualCode}, we explore the dual properties of our construction, proving that these codes are self-dual up to equivalence.
\section{Preliminaries}\label{Sec:Connections}

\subsection{Scattered subspaces}
This section introduces the definitions and foundational results for scattered and evasive subspaces.

\begin{definition}
Let $k$ and $n$ be positive integers, and let $h$ and $r$ be non-negative integers satisfying $h < k$ and $h \le r$. An $\mathbb{F}_{q}$-subspace $U \subseteq \mathbb{F}_{q^m}^k:= \mathbb{F}_{q^m}^k$ is defined as \textbf{$(h,r)$-evasive} if, for any $h$-dimensional $\mathbb{F}_{q^m}$-subspace $H \subseteq \mathbb{F}_{q^m}^k$, the inequality $\dim_{\mathbb{F}_q}(U \cap H) \leq r$ holds.
In the specific case where $h=r$, an $(h,h)$-evasive subspace is called \textbf{$h$-scattered}. If $h=1$, a $1$-scattered subspace is simply referred to as \textbf{scattered}.
\end{definition}

The concept of scattered subspaces was first introduced by Blokhuis and Lavrauw in \cite{blokhuis2000scattered}. This idea was later extended to any integer $h$ by Csajbók, Marino, Polverino and Zullo in \cite{csajbok2021generalising}. The broader notion of evasive subspaces was formally defined in \cite{bartoli2021evasive}, though related concepts had appeared earlier in works such as \cite{pudlak2004pseudorandom,guruswami2011linear,dvir2012subspace,guruswami2016explicit}.

A well-established result for $h$-scattered subspaces is the upper bound on their $\mathbb{F}_{q}$-dimension. Specifically, for an $h$-scattered subspace $U \subseteq \mathbb{F}_{q^m}^k$, we have:
\begin{equation}\label{eq:scattered_bound}
\dim_{\mathbb{F}_q}(U) \leq \frac{km}{h+1},
\end{equation}
as shown in \cite{blokhuis2000scattered,csajbok2021generalising}. 

\begin{definition}
  An $h$-scattered subspace that achieves Bound \eqref{eq:scattered_bound} is called a \textbf{maximum $h$-scattered} subspace (or \textbf{maximum scattered} subspace, when $h=1$). An $h$-scattered subspace which is  not properly contained within any larger $h$-scattered subspace, is called a  \textbf{maximally $h$-scattered} subspace (or \textbf{maximally scattered} subspace, when $h=1$).
\end{definition}
 Consequently, the dimension (or rank) of a scattered subspace in $\mathbb{F}_{q^m}^k$ is at most $km/2$. If this maximum is met, the subspace is known as a maximum scattered subspace.


The first example of a maximally scattered subspace which is not maximum scattered -- a maximally scattered $5$-dimensional subspace of $\mathbb{F}_{2^6}^2$ -- was given in \cite[Example 3.2]{MR3587265} (see also \cite{blokhuis2000scattered}) and was  constructed using the GAP-package FinInG.  



So far,  no infinite families of maximally scattered linear sets which are not maximum scattered  are known and the aim of this paper is to provide the first infinite family of such objects.

On maximally scattered subspaces, the following result provides a useful lower bound on its dimension; see \cite[Theorem 2.1]{blokhuis2000scattered}.
\begin{thm}\label{LowerBound} 
    If $U\leq \mathbb{F}_{q^m}^k$ is a maximally scattered subspace, then 
    $$\textrm{dim}_{\mathbb{F}_q} (U) \geq \left \lceil \frac{km-m}{2}\right \rceil+1.$$
\end{thm}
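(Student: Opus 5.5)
The plan is a counting argument: if $U$ is too small, then some vector outside $U$ can be added while keeping the subspace scattered. Let $U$ be scattered with $\dim_{\mathbb{F}_q}(U)=r$. For $v\in\mathbb{F}_{q^m}^k\setminus U$, the question is when $U+\langle v\rangle_{\mathbb{F}_q}$ is still scattered. Such an extension has dimension $r+1$, so if we show a suitable $v$ exists whenever $r\le \lceil (km-m)/2\rceil$, then $U$ cannot be maximally scattered. That contrapositive is exactly the statement.

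\emph{Characterizing the bad vectors.} Suppose $U'=U\oplus\langle v\rangle_{\mathbb{F}_q}$ fails to be scattered. Then some $1$-dimensional $\mathbb{F}_{q^m}$-subspace $\langle w\rangle_{\mathbb{F}_{q^m}}$ meets $U'$ in an $\mathbb{F}_q$-subspace of dimension at least $2$. Because $U$ is scattered, this intersection contains an element $u+\lambda v$ with $\lambda\neq 0$ and $u\in U$. It also contains a second element independent from it over $\mathbb{F}_q$. Subtracting a suitable $\mathbb{F}_q$-multiple, we may take this second element to be some $u'\in U\setminus\{0\}$.

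So $u+\lambda v=\alpha u'$ for some $\alpha\in\mathbb{F}_{q^m}$. Rescaling by $\lambda^{-1}$ shows $v\in \alpha u'+U$, with $\alpha\in\mathbb{F}_{q^m}$ and $u'\in U$. Hence every bad $v$ lies in the set
\[
B=\bigcup_{u'\in U\setminus\{0\}}\bigl(\langle u'\rangle_{\mathbb{F}_{q^m}}+U\bigr).
\]

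\emph{Counting.} Since $U$ is scattered, $\langle u'\rangle_{\mathbb{F}_{q^m}}\cap U=\langle u'\rangle_{\mathbb{F}_q}$. Therefore $\langle u'\rangle_{\mathbb{F}_{q^m}}+U$ is an $\mathbb{F}_q$-space of dimension $m+r-1$. The set $\langle u'\rangle_{\mathbb{F}_{q^m}}+U$ depends only on the point $\langle u'\rangle_{\mathbb{F}_{q^m}}$, and there are $(q^r-1)/(q-1)$ such points. This gives
\[
|B|\le \frac{q^r-1}{q-1}\,q^{m+r-1} < q^{2r+m-1}\cdot\frac{q}{q-1}\cdot q^{-1}\cdots .
\]
More carefully, each $\langle u'\rangle_{\mathbb{F}_{q^m}}+U$ contains $U$. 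Counting only the elements outside $U$ gives
\[
|B\setminus U|\le \frac{q^r-1}{q-1}\,(q^{m+r-1}-q^r).
\]

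A good $v$ exists as soon as this is less than $q^{km}-q^r$. A sufficient condition is $2r+m-2<km$, roughly. Comparing exponents, $\frac{q^r-1}{q-1}q^{m+r-1}<q^{2r+m-1}$, so $2r+m-1\le km$ suffices. This holds when $r\le (km-m+1)/2$, i.e. $r\le\lfloor (km-m+1)/2\rfloor=\lceil (km-m)/2\rceil$. In that range a good $v$ exists, so $U$ is not maximally scattered. Hence any maximally scattered $U$ has $r\ge\lceil (km-m)/2\rceil+1$.

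\emph{Main obstacle.} The delicate point is the boundary case, where the inequality must be strict. When $km-m$ is odd we need $2r+m-1=km$ to still leave room. Here the bound $|B|\le\frac{q^r-1}{q-1}q^{m+r-1}<\frac{q^r}{q-1}q^{m+r-1}\le q^{2r+m-1}$ holds for $q\ge2$. It is strictly below $q^{km}$, so $\mathbb{F}_{q^m}^k\setminus B$ is nonempty. Any $v$ in this complement is automatically outside $U$, because $U\subseteq B$. I would verify this strictness and the parity bookkeeping of the ceiling carefully; the rest is routine.
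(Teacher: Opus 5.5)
Your proof is correct and is essentially the paper's argument: the paper cites Blokhuis--Lavrauw for this statement and proves the general $h$-version by the same count, which for $h=1$ is exactly yours. Every bad $v$ lies in the union of the $(q^r-1)/(q-1)$ spaces $\langle u'\rangle_{\mathbb{F}_{q^m}}+U$, each of $\mathbb{F}_q$-dimension $m+r-1$, and your estimate $\frac{q^r-1}{q-1}q^{m+r-1}<q^{2r+m-1}\le q^{km}$ settles the strict inequality in the boundary case cleanly, more carefully than the paper's ``in particular'' step (the only loose ends are cosmetic: for $U=\{0\}$ one has $B=\emptyset$ and must pick $v\neq 0$, and the stray ``$\cdots$'' estimate should be deleted).
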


The above theorem can be generalized to the case of $h$-scattered subspaces.
\begin{thm}
    If $U\leq \mathbb{F}_{q^m}^k$ is a maximally $h$-scattered subspace, then

 $$\textrm{dim}_{\mathbb{F}_q} (U) \geq \left\lceil \frac{km-hm}{h+1}\right\rceil+h.$$

\end{thm}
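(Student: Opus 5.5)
The plan is the standard counting argument behind Theorem~\ref{LowerBound}, adapted to $h$-dimensional $\mathbb{F}_{q^m}$-subspaces. Let $U$ be maximally $h$-scattered with $\dim_{\mathbb{F}_q}U=r$. Maximality says that for every $v\in\mathbb{F}_{q^m}^k\setminus U$ the subspace $U'=U+\langle v\rangle_{\mathbb{F}_q}$ is not $h$-scattered. So there is an $h$-dimensional $\mathbb{F}_{q^m}$-subspace $H$ with $\dim_{\mathbb{F}_q}(U'\cap H)\ge h+1$. Since $\dim_{\mathbb{F}_q}(U\cap H)\le h$ and $U$ has codimension one in $U'$, this forces $\dim_{\mathbb{F}_q}(U\cap H)=h$. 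It also gives a vector $w=u+\lambda v\in H$ with $u\in U$ and $\lambda\in\mathbb{F}_q^*$. Hence $v\in U+H$, where the sum is taken over $\mathbb{F}_q$. It follows that
\[
\mathbb{F}_{q^m}^k=\bigcup_{H}\,(U+H),
\]
where $H$ runs over the $h$-dimensional $\mathbb{F}_{q^m}$-subspaces meeting $U$ in an $\mathbb{F}_q$-subspace of dimension exactly $h$.

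Next I will show that every such $H$ is determined by the $h$-dimensional $\mathbb{F}_q$-subspace $W=U\cap H$ of $U$, namely $H=\langle W\rangle_{\mathbb{F}_{q^m}}$. Suppose instead that $\langle W\rangle_{\mathbb{F}_{q^m}}$ had dimension $t<h$. Then $U$ would meet a $t$-dimensional $\mathbb{F}_{q^m}$-subspace in at least $h>t$ dimensions over $\mathbb{F}_q$, contradicting the fact that an $h$-scattered subspace is also $t$-scattered for $t\le h$ (a standard fact from \cite{csajbok2021generalising}, which I would quote). Consequently the number $N$ of admissible $H$ is at most the Gaussian binomial $\binom{r}{h}_q$. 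Each set $U+H$ has size $q^{r}q^{hm}/q^{h}=q^{r+hm-h}$, because $|U\cap H|=q^h$.

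Comparing sizes, and using that every $U+H$ contains $U$, gives
\[
q^{km}\le q^r+\binom{r}{h}_q\bigl(q^{r+hm-h}-q^r\bigr)<\binom{r}{h}_q\,q^{r+hm-h}.
\]
Inserting $\binom{r}{h}_q\le c_q\,q^{h(r-h)}$ with $c_q=\prod_{i\ge1}(1-q^{-i})^{-1}$ yields
\[
(h+1)r>km-hm+h^2+h-\log_q c_q .
\]
This is the desired bound $r\ge\lceil (km-hm)/(h+1)\rceil+h$, up to the correction term.

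The main obstacle is the constant $c_q$. For $q$ small, $\log_q c_q$ is close to $1$, so the crude estimate may lose one unit, exactly the "$+h$" versus "$+h-1$" boundary. To fix this I would first sharpen the union bound. Every $U+H$ contains $U$, and distinct sets $U+H$, $U+H'$ overlap substantially whenever $W\cap W'\neq 0$. So I would count the cosets $v+U$ covered, rather than the vectors, to save a factor. Failing that, I would use the exact estimate $\binom{r}{h}_q<q^{h(r-h)}\cdot\frac{q}{q-1}$ together with the integrality of $r$: the ceiling absorbs any deficit strictly smaller than one in $(h+1)r$. That would also recover the case $h=1$, i.e.\ Theorem~\ref{LowerBound}, as a consistency check.
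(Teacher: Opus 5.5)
Your proposal is the same union-bound count the paper uses: $\mathbb{F}_{q^m}^k$ is covered by the sets $U+H$, where $H$ runs over the at most $\binom{r}{h}_q$ subspaces of weight $h$, each set of size $q^{r+hm-h}$. The paper finishes with the one-line claim that the covering fails ``in particular if $km>hm+s-h+h(s-h)$''. That step tacitly needs $\binom{s}{h}_q\le q^{h(s-h)+1}$, which is exactly the constant you single out.

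Your $c_q$-estimate does close the argument in two cases. For $q\ge 3$ one has $c_q<2<q$, so $(h+1)(r-h)>km-hm-1$ and integrality finishes. For $h=1$ and any $q$, use $\binom{r}{1}_q<\frac{q}{q-1}q^{r-1}$ and $\frac{q}{q-1}\le q$.

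The gap is $q=2$, $h\ge 2$, where $\log_2 c_2\approx 1.79>1$, and your repair does not work there. The inequality $\binom{r}{h}_q<\frac{q}{q-1}q^{h(r-h)}$ is false for $h\ge 2$. For example $\binom{4}{2}_2=35>32$, and in general the ratio $\binom{r}{h}_q/q^{h(r-h)}$ tends to $\prod_{i=1}^{h}(1-q^{-i})^{-1}>\frac{q}{q-1}$.

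In fact no estimate of the Gaussian binomial can help. Take $q=2$, $k=3$, $m=7$, $h=2$, $r=4$. Then $(h+1)(r-h)=km-hm-1$, and when $U$ spans there are exactly $35$ admissible $H$. But $2^4+35(2^{16}-2^4)=2293216>2^{21}$, so the union bound produces no extending vector.

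Your first alternative does not fix this as stated. Counting cosets of $U$ instead of vectors gives literally the same inequality, since each $U+H$ consists of $q^{hm-h}$ of the $q^{km-r}$ cosets. It only helps if you genuinely control the overlaps $(U+H)\cap(U+H')$ through higher-order inclusion--exclusion, which you have not done. In this particular example the conclusion is still true, but for a structural reason (Proposition~\ref{Prop:condNec}), not by counting. For $q=2$ and $h\ge 2$ the tight case is settled neither by your argument nor by the paper's.

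A smaller point concerns your ``standard fact'' that an $h$-scattered subspace is $t$-scattered for $t<h$. It requires $\dim_{\mathbb{F}_{q^m}}\langle U\rangle_{\mathbb{F}_{q^m}}\ge h$. In \cite{csajbok2021generalising} spanning is part of the definition, but it is not in the definition used here, and without it the statement fails.

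Let $U$ be a $2$-dimensional $\mathbb{F}_q$-subspace of $\langle e\rangle_{\mathbb{F}_{q^m}}\le\mathbb{F}_{q^m}^3$; it is trivially $2$-scattered. Every $\mathbb{F}_q$-subspace properly containing it contains some $U+\langle v\rangle_{\mathbb{F}_q}$. That subspace has dimension $3$ and lies in $\langle e,v\rangle_{\mathbb{F}_{q^m}}$, of $\mathbb{F}_{q^m}$-dimension at most $2$. So $U$ is maximally $2$-scattered of dimension $2<\lceil m/3\rceil+2$. You should state the spanning hypothesis explicitly.
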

\begin{proof}
    Let $s$ be the dimension of an $h$-scattered subspace $U$ in $\mathbb{F}_{q^m}^k$. The subspace $U$ can be extended to a larger $h$-scattered subspace $U'$ if and only if there exists a vector $v\in \mathbb{F}_{q^m}^k$ not contained in any space generated by $U$ and an $\mathbb{F}_{q^m}$-subspace of  $\mathbb{F}_{q^m}^k$ whose intersection with $U$ has dimension $h$ (i.e., of weight $h$ in $U$). First we have to determine an upper bound on the number of such $\mathbb{F}_{q^m}$-subspaces. The number of $h$-dimensional $\F_{q^m}$-subspaces of weight $h$ in $U$ is precisely the number of $h$-dimensional $\F_q$-subspaces of $U$, which is $\binom{s}{h}_q$.  The total number of vectors of $\mathbb{F}_{q^m}^k$ contained in any of the spaces generated by $U$ and an $h$-dimensional $\mathbb{F}_{q^m}$-subspace of weight $h$ is at most
    $$(q^{hm+s-h}-q^s) \binom{s}{h}_q+q^s.$$
    Thus, $U$ can be extended to a larger $h$-scattered subspace if 
    $q^{km}>(q^{hm+s-h}-q^s) \binom{s}{h}_q+q^s$
    and in particular if
    $ km> hm+s-h+h(s-h),$
    i.e., 

    $$s<\frac{km-hm}{h+1}+h. $$
    This is equivalent to say that if $U$ is maximally scattered then

    $$s\geq \left\lceil \frac{km-hm}{h+1}\right\rceil+h.$$
\end{proof}

\subsection{Linearized polynomials and Dickson matrices}

A key tool for describing $\mathbb{F}_q$-subspaces of $\mathbb{F}_{q^m}^k$ is the class of linearized polynomials (or $q$-polynomials) over $\mathbb{F}_{q^m}$. Let $\tilde{\mathcal{L}}_{m,q}$ denote the $\mathbb{F}_q$-algebra of linearized polynomials over $\mathbb{F}_{q^m}$ reduced modulo $x^{q^m}-x$. Every polynomial in $\tilde{\mathcal{L}}_{m,q}$ has the form 
$$L(x) = \sum_{i=0}^{m-1} a_i x^{q^i} \in \mathbb{F}_{q^m}[x].$$ 

It is a well-known fact that $\tilde{\mathcal{L}}_{m,q}$ is isomorphic to the $\mathbb{F}_q$-algebra of $\mathbb{F}_q$-linear endomorphisms of $\mathbb{F}_{q^m}$, namely $\End_{\mathbb{F}_q}(\mathbb{F}_{q^m})$. Consequently, for any $L \in \tilde{\mathcal{L}}_{m,q}$, its kernel is a well-defined $\mathbb{F}_q$-subspace of $\mathbb{F}_{q^m}$, given by
$$ \ker(L) := \{ x \in \mathbb{F}_{q^m} : L(x) = 0 \}. $$
Similarly, the rank of $L(x)$ is defined as the $\mathbb{F}_q$-dimension of its image. By the Rank-Nullity Theorem, we have
$$ \rank(L) := m - \dim_{\mathbb{F}_q}(\ker(L)). $$

The rank of such an operator is strictly related to the properties of its associated Dickson matrix.

\begin{definition}
Let $L(x) = \sum_{i=0}^{m-1} a_i x^{q^i} \in \tilde{\mathcal{L}}_{m,q}$. The \textbf{Dickson matrix} associated with $L(x)$ is the $m \times m$ matrix 
\[
D(L) = \begin{pmatrix}
a_0 & a_1 & \dots & a_{m-1} \\
a_{m-1}^q & a_0^q & \dots & a_{m-2}^q \\
\vdots & \vdots & \ddots & \vdots \\
a_1^{q^{m-1}} & a_2^{q^{m-1}} & \dots & a_0^{q^{m-1}}
\end{pmatrix}.
\]
\end{definition}

It is a well-known fact that the rank of $L(x)$ equals the rank of $D(L)$ over $\mathbb{F}_{q^m}$. This algebraic characterization allows us to translate geometric conditions on scattered subspaces into matrix rank conditions.

\subsection{Rank metric codes}
It is worth noting that linear $\F_q$-subspaces of $\mathbb{F}_{q^m}^k$ have a geometric connection to rank metric codes with some additional linear property. We start by introducing the theory of rank metric codes. 

\begin{definition}
    A \textbf{rank metric code} is a subset $\mathcal{C} \subset \F_{q}^{r \times s}$ of matrices, equipped with the \textbf{rank distance} 
    $$d(M,N) = \rank(M-N) \qquad \mbox{for any } M,N \in \C. $$
    The \textbf{minimum rank distance} of $\C$ is the integer
      \[
d_{\rk}(\C):=\min \{d_{\rk}(M,N) : M,N \in \mathcal{C}, M\neq N\}.
\]
    If in addition $\C$ is an $\F_q$-subspace, then $\C$ is said to be \textbf{$\F_q$-linear}, and in this case 
    \[
d_{\rk}(\C)=\min \{\rank(M) : M \in \mathcal{C} \setminus \{{\bf 0}\}\}.
\]
\end{definition} 

The parameters of a rank-metric code $\C$ are constrained by the Singleton-like bound:
\[
|\mathcal{C}| \leq q^{\max\{r,s\} (\min\{r,s\} - d_{\rk}(\C) + 1)},
\]
as established by Delsarte in \cite{delsarte1978bilinear}. Codes that attain this bound are known as \textbf{maximum rank distance (MRD) codes}, and they represent the most studied family of rank-metric codes. We will see later the most prominent family of MRD codes, which was given independently by Delsarte \cite{delsarte1978bilinear} and Gabidulin \cite{gabidulin1985theory}. 
These constructions of MRD codes exploit an additional notion of linearity, which we now introduce.
Recall that a field $\F_{q^m}$ is isomorphic to $\F_q^m$ as $\F_q$-vector space and any isomorphism can be derived by choosing an ordered $\F_q$-basis $\mathcal B$ of $\F_{q^m}$. In the same way, the space $\F_{q^m}^n$ is isomorphic to $\F_q^{m\times n}$ as $\F_q$-vector space. 

Hence, one can define the rank distance on $\F_{q^m}^n$ as the one induced by the rank distance in $\F_q^{m\times n}$ after writing every entry in coordinates with respect to an $\F_q$-basis $\mathcal B$ of $\F_{q^m}$. Since using a different $\F_q$-basis acts as an invertible matrix on $\F_q^{m\times n}$ -- that is, the change of basis matrix -- the rank distance does not depend on the choice of the $\F_q$-basis $\mathcal B$. Actually, it is well-known that the rank distance can be intrinsically defined on $\F_{q^m}^n$ as follows:
$$d_{\rk}(u,v)=\dim_{\F_q}(\langle u_1-v_1,\ldots,u_n-v_n\rangle_{\F_q}), \qquad \mbox{ for  } u=(u_1,\ldots,u_n), v=(v_1,\ldots, v_n) \in \F_{q^m}^n.$$
In this way, one has a natural notion of $\F_{q^m}$-linearity that is hidden in the matrix representation. From now on, we will only consider codes possessing such a linearity.

\begin{definition}
    An $\Fmkd$ \textbf{(rank-metric) code} $\C$ is a $k$-dimensional $\F_{q^m}$-subspace of $\F_{q^m}^n$, endowed with the rank distance, where $d=d_{\rk}(\C)$ is the \textbf{minimum rank distance} of $\C$. If the minimum rank distance of $\C$ is not known nor relevant, we will simply refer to $\C$ as an $\Fmk$ code.
\end{definition}

A \textbf{generator matrix} for an $[n,k]_{q^m/q}$ code is a matrix $G \in \F_{q^m}^{k\times n}$ such that $$\C=\mathrm{rowsp}(G)\leq \F_{q^m}^n,$$
where $\mathrm{rowsp}(\cdot)$ denotes the $\F_{q^m}$-vector space generated by the rows. If the columns of one (and hence all) generator matrix $G$ of $\C$ are $\F_q$-linearly independent, then the code is said to be \textbf{nondegenerate}. The terminology appears evident because this property on the generator matrix is equivalent to say that $\C$ cannot be isometrically embedded in a smaller ambient space $(\F_{q^m}^{n'},d_{\rk})$, with $n'<n$; see e.g. \cite{ABNR22}.

\medskip

There is a strong connection between $[n,k]_{q^m/q}$ nondegenerate rank-metric codes and $n$-dimensional $\F_q$-subspaces of $\F_{q^m}^k$ that are not contained in any $\F_{q^m}$-hyperplane. These $\F_q$-subspaces are nowadays known as \textbf{$q$-systems}, and their connection to rank-metric codes was explicitly introduced for the first time in \cite{randrianarisoa2020geometric}. Given an $n$-dimensional $\F_q$-subspace $U$ of $\F_{q^m}^k$ with the property that $\langle{U}\rangle_{\mathbb{F}_{q^m}}=\F_{q^m}^k$ -- namely, a $q$-system -- and given an $\mathbb{F}_{q}$-basis $\{g_1, \dots, g_n\}$ of $U$, one can construct a matrix $G\in \F_{q^m}^{k\times n}$ whose columns are these basis vectors, that is,
$$G=\begin{pmatrix}
    | & & | \\
    g_1 & \cdots  & g_n\\
    | & & |
\end{pmatrix}.$$ This matrix $G$ can be viewed as the generator of an $[n,k]_{q^m/q}$ nondegenerate rank-metric code $\C$. In other words, 
$$\C=\mathrm{rowsp}(G)\leq \F_{q^m}^n.$$

Vice-versa, if $\C$ is an $\Fmk$ nondegenerate code, then for a given generator matrix $G\in \F_{q^m}^{k\times n}$, we can select its columns and consider the $\F_q$-subspace $U$ that they generate inside $\F_{q^m}^k$. By the fact that $\C$ is nondegenerate, the $\F_q$-dimension of $U$ is $n$. Moreover, since the rank of $G$ is $k$, then $\langle U \rangle_{\F_{q^m}}=\F_{q^m}^k$, implying that $U$ is indeed a $q$-system.

The two procedures described above are not well-defined, since the first depends on the choice of the basis $\{g_1,\ldots,g_n\}$ of $U$, while the second depends on the choice of the generator matrix $G$ of $\C$ (or, equivalently, on the choice of a basis of $\C$). However, if we consider $\F_q$-subspaces of $\F_{q^m}^k$ up to $\GL(k,q^m)$-equivalence, and the $[n,k]_{q^m/q}$ codes up to $\GL(n,q)$-equivalence, the two procedures are one the inverse of the other, and we obtain a bijection between equivalence classes of $[n,k]_{q^m/q}$  {nondegenerate} codes, and equivalence classes of $\F_q$-subspaces of $\F_{q^m}^k$ of $\F_q$-dimension $n$; see \cite{randrianarisoa2020geometric,ABNR22}.

In the remainder of the paper, we will say that a $q$-system $U$ is an $\Fmkd$ \textbf{system associated} with a nondegenerate $\Fmkd$ code $\C$ if it can be obtained from $\C$ by the above procedure on one of its generator matrices. Similarly, we will say that $\C$ is associated with $U$.




The importance of $q$-system is given by the fact that they represent the geometrical counterparts of $\Fmk$ codes, and their scatteredness properties
reveal the property of the associated codes of being MRD. This was shown in a series of papers \cite{MR4079480,zini2021scattered}. We reformulate these findings in the following more compact way.

\begin{thm}[see {\cite[Corollary 5.7]{MR4079480}, \cite[Theorem 3.2]{zini2021scattered}}]\label{thm:MRD-iff-scattered}
 Let $\C$ be a nondegenerate $\Fmk$ code, and let $U$ be any of its associated $\Fmk$ systems. Then, $\C$ is MRD if and only if one of the following holds:
 \begin{enumerate}
     \item $n\le m$ and $U$ is $(k-1)$-scattered, or
     \item $n>m$, $n$ divides $km$ and $U$ is maximum $h$-scattered, with $h=\frac{km}{n}-1$.
 \end{enumerate}
\end{thm}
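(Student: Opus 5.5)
The plan is to work through the weight distribution of codewords, using the standard correspondence between codewords and hyperplanes. Fix a generator matrix $G$ whose columns $g_1,\ldots,g_n$ form an $\F_q$-basis of $U$. Every nonzero codeword has the form $c=xG$ with $x\in\F_{q^m}^k\setminus\{0\}$. Its rank weight is $\dim_{\F_q}\langle x\cdot g_1,\ldots,x\cdot g_n\rangle_{\F_q}$, which is the rank of the $\F_q$-linear map $U\to\F_{q^m}$, $u\mapsto x\cdot u$. The kernel of this map is $U\cap x^{\perp}$, where $x^\perp$ is an $\F_{q^m}$-hyperplane. Rank–nullity gives
\[ \mathrm{wt}(xG)=n-\dim_{\F_q}(U\cap x^{\perp}). \]
As $x$ ranges over nonzero vectors, $x^\perp$ ranges over all hyperplanes. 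Hence
\[ d_{\rk}(\C)=n-\max_H \dim_{\F_q}(U\cap H), \]
with $H$ running over $\F_{q^m}$-hyperplanes. The MRD condition is therefore equivalent to the statement that every hyperplane meets $U$ in dimension at most a fixed bound.

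\emph{Case $n\le m$.} The Singleton-like bound for $\F_q^{m\times n}$ gives $|\C|=q^{mk}\le q^{m(n-d+1)}$, so MRD means $d=n-k+1$. By the formula above this is exactly $\dim(U\cap H)\le k-1$ for every hyperplane $H$, i.e. $U$ is $(k-1,k-1)$-evasive, which is the definition of $(k-1)$-scattered. Both directions of the equivalence follow at once.

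\emph{Case $n>m$.} The bound now reads $q^{mk}\le q^{n(m-d+1)}$, so MRD means $n(m-d+1)=km$, i.e. $d=m+1-km/n$. In particular $n$ must divide $km$. Write $h=km/n-1$, so that MRD is equivalent to $\max_H\dim(U\cap H)=n-m+h$.

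The main obstacle is converting this hyperplane condition into the $h$-scattered condition, which concerns $h$-dimensional $\F_{q^m}$-subspaces $W$. I would handle this with Delsarte duality. The dual code $\C^\perp$ is an $[n,n-k]$ code, and $\C$ is MRD if and only if $\C^\perp$ is. For $n>m$ I would instead work with the transposed matrix code, of size $n\times m$, and use the known duality between $q$-systems (the Delsarte dual of a $q$-system, obtained via a nondegenerate trace form). Under this duality, intersections of $U$ with $\F_{q^m}$-subspaces of codimension $r$ correspond to intersections of the dual system $U^{\perp'}\subseteq\F_{q^m}^{(n-k)}$ with subspaces of dimension $r$, with dimensions shifted by $\dim W\cdot m-\ldots$.

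As a cleaner alternative route, I would use the known characterization in \cite{MR4079480,zini2021scattered}: MRD with $n>m$ means all codewords of every $r$-dimensional subcode have the generalized weights forced by the MRD generalized weight hierarchy. Specializing the generalized weight $d_{k-h}$ and using the formula $d_r(\C)=n-\max_{\operatorname{codim}W=r}\dim(U\cap W)$ gives $\dim(U\cap W)\le h$ for all $W$ with $\dim W=h$. Since $\dim U=n=km/(h+1)$ meets bound \eqref{eq:scattered_bound}, $U$ is maximum $h$-scattered.

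For the converse, maximum $h$-scattered together with the generalized weight formula and bound \eqref{eq:scattered_bound} forces $d=m+1-km/n$. The delicate part is verifying the weight-hierarchy step, which I would take from the cited results.
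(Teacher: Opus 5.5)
The paper does not prove this statement; it quotes it from \cite{MR4079480,zini2021scattered}. Your formula $\mathrm{wt}(xG)=n-\dim_{\F_q}(U\cap x^\perp)$ is correct, your treatment of $n\le m$ is correct and complete, and so is your reduction of the case $n>m$ to $\max_H\dim_{\F_q}(U\cap H)=n-m+h$.

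\textbf{The gap.} For $n>m$ there is a genuine gap. Since $h=km/n-1<k-1$, the condition to be proved concerns $h$-dimensional $\F_{q^m}$-subspaces, which the hyperplane formula does not see. Neither of your routes bridges codimension $1$ and dimension $h$.
\begin{enumerate}
\item The duality sketch conflates two objects. The trace-orthogonal complement of $U$ lies in $\F_{q^m}^k$ and has $\F_q$-dimension $km-n$. The system of $\C^\perp$ lies in $\F_{q^m}^{n-k}$ and has dimension $n$. The dimension shift is also left as ``$\ldots$''.
\item The generalized-weight route needs precisely $d_{k-h}(\C)=n-h$, and nothing you state yields it. From $d_1(\C)=m-h$ and strict monotonicity you only get $d_{k-h}(\C)\ge m+k-2h-1$. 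For example, with $(n,k,m)=(6,3,4)$ and $h=1$ this gives $d_2\ge 4$, whereas you need $d_2\ge 5$.
\item The converse (maximum $h$-scattered $\Rightarrow d=m-h$) is only asserted, and the bound $\dim U\le km/(h+1)$ plays no role in it.
\end{enumerate}
Taking these steps ``from the cited results'' amounts to assuming the theorem.

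\textbf{The missing idea.} Duality settles both directions at once.
\begin{itemize}
\item As you noted, $\C$ is MRD iff $\C^\perp$ is (Delsarte). For $n>m$, the Singleton-like bound for the $[n,n-k]_{q^m/q}$ code $\C^\perp$ reads $d(\C^\perp)\le km/n+1=h+2$. Hence $\C$ is MRD iff $n\mid km$ and $d(\C^\perp)\ge h+2$.
\item Next show: $d(\C^\perp)\ge t+1$ iff every $\F_q$-subspace $S\le U$ with $\dim_{\F_q}S\le t$ satisfies $\dim_{\F_{q^m}}\langle S\rangle_{\F_{q^m}}=\dim_{\F_q}S$.
\begin{itemize}
\item A codeword of rank $t'$ can be written $c=\sum_{i=1}^{t'}\alpha_ia_i$, with the $\alpha_i\in\F_{q^m}$ $\F_q$-independent and the $a_i\in\F_q^n$ $\F_q$-independent. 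Then $c\in\C^\perp$ iff $\sum_i\alpha_i\,Ga_i^{\top}=0$, and the $Ga_i^{\top}$ are $\F_q$-independent vectors of $U$.
\item Conversely, take an $\F_q$-basis $u_i=Ga_i^{\top}$ of $S$ satisfying an $\F_{q^m}$-relation $\sum_i\beta_iu_i=0$. This yields the nonzero codeword $\sum_i\beta_ia_i\in\C^\perp$, of rank at most $\dim S$.
\end{itemize}
\item For a $q$-system, the condition with $t=h+1$ is equivalent to $U$ being $h$-scattered. An $S$ of dimension $j\le h+1$ spanning only $w<j$ dimensions gives $W=\langle S\rangle_{\F_{q^m}}$ with $\dim_{\F_q}(U\cap W)>w$, and conversely. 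Such a $W$ can be enlarged to dimension $h$ by adjoining vectors of $U\setminus W$, which never decreases the excess $\dim_{\F_q}(U\cap W)-\dim_{\F_{q^m}}W$.
\item Since $\dim_{\F_q}U=km/(h+1)$, being $h$-scattered is the same as being maximum $h$-scattered.
\end{itemize}
With $d(\C^\perp)=k+1$, the same lemma also re-proves the case $n\le m$. If you prefer to keep your route, Wei duality for generalized rank weights turns $d(\C^\perp)\ge h+2$ into $d_{k-h}(\C)=n-h$. The converse, however, still needs Delsarte's theorem.
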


For a full overview of the relation between scattered and evasive $q$-system and the parameters of their associated rank-metric codes, we refer the interested reader to \cite{MR4621445}. 

\medskip 

MRD codes have been studied since the late 70's, when Delsarte provided their first construction \cite{delsarte1978bilinear}. This construction was rediscovered a few years later by Gabidulin \cite{gabidulin1985theory}. These codes are now known as Delsarte-Gabidulin codes and can be described in the following way. 

Let $v=(v_1,\ldots.v_n)\in \F_{q^m}^n$ be such that $\dim_{\F_q}(\langle v_1,\ldots,v_n\rangle_{\F_q})=n$, and let $k\le n$. The \textbf{Delsarte-Gabidulin code} $\C_k(v)$ is the $[n,k]_{q^m/q}$ code whose generator matrix is given by the $k\times n$ \textbf{Moore matrix}
$$M_k(v)=\begin{pmatrix}
    v_1 & v_2 & \cdots & v_n \\
    v_1^q & v_2^q & \cdots & v_n^q \\
    \vdots & \vdots & & \vdots \\
    v_1^{q^{k-1}} & v_2^{q^{k-1}} & \cdots & v_n^{q^{k-1}}
\end{pmatrix}.$$

Using the procedure described previously, it is easy to see that one of its associated $\Fmk$ systems is 
$$V_k(\mX):=\{(x,x^q,\ldots,x^{q^{k-1}}\,:\, x \in \mX\},$$
where $\mX=\langle v_1,\ldots, v_n\rangle_{\F_q}$. 

Apart from Delsarte-Gabidulin codes, only few other systematic constructions of $\Fmk$ MRD codes are known; see e.g. \cite{sheekey2016new}. All such constructions have an algebraic flavor, and rely on the representation of the space of square matrices $\F_q^{m\times m}$ as $\End_{\F_q}(\F_{q^m})$. Hence, these constructions are all designed for the square case. For what concerns the rectangular case, especially for $n<m$, the known examples of $[n,k,n-k+1]_{q^m/q}$ MRD are obtained from a square $[m,k,m-k+1]_{q^m/q}$ MRD code by \textbf{puncturing}. 

\begin{definition}
   Let $\C$ be an $\Fmk$ code and let $A\in \F_q^{n\times n'}$ be such that $\rank(A)=n'$. The \textbf{puncturing} of $\C$ on $A$ is the $[n',k']_{q^m/q}$ code 
   $$\pi(\C,A):=\{cA\,:\, c \in \C\}.$$
\end{definition}

In terms of associated $q$-systems, the following result is straightforward.

\begin{thm}\label{thm:puncturing_rankmetric}
    Let $\C$ be an $\Fmk$ code, and let $U$ be any of its associated $\Fmk$ systems. Then, an $[n',k]_{q^m/q}$ code $\C'$ is a puncturing of $\C$ if and only if there exists an $[n',k]_{q^m/q}$ system $U'$ associated with $\C'$ such that $U'\leq U$.
\end{thm}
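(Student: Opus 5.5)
The plan is to prove the two implications separately, working with explicit generator matrices and using the correspondence between generator matrices and $q$-systems described above. Throughout, fix a generator matrix $G\in\F_{q^m}^{k\times n}$ of $\C$ whose columns $g_1,\ldots,g_n$ are an $\F_q$-basis of $U=\langle g_1,\ldots,g_n\rangle_{\F_q}$. Since the notion of associated system is only defined up to the choice of generator matrix and basis, I will first observe that every system associated with $\C$ is $\GL(k,q^m)$-equivalent to $U$, and that every generator matrix of $\C$ has the form $SG$ with $S\in\GL(k,q^m)$. This lets me fix $G$ once and for all.

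For the forward direction, suppose $\C'=\pi(\C,A)$ with $A\in\F_q^{n\times n'}$ of rank $n'$. Then $\C'$ is the row space of $GA$. Each column of $GA$ is an $\F_q$-linear combination of the columns of $G$, because $A$ has entries in $\F_q$. So the $\F_q$-span $U'$ of the columns of $GA$ lies inside $U$. The columns of $GA$ are $\F_q$-independent: if $GAx=0$ with $x\in\F_q^{n'}$, then independence of the $g_i$ gives $Ax=0$, and $\rank(A)=n'$ forces $x=0$. Hence $\dim_{\F_q}U'=n'$. The statement gives $\C'$ dimension $k$, so $GA$ has rank $k$ and $\langle U'\rangle_{\F_{q^m}}=\F_{q^m}^k$. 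Therefore $U'$ is an $[n',k]_{q^m/q}$ system associated with $\C'$, it is obtained from the generator matrix $GA$, and $U'\le U$.

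For the converse, let $U'\le U$ be an $[n',k]_{q^m/q}$ system associated with $\C'$, coming from a generator matrix $G'$ of $\C'$ whose columns $g'_1,\ldots,g'_{n'}$ form an $\F_q$-basis of $U'$. Each $g'_j$ lies in $U$, so it can be written uniquely as $g'_j=\sum_i a_{ij}g_i$ with $a_{ij}\in\F_q$. Put $A=(a_{ij})\in\F_q^{n\times n'}$, so that $G'=GA$. The matrix $A$ has rank $n'$ because the $g'_j$ are $\F_q$-independent. Then $\C'=\mathrm{rowsp}(GA)=\{cA : c\in\C\}=\pi(\C,A)$.

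The only delicate step is the bookkeeping about which generator matrix and which associated system are meant, since these are defined only up to equivalence. The fix is to note that replacing $G$ by $SG$ replaces $U$ by $SU$ and $U'$ by $SU'$, which preserves the inclusion $U'\le U$. In the converse, $G'$ may a priori use a different generator matrix of $\C$; I will choose $G$ compatibly, namely so that the system $U$ in the hypothesis is exactly the span of the columns of $G$. This is possible because $U$ is assumed to be associated with $\C$ via some generator matrix. With that choice the argument above goes through unchanged.
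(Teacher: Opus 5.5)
Your proof is correct and follows the same approach as the paper: puncturing by $A$ yields the generator matrix $GA$, whose column span lies inside $U$, and conversely an inclusion $U'\le U$ produces an $A\in\F_q^{n\times n'}$ of rank $n'$ with $G'=GA$. The only difference is in the converse. The paper completes a basis of $U'$ to one of $U$ and takes $A=(I_{n'}\mid 0)^\top$, whereas you express the given columns of $G'$ in the given basis of $U$, which gives $G'=GA$ exactly without re-choosing generator matrices; you also check explicitly that the columns of $GA$ are $\F_q$-independent, which the paper leaves implicit.
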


\begin{proof}
    First, suppose that $U'$ is an $[n',k]_{q^m/q}$ system contained in $U$. Let $\{g_1,\ldots,g_{n'}\}$ be an $\F_q$-basis of $U'$, and let $G'\in \F_{q^m}^{k\times n'}$  be the matrix whose $i$th column is $g_i$. Clearly, we have that $\C'=\mathrm{rowsp}(G')$ is an $[n',k']_{q^m/q}$ code associated with $U'$.  Let us complete  $\{g_1,\ldots,g_{n'}\}$ to an $\F_q$-basis $\{g_1,\ldots,g_{n}\}$ of $U$ and let $G\in \F_{q^m}^{k\times n}$  be the matrix whose $i$th column is $g_i$. for $i \in \{1,\ldots,n\}$. In this way, $\C=\mathrm{rowsp}(G)$. Then, if we consider the matrix $A=(I_{n'}\,\mid\,0 )^\top$, we have $G'=GA$, and hence $$\C'=\mathrm{rowsp}(G')=\mathrm{rowsp}(GA)=\pi(\C,A).$$

    On the other hand, let us assume that $\C'$ is a puncturing of $\C$. Then, there exists $A\in \F_q^{n\times n'}$ of rank $n'$ such that $\C'=\pi(\C,A)$. Hence, if $G$ is a generator matrix of $\C$, then $G'=GA$ is a generator matrix of $\C'$. By definition of associated $q$-system, this means that, there exists a generator matrix $G$ of $\C$ such that $U$ is the $\F_q$-span of the columns of $G$. Moreover, the $\F_q$-span $U'$ of the columns of $G'$ is clearly an $\F_q$ subspace of $U$, concluding the proof.
\end{proof}

As mentioned above, all the known constructions of $\Fmk$ MRD codes with $n<m$ are obtained by puncturing an $[m,k]_{q^m/q}$ MRD code. It is natural to ask whether one can construct new $\Fmk$ MRD codes that cannot be obtained by puncturing a longer MRD code. In terms of the associated $q$-systems, by Theorem \ref{thm:puncturing_rankmetric}, we can characterize such codes as follows.

\begin{corollary}
    Let $\C$ be an $\Fmk$ code with $n\le m$, and let $U$ be any of its associated $\Fmk$ systems. Then, $\C$ is an MRD codes that cannot be obtained by puncturing a $[n+1,k]_{q^m/q}$ code if and only if $U$ is maximally $(k-1)$-scattered.
\end{corollary}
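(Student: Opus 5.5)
The statement follows by combining Theorem \ref{thm:MRD-iff-scattered} with Theorem \ref{thm:puncturing_rankmetric}. Throughout we are in the regime $n\le m$. One point needs care: the right notion of ``puncturing of a longer code'' here is puncturing an $[n+1,k]_{q^m/q}$ \emph{MRD} code. Being MRD is inherited by a puncturing, but a puncturing of a non-MRD code could still be MRD, and any code is a puncturing of some longer code. So I read the statement as ``cannot be obtained by puncturing an $[n+1,k]_{q^m/q}$ MRD code'' and prove the equivalence in that form.

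\emph{Forward direction.} Assume $\C$ is MRD and is not a puncturing of an $[n+1,k]_{q^m/q}$ MRD code.
\begin{enumerate}
\item By Theorem \ref{thm:MRD-iff-scattered}(1), $U$ is $(k-1)$-scattered.
\item Suppose, for contradiction, that $U$ is not maximally $(k-1)$-scattered. Then there is a $(k-1)$-scattered $W\supsetneq U$. Choose $v\in W\setminus U$ and set $U'=U+\langle v\rangle_{\F_q}$. Then $U'$ has dimension $n+1$ and is still $(k-1)$-scattered, since $\dim_{\F_q}(U'\cap H)\le \dim_{\F_q}(W\cap H)$ for every $\F_{q^m}$-subspace $H$.
\item $U'$ spans $\F_{q^m}^k$ because it contains $U$, so $U'$ is a $q$-system. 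Let $\C''$ be a code associated with $U'$; it is an $[n+1,k]_{q^m/q}$ nondegenerate code.
\item If $n+1\le m$, Theorem \ref{thm:MRD-iff-scattered}(1) shows that $\C''$ is MRD.
\item By Theorem \ref{thm:puncturing_rankmetric}, since $U\le U'$, the code $\C$ is a puncturing of $\C''$. This contradicts the hypothesis.
\end{enumerate}

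\emph{Converse.} Assume $U$ is maximally $(k-1)$-scattered.
\begin{enumerate}
\item By Theorem \ref{thm:MRD-iff-scattered}, $\C$ is MRD.
\item Suppose, for contradiction, that $\C=\pi(\C'',A)$ for some $[n+1,k]_{q^m/q}$ MRD code $\C''$. We may take $\C''$ nondegenerate. A degenerate MRD code would be isometric to a shorter code with the same minimum distance $n-k+2$, and that shorter code would violate the Singleton bound.
\item By Theorem \ref{thm:puncturing_rankmetric}, there is a system $U'$ associated with $\C''$ and a system associated with $\C$ contained in $U'$. Since associated systems are unique up to $\GL(k,q^m)$, after applying such a map we may assume $U\le U'$, with $\dim_{\F_q}U'=n+1$.
\item If $n+1\le m$, Theorem \ref{thm:MRD-iff-scattered}(1) makes $U'$ $(k-1)$-scattered, contradicting the maximality of $U$.
\end{enumerate}

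\emph{Main obstacle: the boundary case $n=m$.} Here $n+1>m$, so part (2) of Theorem \ref{thm:MRD-iff-scattered} applies instead of part (1).
\begin{enumerate}
\item An $[m+1,k]_{q^m/q}$ MRD code cannot exist, since its minimum distance would be $d=m-k+2$, violating the bound $|\C|\le q^{(m+1)(m-d+1)}$ in the $(m+1)\times m$ matrix setting. So no $[m+1,k]_{q^m/q}$ MRD code exists, and ``non-extendable'' holds automatically once $\C$ is MRD.
\item On the geometric side, a $(k-1)$-scattered subspace of dimension $m$ attains the bound \eqref{eq:scattered_bound} with $h=k-1$. It is therefore maximum, hence maximally $(k-1)$-scattered.
\item So both sides of the equivalence hold automatically when $n=m$, and the case closes.
\end{enumerate}
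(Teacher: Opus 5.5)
Your proposal is correct and follows the same route as the paper, whose proof only cites Theorems~\ref{thm:MRD-iff-scattered} and~\ref{thm:puncturing_rankmetric}. Your extra details are exactly what the paper leaves implicit: reading the statement as ``puncturing an MRD code'', making $\C''$ nondegenerate, moving $U$ inside $U'$ via $\GL(k,q^m)$, and treating the case $n=m$. One small fix is needed for $n=m$. In the paper's sense an $[m+1,k]_{q^m/q}$ code is MRD when $q^{mk}=q^{(m+1)(m-d+1)}$, not when $d=m-k+2$. So its non-existence should be deduced from $(m+1)\nmid km$ for $1\le k\le m$, which is the divisibility condition in Theorem~\ref{thm:MRD-iff-scattered}(2).
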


\begin{proof}
    This directly follows from Theorem \ref{thm:MRD-iff-scattered} and Theorem \ref{thm:puncturing_rankmetric}.
\end{proof}

Thus, constructing maximally $(k-1)$-scattered $\Fmk$ systems that are not maximum coincides with constructing $\Fmk$ MRD codes with $n<m$ that are not obtained by puncturing a longer MRD code. We will call such codes \textbf{non-extendable}, following the terminology of the analogous codes in the classical Hamming metric. This represents the main motivation of our work.

\medskip




\subsection{A few things on algebraic varieties}

In this section, we summarize several concepts and results related to algebraic varieties.
We use the notations $\mathbb{P}^r(\mathbb{F}_q)$ and $\mathbb{A}^{r}(\mathbb{F}_{q})$ (or simply $\mathbb{F}_{q}^r$) to denote the projective and the affine space of dimension $r\in \mathbb{N}$ over the finite field $\mathbb{F}_{q}$, respectively. Let $\overline{\mathbb{F}_q}$ denote the algebraic closure of $\mathbb{F}_q$.

A variety $\mathcal{V}$ is defined as the set of common zeros of a finite collection of polynomials. Specifically, an affine $\mathbb{F}_q$-rational variety (or an affine variety defined over $\mathbb{F}_q$) is a set $\mathcal{V}\subset \mathbb{A}^r(\overline{\mathbb{F}_q})$ for which there exist polynomials $F_1, \ldots, F_s$ in the polynomial ring $\mathbb{F}_q[X_1, \ldots, X_r]$ such that:
\[
\mathcal{V} = \{ (a_1, \ldots, a_r) \in \mathbb{A}^r(\overline{\mathbb{F}_q}) \mid F_i(a_1, \ldots, a_r) = 0 \text{ for all } i = 1, \ldots, s \}.
\]
This set is also denoted as $V(F_1, \ldots, F_s)$.
Similarly, a  projective $\mathbb{F}_q$-rational variety (or a projective variety defined over $\mathbb{F}_q$) in $\mathbb{P}^r(\overline{\mathbb{F}_q})$ is defined using polynomials $F_1, \ldots, F_s \in \mathbb{F}_q[X_0, X_1, \ldots, X_r]$, with the additional requirement that each polynomial $F_i$ must be homogeneous.
The set of $\mathbb{F}_q$-rational points of an $\mathbb{F}_q$-rational variety $\mathcal{V}$ is given by the intersection $\mathcal{V}\cap \mathbb{A}^r(\mathbb{F}_q)$ or $\mathcal{V}\cap \mathbb{P}^r(\mathbb{F}_q)$, and it is usually denoted by $\mathcal{V}(\mathbb{F}_q)$.
A hypersurface is a variety defined by a single polynomial.
We say that a variety $\mathcal{V}$ is \textit{absolutely irreducible} if it cannot be expressed as the union of two proper subvarieties defined over the algebraic closure $\overline{\mathbb{F}_q}$. That is, there are no varieties $\mathcal{V}'$ and $\mathcal{V}''$ defined over $\overline{\mathbb{F}_q}$ and different from $\mathcal{V}$ such that $\mathcal{V}=\mathcal{V}' \cup \mathcal{V}''$.
In the specific case of a hypersurface $\mathcal{V}=V(F)\subset\mathbb{A}^r(\overline{\mathbb{F}_q})$ (or projective space), it is \textit{absolutely irreducible} if and only if its defining polynomial $F$ is irreducible over the algebraic closure. That is, there are no non-constant polynomials $G,H\in \overline{\mathbb{F}_q}[X_1,\ldots,X_r]$ (or homogeneous polynomials in $\overline{\mathbb{F}_q}[X_0,\ldots,X_r]$) such that $F=GH$.
The dimension of a variety can be defined as the maximal integer $s$ for which there exists a chain of distinct, nonempty, absolutely irreducible varieties contained in $\mathcal{V}$:
\[
\emptyset = \mathcal{V}_0 \subsetneq \mathcal{V}_1 \subsetneq \cdots \subsetneq \mathcal{V}_{s+1} = \mathcal{V}.
\]
 We say that an $s$-dimensional projective variety $\mathcal{V}$ has degree $d$, written $\deg(\mathcal{V})=d$, if $d$ is the number of intersection points with a general projective subspace of complementary dimension. That is,
\[
d = \#(\mathcal{V}\cap H),
\]
where $H \subseteq \mathbb{P}^r(\overline{\mathbb{F}_q})$ is a general projective subspace of dimension $r-s$. Algebraic varieties $\mathcal{V}\subset \mathbb{A}^r(\overline{\mathbb{F}_q})$ (or $\mathcal{V}\subset \mathbb{P}^r(\overline{\mathbb{F}_q})$) of dimension $1$, $2$, and $r-1$ are called curves, surfaces, and hypersurfaces, respectively.

Determining the degree of a variety is generally not straightforward; however, an upper bound to $\deg(\mathcal{V)}$ is given by $\prod_{i=1}^{s}\deg(F_i).$ 
We also recall that the Frobenius map $\Phi_q: x \mapsto x^q$ is an automorphism of $\mathbb{F}_{q^k}$ and generates the group $Gal(\mathbb{F}_{q^k} / \mathbb{F}_q)$ of automorphisms of  $\mathbb{F}_{q^k}$ that fixes $\mathbb{F}_{q}$ pointwise.
The Frobenius automorphism also induces  a collineation of $\mathbb{A}^r(\overline{\mathbb{F}_q})$ and an automorphism of $\overline{\mathbb{F}_{q}}[X_1,\dots,X_r].$ 

From now on, with a slight abuse of notation, we will write $\mathcal{V}\subset \mathbb{A}^r(\mathbb{F}_q)$ or $\mathcal{V}\subset \mathbb{P}^r(\mathbb{F}_q)$ to indicate that the variety $\mathcal{V}$ is $\mathbb{F}_q$-rational.



\section{An infinite family of maximally scattered subspaces in $\mathbb{F}_{q^5}^2$}

This section presents the paper's main result, the first known infinite family of maximally scattered subspaces.

Let start this section by showing a non-existence result of maximally $(k-1)$-scattered $q$-systems in a special case, by using their correspondence with MRD codes.



\begin{proposition}\label{Prop:condNec}
    Let $q$ be a prime power and let $k<m$.  The only maximally $(k-1)$-scattered $[k+1,k]_{q^m/q}$ systems are all equivalent to 
    $$V_k(\mX)=\{(x,x^q,\ldots,x^{q^{k-1}})\,:\, x \in \mX\},$$
    for some $\F_q$-subspace $\mX\leq \F_{q^m}$ of dimension $k+1$. In particular, if $k+2\le m$, then there are no maximally $(k-1)$-scattered $[k+1,k]_{q^m/q}$ systems.
\end{proposition}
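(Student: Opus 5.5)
The plan is to go through the code side via Theorem \ref{thm:MRD-iff-scattered} and duality. Let $U$ be a $(k-1)$-scattered $[k+1,k]_{q^m/q}$ system and $\C$ an associated code. Since $n=k+1\le m$, Theorem \ref{thm:MRD-iff-scattered} says that $\C$ is a $[k+1,k,2]_{q^m/q}$ MRD code.

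First I pass to the dual code. By Delsarte's duality theorem, the dual of an $\F_{q^m}$-linear MRD code is MRD. So $\C^\perp$, taken with respect to the standard bilinear form on $\F_{q^m}^{k+1}$, is a $[k+1,1,k+1]_{q^m/q}$ code. It is therefore generated by a single vector $u=(u_1,\ldots,u_{k+1})$ of rank weight $k+1$, i.e.\ with $u_1,\ldots,u_{k+1}$ linearly independent over $\F_q$. In other words, $\C^\perp=\C_1(u)$ is a one-dimensional Delsarte--Gabidulin code.

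Next I invoke the classical fact, due to Gabidulin, that the dual of a Delsarte--Gabidulin code is again a Delsarte--Gabidulin code. Concretely, for $u$ of rank $n=k+1$ there is $v\in\F_{q^m}^{k+1}$, also of rank $k+1$, with $\C_1(u)^\perp=\C_{k}(v)$. If I want this self-contained, I will construct $v$ directly: $v$ must satisfy $\sum_i v_i^{q^j}u_i=0$ for $j=0,\ldots,k-1$. The cleanest route is to take $v$ as a suitably $q$-powered dual basis of $\langle u_i\rangle_{\F_q}$, via the trace form restricted to the appropriate subspace. A dimension count then gives equality, since both sides have dimension $k$. This gives $\C=\C_k(v)$.

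Because associated systems are unique up to $\GL(k,q^m)$-equivalence, $U$ is equivalent to $V_k(\mX)$ with $\mX=\langle v_1,\ldots,v_{k+1}\rangle_{\F_q}$. This proves the first claim. I expect this step, producing $v$ and checking that it has full rank, to be the main technical point. However, I will treat it as a standard known result and cite it.

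For the second claim, assume $k+2\le m$. I extend $\mX$ to an $\F_q$-subspace $\mX'\le\F_{q^m}$ of dimension $k+2$. Then $V_k(\mX')$ is an associated system of the Delsarte--Gabidulin code $\C_k(v')$ of length $k+2\le m$, where $v'$ is a basis of $\mX'$. This code is MRD, so by Theorem \ref{thm:MRD-iff-scattered} $V_k(\mX')$ is $(k-1)$-scattered. It properly contains $V_k(\mX)$, because $x\mapsto(x,x^q,\ldots,x^{q^{k-1}})$ is injective and $\F_q$-linear. The scattered property is invariant under $\GL(k,q^m)$, so every $(k-1)$-scattered $[k+1,k]_{q^m/q}$ system lies properly inside a larger $(k-1)$-scattered subspace. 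Hence none is maximally scattered.
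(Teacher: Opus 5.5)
Your proposal is correct and follows the paper's proof. The paper simply cites Gabidulin's theorem that every MRD code of codimension one is a Delsarte--Gabidulin code; you unpack that fact via Delsarte duality and the fact that the dual of $\C_1(u)$ is a Delsarte--Gabidulin code, and your enlargement $V_k(\mX')$ with $\dim_{\F_q}\mX'=k+2$ plays the same role as the paper's $V_k(\F_{q^m})$.

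One side remark: your optional self-contained construction of $v$ via a trace-dual basis of $\langle u_1,\ldots,u_{k+1}\rangle_{\F_q}$ does not work in general when $k+1<m$, since the trace form restricted to that subspace can be degenerate. So citing Gabidulin's duality result, or deriving it by shortening the dual of a length-$m$ Delsarte--Gabidulin code, is the right choice.
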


\begin{proof}
    By Theorem \ref{thm:MRD-iff-scattered}, if $U$ is
    a $(k-1)$-scattered $[k+1,k]_{q^m/q}$ system, then any of its associated codes must be a $[k+1,k,2]_{q^m/q}$ MRD code $\C$. Moreover, by \eqref{eq:scattered_bound}, this implies that $k+1\le m$. It is well known that every MRD code of codimension $1$ is a Delsarte-Gabidulin code \cite{gabidulin1985theory}, which means that $U$ is equivalent to $\{(x,x^q,\ldots,x^{q^{k-1}})\,:\, x \in \mX\}$, for some $\F_q$-subspace $\mX\leq \F_{q^m}$ of dimension $k+1$.
    
    In addition, if $k+1\le m-1$, up to equivalence, $U$ is properly contained in  the $(k-1)$-scattered $[m,k]_{q^m/q}$ system $\{(x,x^q,\ldots,x^{q^{k-1}})\,:\, x \in \F_{q^m}\}$,
    which has dimension $m>k+1$. Thus, $U$ cannot be maximally $(k-1)$-scattered.
\end{proof}

We focus on the case $k=2$. The following result shows that every maximally scattered subspace in $\mathbb{F}_{q^m}^2$ is also maximum scattered, when $m\leq 4$. 

\begin{proposition}
Let $q$ be a prime power. For $m \leq 4$, any maximally scattered subspace in $\mathbb{F}_{q^m}^2$ is maximum.
\end{proposition}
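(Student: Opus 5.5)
Since $k=2$, a maximum scattered subspace of $\F_{q^m}^2$ has dimension $m$ by \eqref{eq:scattered_bound}. Theorem \ref{LowerBound} with $k=2$ says that a maximally scattered $U$ satisfies $\dim_{\F_q}(U)\ge \lceil m/2\rceil+1$. So the plan is to treat each $m\le 4$ separately. For each $m$, we list the dimensions $s$ with $\lceil m/2\rceil+1\le s<m$ and show that no maximally scattered subspace of that dimension exists.

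\textbf{Small cases.} For $m=1$ the lower bound gives $s\ge 2>m$, so every maximally scattered subspace is maximum, or rather the range of non-maximum dimensions is empty. For $m=2$ the bound gives $s\ge 2=m$, so again the range is empty. For $m=3$ the bound gives $s\ge 3=m$, and the conclusion is immediate. In each of these cases there is nothing left to prove.

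\textbf{Case $m=4$.} The bound gives $s\ge 3$, so the only case to exclude is $s=3$. Let $U$ be a scattered $3$-dimensional subspace of $\F_{q^4}^2$.

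First suppose $U$ is not a $q$-system, i.e.\ $U$ lies in an $\F_{q^m}$-line. Then $\dim(U\cap \langle U\rangle)=3>1$, which contradicts scatteredness. So $U$ spans $\F_{q^4}^2$, and $U$ is a $[3,2]_{q^4/q}$ system which is $1$-scattered.

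Now apply Proposition \ref{Prop:condNec} with $k=2$ and $m=4$. Since $k+2=4\le m$, there are no maximally scattered $[3,2]_{q^4/q}$ systems. Explicitly, $U$ is equivalent to $V_2(\mX)$ with $\dim\mX=3$, and $V_2(\mX)$ is contained in the scattered $4$-dimensional subspace $V_2(\F_{q^4})$. Hence $U$ is not maximally scattered.

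\textbf{Main point to check.} The only step needing care is verifying that the hypotheses of Proposition \ref{Prop:condNec} are met, namely $k<m$ and that $U$ is a genuine $q$-system. Both were verified above. The rest is arithmetic with the bound of Theorem \ref{LowerBound}.
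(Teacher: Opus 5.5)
Your proposal is correct and is essentially the paper's proof: Theorem \ref{LowerBound} with $k=2$ settles $m\le 3$, and for $m=4$ the only non-maximum dimension left is $3$, which Proposition \ref{Prop:condNec} (with $k=2$ and $k+2\le m$) rules out. Your explicit check that a $3$-dimensional scattered subspace must span $\F_{q^4}^2$, so that it really is a $[3,2]_{q^4/q}$ system, is a detail the paper leaves implicit.
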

\begin{proof}
By Theorem \ref{LowerBound}, if $m\leq 3$ then a maximally scattered subspace has dimension at least $m$, which is the dimension of a maximum scattered subspace. 

Let us consider the case $m=4$. By Theorem \ref{LowerBound}, a maximally scattered subspace in $\mathbb{F}_{q^4}^2$ has dimension at least $3$. By Proposition \ref{Prop:condNec} a maximally $1$-scattered subspace with these parameters cannot exist.
\end{proof}

Thus, for our purposes, we focus on the first open case, that is $\mathbb{F}_{q^5}^2$. By Theorem \ref{LowerBound}, a maximally scattered subspace in $\mathbb{F}_{q^5}^2$ has dimension 4. 

Let $\NN_{q^5/q}$ and $\Tr_{q^5/q}$ denote the norm and the trace functions from $\F_{q^5}$ to $\F_q$, respectively.

Our investigation exploits the partial classification of maximum scattered subspaces in \cite{Longobardi2025classificationmaximumscatteredlinear}. 

\begin{thm}[{see \cite[Section 6]{Longobardi2025classificationmaximumscatteredlinear}}]\label{Th:Classification}
Any maximum scattered subspace in $\mathbb{F}_{q^5}^2$
is up to equivalence in $\Gamma \mathrm{L}(2,q^5)$ one of the following
   \begin{enumerate}
       \item[(C1)]  \qquad $\PR_s:=\{(x,x^{q^s}) :x\in\mathbb{F}_{q^5}\}$, $s\in \{1,\ldots,4\}$;

\item[(C2)] \qquad $\LP_{s,\eta}:=\{(x,x^{q^s}+\eta x^{q^{5-s}}):x\in\mathbb{F}_{q^5}\}$, $s\in\{1,2\}$, 
$\NN_{q^5/q}(\eta)\ne0,1$;

\item[(C3)] \qquad $\WW_{\eta,\rho}:=\{(\eta(x^q-x)+\Tr_{q^5/q}(\rho x),x^q-x^{q^4}):x\in\mathbb{F}_{q^5}\}$,
$\eta\ne0$, $\Tr_{q^5/q}(\eta)=0\ne \Tr_{q^5/q}(\rho)$;

\item[(C4)] \qquad $\ZZ_k := \{(x,k(x^q+x^{q^3})+x^{q^2}+x^{q^4}) :x\in\mathbb{F}_{q^5}\}$, 
$\NN_{q^5/q}(k)=1$.
\end{enumerate}
The classes of sets of types (C3) and (C4) might be empty, as they actually are for $q\le25$.

\end{thm}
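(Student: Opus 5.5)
The plan is to reduce the classification to a problem about scattered linearized polynomials, and then exhaust the possible coefficient patterns using algebraic-geometric tools. Since this theorem is quoted from \cite{Longobardi2025classificationmaximumscatteredlinear}, what follows reconstructs the likely strategy rather than a proof verified against that source.

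\textbf{Step 1: reduction to a polynomial.} Let $U\le\F_{q^5}^2$ be maximum scattered, so $\dim_{\F_q}U=5$. Each nonzero vector of $U$ lies on exactly one $\F_{q^5}$-line, and each line meets $U$ in dimension at most $1$. Hence $U$ meets exactly $(q^5-1)/(q-1)<q^5+1$ lines, and some line meets $U$ trivially. Applying an element of $\GL(2,q^5)$ that sends this line to $\langle(0,1)\rangle$, the projection of $U$ onto the first coordinate becomes bijective. Therefore $U=\{(x,f(x)) : x\in\F_{q^5}\}$ with
$$f(x)=\sum_{i=0}^4 a_ix^{q^i}.$$
Scatteredness of $U$ is equivalent to $\dim_{\F_q}\ker(f(x)-yx)\le 1$ for every $y\in\F_{q^5}$. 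By the Dickson-matrix criterion of the preliminaries, this means $D(f-yx)$ has rank at least $4$, so some $4\times 4$ minor is nonzero. Equivalently, the only $\F_{q^5}$-solutions of
$$\frac{f(x)z-f(z)x}{x^qz-xz^q}=0$$
lie on the $\F_q$-lines $z\in\F_q x$.

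\textbf{Step 2: normalisation under $\Gamma\mathrm{L}(2,q^5)$.} The stabiliser of the graph form acts on $f$ by $x\mapsto\lambda x$, $f\mapsto\mu f+\nu x$ and Frobenius twists. This lets me assume $a_0=0$ and rescale one further coefficient. Taking adjoints preserves scatteredness, so I may also swap the roles of $a_i$ and $a_{5-i}$. Then I split into cases according to the support of $(a_1,\dots,a_4)$.
\begin{itemize}
\item A monomial support gives the pseudoregulus type (C1).
\item The supports $\{s,5-s\}$ give the Lunardon–Polverino type (C2). Here scatteredness forces $\NN_{q^5/q}(\eta)\ne 1$.
\item The remaining supports, with three or four nonzero coefficients, must be reduced, after a suitable change of basis of $\F_{q^5}^2$ (not necessarily a graph-preserving one), to the forms $\WW_{\eta,\rho}$ and $\ZZ_k$. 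Alternatively, they must be shown not to be scattered.
\end{itemize}

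\textbf{Step 3: ruling out non-scattered cases via Lang–Weil.} For a given coefficient pattern, I write $x,z$ in coordinates over an $\F_q$-basis of $\F_{q^5}$. This turns the equation of Step 1 into an $\F_q$-rational variety $\mathcal V$ of bounded degree, with $\mathcal V\subset\mathbb P^{9}(\F_q)$ after removing the trivial component corresponding to $z\in\F_q x$. The goal is to show that $\mathcal V$ has an absolutely irreducible $\F_q$-rational component not contained in the trivial locus. Lang–Weil then supplies nontrivial $\F_q$-points once $q$ is large, which contradicts scatteredness. For the finitely many small $q$, a computer search would be used instead; this is also how one checks that (C3) and (C4) are empty for $q\le 25$. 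Where a full $4\times4$ minor analysis is cleaner, I would instead use the Dickson-minor formulation directly. That is, I would compute the determinant of $D(f-yx)$ and the relevant $4\times4$ minors as polynomials in $y$, and demand that they have no common root $y\in\F_{q^5}$.

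\textbf{Main obstacle.} I expect the hardest part to be the generic cases with three or four nonzero coefficients. Proving absolute irreducibility of the relevant components there is delicate: one needs a careful analysis of singular points, or of the tangent cones at the points at infinity, to exclude splittings into conjugate components. The other difficulty is identifying the exceptional coefficient loci where irreducibility fails, and showing that these loci are exactly the families $\WW_{\eta,\rho}$ and $\ZZ_k$ up to $\Gamma\mathrm{L}(2,q^5)$. For this step I would first work with the norm and trace conditions on the coefficients and try to factor $\mathcal V$ explicitly over $\F_{q^5}$.
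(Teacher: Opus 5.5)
The paper gives no proof of this statement; it is imported from \cite[Section 6]{Longobardi2025classificationmaximumscatteredlinear}. Your proposal therefore has to stand on its own, and as written it is a strategy whose decisive step is missing. Steps 1 and 2 are standard. The content of the theorem, however, is that every maximum scattered subspace not of type (C1) or (C2) is $\Gamma\mathrm{L}(2,q^5)$-equivalent to some $\WW_{\eta,\rho}$ or $\ZZ_k$ with exactly the stated parameter conditions, without deciding whether these are scattered. You defer all of this to your ``main obstacle'' and never carry it out.

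The case split by the support of $(a_1,\dots,a_4)$ also cannot serve as a classification scheme, because the support is not a $\Gamma\mathrm{L}(2,q^5)$-invariant. For odd $q$, apply $\begin{pmatrix}1&1\\0&1\end{pmatrix}$ to $\PR_1$. Since $x\mapsto x+x^q$ is invertible with inverse $\tfrac12\sum_{i=0}^4(-1)^ix^{q^i}$, the image is the graph of $\tfrac12(x+x^q-x^{q^2}+x^{q^3}-x^{q^4})$. After subtracting the $x$-term, this polynomial has four nonzero coefficients. So your third bullet is false as stated: those supports also contain members of (C1) and (C2), and separating the classes needs genuine equivalence invariants, which you do not provide. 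The split also omits the binomial supports $\{1,2\}$, $\{1,3\}$ and their adjoints $\{3,4\}$, $\{2,4\}$.

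Step 3 has two further problems.
\begin{enumerate}
\item The five $\F_q$-quadrics in $\mathbb{P}^9$ vanish on the whole $5$-dimensional locus of pairs $(x,z)$ of rank at most $1$. This exceeds the expected dimension $4$, so ``removing the trivial component'' is not a controlled operation.
\item Lang--Weil only applies beyond an unspecified threshold in $q$. Below it you would need an exhaustive computer classification of all maximum scattered subspaces of $\F_{q^5}^2$, which is not feasible. The ``$q\le 25$'' remark in the statement concerns only the two explicit families (C3) and (C4).
\end{enumerate}

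One setting that avoids both issues is the following. Write $x=\sum x_ie_i$ and $z=\sum z_ie_i$ in an $\F_q$-basis $e_0,\dots,e_4$ of $\F_{q^5}$. Then $f(x)z-f(z)x=\sum_{i<j}(x_iz_j-x_jz_i)\bigl(f(e_i)e_j-f(e_j)e_i\bigr)$, so the condition is linear in the Pl\"ucker coordinates. Hence $U_f$ is scattered if and only if the $\F_q$-rational section $C=G(2,5)\cap\Lambda$ has no $\F_q$-point, where $\Lambda$ is the projectivised kernel.

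When $\Lambda$ has codimension $5$ and $C$ is a curve, $C$ has degree $5$ and arithmetic genus $1$. Every Frobenius-fixed absolutely irreducible component then has geometric genus at most $1$, by Castelnuovo's bound and the fact that $G(2,5)$ is cut out by quadrics. By Hasse--Weil such a component has an $\F_q$-point for every $q$. Scatteredness therefore forces Frobenius to permute the components of $C$ with no fixed component; for $\PR_1$ one gets a pentagon of conjugate lines. A proof of the theorem needs structural input of this kind, such as classifying these degenerate sections and the non-proper ones; the coefficient-support analysis does not supply it.
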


The purpose of the rest of this section is to provide the first infinite family of maximally scattered subspaces in $\mathbb{F}_{q^5}^2$ which are not maximum scattered. 

Currently, the family is proven to be maximally scattered only for finite fields of order $q=3^{2h+1}$ (where $h \geq 1$) and for specific choices of the parameter $\delta$. The proof relies on the partial classification of maximum scattered subspaces in $\mathbb{F}_{q^5}^2$ from \cite{Longobardi2025classificationmaximumscatteredlinear}. While the formal proof is restricted to this case, our experimental results suggest that the construction holds in any characteristic.

Notably, generalizing this result primarily depends on proving that the families designated as (C3) and (C4) in the aforementioned classification are empty.
\medskip

Let $q$ be an odd prime power and let $\mX \le \mathbb{F}_{q^5}$ be an $\mathbb{F}_q$-hyperplane. Every such hyperplane can be described as $\mX_\lambda := \{ x \in \mathbb{F}_{q^5} : \Tr_{q^5/q}(\lambda x) = 0 \}$ for some $\lambda \in \mathbb{F}_{q^5}^*$. 

Consider $\delta \in \mathbb{F}_{q^5}$ satisfying $\NN_{q^5/q}(\delta)=1$. We define the corresponding family of subspaces as
$$U_{\delta}(\mX_\lambda) := \{(x,x^{q}+\delta x^{q^4}) : x \in \mX_\lambda\} \leq \mathbb{F}_{q^5}^2.$$

\begin{proposition}\label{U:equiv}
    The subspace $U_\delta(\mX_\lambda)$ is equivalent via the diagonal matrix $D = \mathrm{diag}(\lambda, \lambda^q) \in \mathrm{GL}(2,q^5)$ to the subspace 
    $$U_{\delta'}(\mX_1)= \{(y,y^q+\delta' y^{q^4}) : \Tr_{q^5/q}(y)=0\},$$
    where $\delta' = \delta \lambda^{q-q^4}$. Furthermore, $\NN_{q^5/q}(\delta') = \NN_{q^5/q}(\delta)=1$.
\end{proposition}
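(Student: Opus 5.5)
Apply $D$ to a generic vector of $U_\delta(\mX_\lambda)$ and substitute $y=\lambda x$. For $x\in\mX_\lambda$ we have
$$D\begin{pmatrix}x\\ x^q+\delta x^{q^4}\end{pmatrix}=\begin{pmatrix}\lambda x\\ \lambda^q x^q+\lambda^q\delta x^{q^4}\end{pmatrix}.$$
With $y=\lambda x$, the first coordinate is $y$. In the second coordinate, $\lambda^q x^q=(\lambda x)^q=y^q$. For the last term, write $x^{q^4}=(y/\lambda)^{q^4}=y^{q^4}\lambda^{-q^4}$, so that
$$\lambda^q\delta x^{q^4}=\delta\lambda^{q-q^4}y^{q^4}=\delta' y^{q^4}.$$
Hence the image vector is $(y,\,y^q+\delta' y^{q^4})$.

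Next, the map $x\mapsto \lambda x$ is an $\F_q$-linear bijection of $\F_{q^5}$. It sends $\mX_\lambda=\{x:\Tr_{q^5/q}(\lambda x)=0\}$ onto $\{y:\Tr_{q^5/q}(y)=0\}=\mX_1$. Therefore $D\cdot U_\delta(\mX_\lambda)=U_{\delta'}(\mX_1)$. Since $D$ is invertible ($\lambda\neq0$) and lies in $\GL(2,q^5)$, the two subspaces are equivalent.

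For the norm, use multiplicativity of $\NN_{q^5/q}$:
$$\NN_{q^5/q}(\delta')=\NN_{q^5/q}(\delta)\,\NN_{q^5/q}(\lambda)^{q}\,\NN_{q^5/q}(\lambda)^{-q^4}.$$
The norm of $\lambda$ lies in $\F_q^*$, so raising it to any power of $q$ fixes it. Equivalently, $\NN_{q^5/q}(\lambda^{q^i})=\NN_{q^5/q}(\lambda)$. The last two factors therefore cancel, giving $\NN_{q^5/q}(\delta')=\NN_{q^5/q}(\delta)=1$.

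The only point needing care is the exponent bookkeeping in the second coordinate, namely checking that $\lambda^{-q^4}$ is the correct factor when reducing modulo $x^{q^5}-x$. Everything else is routine.
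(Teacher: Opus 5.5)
Your proof is correct and follows essentially the same steps as the paper's. You apply $D$ and substitute $y=\lambda x$, which carries both the trace condition and the second coordinate over to $U_{\delta'}(\mX_1)$. You then use multiplicativity of the norm together with $\NN_{q^5/q}(\lambda)\in\F_q^*$ to get $\NN_{q^5/q}(\lambda)^{q-q^4}=1$. Your closing remark about reducing modulo $x^{q^5}-x$ is unnecessary, since $x^{q^4}=\lambda^{-q^4}y^{q^4}$ is simply an identity of field elements.
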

\begin{proof}

Applying the diagonal matrix $D = \mathrm{diag}(\lambda, \lambda^q)$ to a generic element of $U_{\delta}(\mX_\lambda)$, we obtain
$$D \begin{pmatrix} x \\ x^q + \delta x^{q^4} \end{pmatrix} = \begin{pmatrix} \lambda x \\ \lambda^q x^{q} + \lambda^q \delta x^{q^4} \end{pmatrix}.$$
By setting $y = \lambda x$, the condition $x \in \mX_\lambda$ becomes $y \in \mX_1$, that is, $\Tr_{q^5/q}(y) = 0$. The second coordinate translates to $y^q + \lambda^q \delta (\lambda^{-q^4} y^{q^4}) = y^q + \delta' y^{q^4}$.

This shows $D(U_{\delta}(\mX_\lambda)) = U_{\delta'}(\mX_1)$.
Since $\NN_{q^5/q}(\lambda) \in \mathbb{F}_q$, we have $\NN_{q^5/q}(\lambda)^{q-q^4} = 1$. Therefore, we have $\NN_{q^5/q}(\delta') = \NN_{q^5/q}(\delta) \NN_{q^5/q}(\lambda)^{q-q^4} = \NN_{q^5/q}(\delta)=1$.

\end{proof}

For this reason, up to equivalence via $\mathrm{diag}(\lambda, \lambda^q)$, we can restrict our analysis to the specific hyperplane $\mX_1 = \ker(\Tr_{q^5/q})$. From now on, we simply write
$$U_{\delta} := \{(x,x^{q}+\delta x^{q^4}) : \Tr_{q^5/q}(x)=0\}, \quad \NN_{q^5/q}(\delta)=1.$$

We first classify those $\delta$ such that $U_\delta$ is scattered, and to do this the following proposition will be helpful.

\begin{proposition}\label{incomp}
    There is no $\delta\in\F_{q^5}$ such that $\NN_{q^5/q}(\delta)=1$ and $\delta^{q^2+1}-\delta+1=0$.
\end{proposition}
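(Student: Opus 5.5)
The plan is to turn the equation into a fractional linear recursion under the Frobenius map. Since $\NN_{q^5/q}(\delta)=1$, we have $\delta\neq 0$, so the relation $\delta^{q^2+1}-\delta+1=0$ can be rewritten as
$$\delta^{q^2}=\frac{\delta-1}{\delta}=f(\delta),\qquad f(z):=\frac{z-1}{z},$$
where $f$ is the M\"obius transformation with matrix $M=\begin{pmatrix}1&-1\\1&0\end{pmatrix}$.

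The map $x\mapsto x^{q^2}$ commutes with $f$, because $f$ has coefficients in $\F_q$. Iterating therefore gives $\delta^{q^{2i}}=f^{i}(\delta)$. To keep this valid at every step, I will regard $f$ as acting on $\mathbb{P}^1(\F_{q^5})$ and note that each iterate is again a nonzero element of $\F_{q^5}$. Indeed, $f^{i}(\delta)$ is a conjugate of $\delta$, which lies in $\F_{q^5}^*$, so the point $\infty$ is never reached.

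Now $M$ has characteristic polynomial $x^2-x+1$. Hence $M^3=-I$, so $f^3=\mathrm{id}$ as a M\"obius map. Taking $i=5$ and using $\delta^{q^{10}}=\delta$, we get
$$\delta=f^5(\delta)=f^2(\delta)=f^{-1}(\delta).$$
This forces $f(\delta)=\delta$, that is, $\delta^{q^2}=\delta$. Since $\gcd(2,5)=1$, the automorphism $x\mapsto x^{q^2}$ generates $\mathrm{Gal}(\F_{q^5}/\F_q)$, and so $\delta\in\F_q$. Substituting $\delta^{q^2}=\delta$ back into the original equation gives $\delta^2-\delta+1=0$.

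Finally I combine this with the norm condition. As $\delta\in\F_q$, we have $\NN_{q^5/q}(\delta)=\delta^5=1$. From $(\delta+1)(\delta^2-\delta+1)=0$ we get $\delta^3=-1$. Then $\delta^5=-\delta^2=1$ gives $\delta^2=-1$, hence $\delta^6=-1$. On the other hand $\delta^6=(\delta^3)^2=1$. So $2=0$, which is impossible for $q$ odd, the standing assumption in this section.

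For completeness, even in characteristic $2$ the conditions $\delta^2=1$ and $\delta^5=1$ force $\delta=1$, and $1-1+1\neq 0$. So no such $\delta$ exists in any characteristic.

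I expect the only delicate point to be the rigorous justification of the iteration $\delta^{q^{2i}}=f^i(\delta)$ together with the order computation $f^3=\mathrm{id}$. A cleaner alternative is to check directly that $f(f(f(z)))=z$ as rational functions, and then apply this with $z=\delta$, using that every intermediate value $\delta^{q^{2i}}$ is nonzero. The rest is routine.
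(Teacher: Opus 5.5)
Your argument is correct, but it takes a different route from the paper's proof of this proposition.

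\textbf{The paper's route.} The paper uses the norm condition from the start. It writes $\delta^{q^4}=\delta^{-1-q-q^2-q^3}$ and equates this with the conjugate relation $\delta^{q^4}=(\delta^{q^2}-1)/\delta^{q^2}$. Clearing denominators gives $1+\delta^{1+q+q^3}(1-\delta^{q^2})=0$. Substituting $\delta^{q^2}=(\delta-1)/\delta$ and $\delta^{q^3}=(\delta^q-1)/\delta^q$ then collapses everything to $\delta^q=0$. This is a quick elimination that works in every characteristic, but it is specific to $m=5$ and does not show the structure behind the contradiction.

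\textbf{Your route.} You separate the two hypotheses. The order-$3$ M\"obius map $f(z)=(z-1)/z$, together with $\delta\in\F_{q^5}$, already forces $\delta\in\F_q$ with $\delta^2-\delta+1=0$. The norm enters only at the end, as $\delta^5=1$.

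\textbf{What your route buys.} It proves more: the solutions of $\delta^{q^2+1}-\delta+1=0$ in $\F_{q^5}$ are exactly the roots of $x^2-x+1$ lying in $\F_q$. The same reasoning gives the analogous statement over $\F_{q^m}$ whenever $\gcd(m,6)=1$. It is essentially the argument the paper itself uses later, in item 2 of the case $s=2$ in the proof of Proposition~\ref{Prop:C2}.

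\textbf{Two simplifications.}
\begin{enumerate}
\item $\delta^{q^6}=f^3(\delta)=\delta$ and $\delta^{q^6}=\delta^q$ give $\delta\in\F_q$ in one step, without passing through $f^5=f^{-1}$.
\item At the end, $\delta^3=-1$ gives $\delta^6=1$. Together with $\delta^5=1$ this forces $\delta=1$, which contradicts $1-1+1\neq 0$ in every characteristic. So the separate treatment of characteristic $2$ is unnecessary.
\end{enumerate}
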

\begin{proof}
    Consider $\delta^{q^2}=(\delta-1)/\delta$, and its $q^i$-th power with $i=1,2$
    $$\begin{cases}
        \delta^{q^2}=(\delta-1)/\delta\\
        \delta^{q^3}=(\delta^q-1)/\delta^q\\
        \delta^{q^4}=(\delta^{q^2}-1)/\delta^{q^2}.
    \end{cases}$$
    Using $\NN_{q^5/q}(\delta)=1$ the last equation reads $\delta^{-1-q-q^2-q^3}=(\delta^{q^2}-1)/\delta^{q^2}$ that is equivalent to $1+\delta^{q^3}\delta^q\delta(-\delta^{q^2}+1)=0$ and substituting the information we have on $\delta^{q^2},\delta^{q^3}$ by the first and second equation above we obtain 
$$0=1+\frac{\delta^q-1}{\delta^q}\delta^q\delta\left(-\frac{\delta-1}{\delta}+1\right)=1+(\delta^q-1)(-\delta+1+\delta)=\delta^q,$$
a contradiction to $\NN_{q^5/q}(\delta)=1$.
\end{proof}
Let $f_\delta(Y)=Y^2+(\delta^{q^3+q^2+q+1} - \delta^{q^2+q+1} +\delta^{q+1} -\delta - 1)Y-\delta^{q^3+q^2+q+2} -\delta^{q+1}+\delta\in \mathbb{F}_{q^5}[Y]$.

\begin{proposition}\label{Prop:Cond}
 The subspace  $U_{\delta}$ is a scattered subspace of dimension $4$ if and only if $f_\delta(Y)$ has no roots in $\F_{q^5}$.
\end{proposition}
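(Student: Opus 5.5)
The plan is to turn scatteredness into a rank condition on a family of linearized polynomials, and then eliminate the parameter. The map $x\mapsto (x,x^q+\delta x^{q^4})$ is injective, so $\dim_{\F_q}U_\delta=4$ automatically, and only scatteredness has to be characterized.

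By definition, $U_\delta$ fails to be scattered exactly when some $\F_{q^5}$-line $\langle(1,\rho)\rangle_{\F_{q^5}}$ meets it in dimension at least $2$. Equivalently, there is a $\rho\in\F_{q^5}$ with
\[
\dim_{\F_q}\bigl(\ker(x^q-\rho x+\delta x^{q^4})\cap \ker \Tr_{q^5/q}\bigr)\ge 2 .
\]
To remove the trace constraint, parametrize $\ker\Tr_{q^5/q}$ as $\{z^q-z : z\in\F_{q^5}\}$; the map $z\mapsto z^q-z$ has kernel $\F_q$. Substituting gives
\[
g_\rho(z)=z^{q^2}-(1+\rho)z^q+(\rho+\delta)z-\delta z^{q^4}\in\tilde{\mathcal L}_{5,q}.
\]
Since $g_\rho$ always vanishes on $\F_q$, $U_\delta$ is not scattered if and only if $\dim\ker g_\rho\ge 3$ for some $\rho$. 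That is, $\rank g_\rho\le 2$, which says the $5\times5$ Dickson matrix
\[
D(g_\rho), \qquad (a_0,a_1,a_2,a_3,a_4)=(\rho+\delta,\,-1-\rho,\,1,\,0,\,-\delta),
\]
has all $3\times 3$ minors equal to zero.

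Next I would exploit the structure of this matrix. The entry $a_2=1$ together with $a_3=0$ gives convenient pivots. Taking the rows corresponding to the Frobenius twists of $g_\rho$ that contain these unit entries, I would row-reduce so that the rank-$\le2$ condition becomes a small system of polynomial equations in $\rho,\rho^q,\dots,\rho^{q^4}$ and the conjugates of $\delta$. The rows of $D(g_\rho)$ are the $q^i$-powers of one another, so each vanishing minor produces further relations by applying Frobenius. I would use these to express $\rho^{q}$, and then the higher conjugates of $\rho$, as rational functions of $\rho$ and the conjugates of $\delta$. Imposing consistency, namely $\rho^{q^5}=\rho$, and simplifying with $\NN_{q^5/q}(\delta)=1$, should reduce everything to a single quadratic condition. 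Its unknown $Y$ is a suitable expression in $\rho$, and I expect that condition to be exactly $f_\delta(Y)=0$.

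For the converse, starting from a root $Y\in\F_{q^5}$ of $f_\delta$, I would reconstruct $\rho$ and check that all minors vanish. This gives a $3$-dimensional kernel of $g_\rho$, hence a $2$-dimensional intersection of $U_\delta$ with $\langle(1,\rho)\rangle_{\F_{q^5}}$, so $U_\delta$ is not scattered.

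The main obstacle is the elimination step. The row reduction requires dividing by certain pivot expressions, and the degenerate case where a pivot vanishes must be handled separately. I expect that degenerate case to be governed by the equation $\delta^{q^2+1}-\delta+1=0$ or one of its conjugates, which is excluded by Proposition \ref{incomp}; this is presumably why that proposition is placed just before the statement. I would therefore first identify the pivot denominators explicitly, dispose of them with Proposition \ref{incomp}, and only then carry out the generic elimination. The verification that the resulting quadratic coincides with $f_\delta$, using $\NN_{q^5/q}(\delta)=1$ to replace $\delta^{q^4}$ by $\delta^{-1-q-q^2-q^3}$, is lengthy but routine.
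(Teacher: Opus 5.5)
Your reduction is correct, and it is a tidy variant of the paper's. The paper substitutes $x^{q^4}=-x-x^q-x^{q^2}-x^{q^3}$ and works with a $5\times 4$ matrix. You instead pull back along $z\mapsto z^q-z$ and get a genuine Dickson matrix, so the standard rank fact applies directly.

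In $D(g_\rho)$ the minor on the second and third rows and the last two columns is identically $1$. Hence $\rank g_\rho\le 2$ iff the nine $3\times 3$ minors bordering it vanish. The one on the first three rows and last three columns equals $(\delta^{q^2+1}-\delta+1)-\delta\rho^q(1+\rho^{q^2})$. It gives $\rho^{q^2}$ as a M\"obius function of $\rho^q$, and at $\rho=0$ it equals $\delta^{q^2+1}-\delta+1\neq 0$ by Proposition~\ref{incomp}. This confirms your guess about the degenerate case. The unknown of the quadratic is simply $Y=-\rho$ (the paper's $m$).

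One caution: imposing $\rho^{q^5}=\rho$ on a single such relation need not produce a quadratic. For $\delta=1$ in characteristic $5$ the relation reads $\rho^{q^4}(1+\rho)=1$. The fivefold iterate of $Y\mapsto 1/(1+Y)$ is then the identity, and all $q+1$ roots of $\rho^{q+1}+\rho-1$ lie in $\F_{q^5}$. Yet only $\rho=2$ gives $\rank g_\rho\le 2$. So the other bordering minors must genuinely be used, as the paper does.

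The genuine gap is the converse. The minors are polynomials in $\rho,\rho^q,\dots,\rho^{q^4}$. The rational expressions for these conjugates, which make the minors collapse onto multiples of $f_\delta(-\rho)$, were obtained by \emph{assuming} that the pivot minor vanishes. From $\rho\in\F_{q^5}$ and $f_\delta(-\rho)=0$ alone you get no such relation. Indeed, $f_\delta$ has coefficients in $\F_{q^5}$, not $\F_q$, so Frobenius sends its roots to those of $f_{\delta^q}$. Nothing in your plan shows that a root satisfies the M\"obius relation. So ``reconstruct $\rho$ and check that all minors vanish'' is not a computation you can actually perform. You must first prove that every root $m\in\F_{q^5}$ of $f_\delta$ satisfies $m^q(m-\delta)=\delta^q(\delta^{q^2+1}-\delta+1)$, i.e.\ the paper's $g_3^2(m)=0$. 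The paper's sketch also uses this in the converse without deriving it.

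To see that this is exactly the missing piece, note that $\dim_{\F_q}(\ker(mx+x^q+\delta x^{q^4})\cap\ker\Tr_{q^5/q})=2$ iff $\Tr_{q^5/q}(x)=Q(x^{q^2}+m^qx^q+\delta^qx)$ for some $q$-polynomial $Q$. Comparing coefficients, this is equivalent to the M\"obius relation above together with $m^{q^2}=1-\delta^{-q}+\delta^{-1-q}m$. The condition $f_\delta(m)=0$ is precisely the compatibility of these two: compute $m^{q^2}$ from the M\"obius relation and its conjugate, and equate. This gives the forward direction without any minors. It also shows that $f_\delta(m)=0$ yields the affine relation once the M\"obius one is known, but not the M\"obius relation itself. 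Supplying that step needs an extra idea. One possible route is to show that over a quadratic extension of the base field both roots of $f_\delta$ are realized by $2$-dimensional intersections, and then descend to $\F_q$.
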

\begin{proof}
    To prove that $U_\delta$ is scattered we need to prove that $\rank(mx+x^q+\delta x^{q^4})\geq 4$ for every $m\in\F_{q^5}$.

Let $g_m(x)=mx+x^q+\delta x^{q^4}=mx+x^q+\delta(-x-x^q-x^{q^2}-x^{q^3})$.
Consider the system

$$ g_m(x)=(g_m(x))^q=(g_m(x))^{q^2}=(g_m(x))^{q^3}=(g_m(x))^{q^4}=0. $$

It can be written as 
$$\begin{pmatrix}
    m - \delta &   -\delta + 1 &  -\delta & -\delta\\
    \delta^q &  m^q &  1 &  0\\
    0 &  \delta^{q^2} &  m^{q^2} &  1\\
    -1  & -1 &  \delta^{q^3} - 1 &  m^{q^3} - 1\\
    -m^{q^4} + 1 &  -m^{q^4}  & -m^{q^4}   &-m^{q^4} +\delta^{-q^3-q^2-q-1}
\end{pmatrix}\begin{pmatrix}
    x\\ x^q\\ x^{q^2}\\ x^{q^3}
\end{pmatrix}=M\begin{pmatrix}
    x\\ x^q\\ x^{q^2}\\ x^{q^3}
\end{pmatrix}=\begin{pmatrix}
    0\\ 0\\ 0\\ 0
\end{pmatrix},$$
and we have that $\dim_{\F_q}(\{g_m(x) : x \in \mathbb{F}_{q^5}\})=4-\text{rank}(M)$.

We want to prove that $\text{rank}(M)\geq3$ if and only if $f_\delta(Y)$ has no solutions in $\F_{q^5}$. We will indicate with $M^{c_1,c_2,\dots,c_t}_{r_1,r_2,\dots,r_s}$ the submatrix of $M$ chosen selecting the columns $\{c_1,c_2,\dots,c_t\}$ and the rows $\{r_1,r_2,\dots,r_s\}$.

Notice that $\det(M_{1,2}^{1,4})=\delta^{q+1}\neq0$. We have that $\text{rank}(M)=2$ if and only if $\det{(M_{1,2,i}^{1,4,j})}=0$ for every $i=3,4,5$ and $j=2,3$. 

Let $g_{i}^j(m):=\det (M_{1,2,i}^{1,4,j})$ and consider the system $E_{i,j,\ell}:(g_{i}^j(m))^{q^\ell}=0$ with $i=3,4,5$, $j=2,3$ and $\ell=0,1,2,3,4$. 
\begin{comment}$$\begin{cases}
    m^{q^4+q+1}\delta^{q^3+q^2+q+1}-m^{q+1} -m^q\delta^{q^3+q^2+q+2}  + m^q\delta -m^{q^4} \delta^{q^3+q^2+2q+1}-\delta^{q+1} + \delta^q=0\\
    -m^{q^3+1} + m - m^{q^3}\delta^{q+1} + m^{q^3}\delta + \delta^{q^3+q+1}=0\\
    -m^{q^3+q+1} + m^{q+1} + m^{q^3+q}\delta - m^{q^3}\delta^{q+1} + m^{q^3}\delta^q - \delta^q=0\\
    -m + m^{q^2}\delta^{q+1} - \delta^{q+1} + \delta=0\\
    m^{q^4+1}\delta^{q^3+q^2+q+1}- m - \delta^{q^3+q^2+q+2} - \delta^{q+1} + \delta=0\\
    -m^{q+1} + m^q\delta + \delta^{q^2+q+1} - \delta^{q+1} + \delta^q=0
    
\end{cases}$$
\end{comment}

In particular, $g_3^2(m)=m^q(\delta-m)+\delta^q(\delta^{q^2+1}-\delta+1)$. 
First notice that for $m=\delta $, $\text{rank}(M)\geq 3$, since  $g_3^2(\delta)= 0$  is incompatible with $\NN_{q^5/q}(\delta)=1$ by Proposition \ref{incomp}. 
\begin{itemize}
    \item Suppose that $f_\delta(m)=0$ for some $m\in \mathbb{F}_{q^5}$. Such an $m\in \mathbb{F}_{q^5}$ is not $\delta$,  since  $f_\delta(\delta)=-\delta^{q+1}(\delta^{q^2+1}-\delta+1)=0$ is incompatible with $\NN_{q^5/q}(\delta)=1$ by Proposition \ref{incomp}. Consider now $\delta\neq m$. From $g_3^2(m)=0$ we obtain $m^q=s_1(m)$, for some $s_1(Y)\in \mathbb{F}_{q^5}(Y)$. Therefore, $m^{q^2}=(s_1(m))^q$ and using $m^q=s_1(m)$ we will obtain $m^{q^2}=s_2(m)$, for some $s_2(Y)\in \mathbb{F}_{q^5}(Y)$, and analogously $m^{q^3}=s_3(m)$, $m^{q^4}=s_4(m)$, for some $s_3(Y),s_4(Y)\in \mathbb{F}_{q^5}(Y)$. Substituting these in the system we obtain that every equation is of the type $E_{i,j,\ell}:f_\delta(m)h_{i,j,\ell}(m)=0$, for some polynomial $h_{i,j,\ell}(Y)\in \mathbb{F}_{q^5}[Y]$. This shows that if $f_\delta(m)=0$ then  $\det(M_{1,2,j}^{1,4,i})=0$ for each $i=3,4,5$ and $j=2,3$. 
    \item Suppose now that $\det(M_{1,2,j}^{1,4,i})=0$ for each $i=3,4,5$ and $j=2,3$ for some $m\in \mathbb{F}_{q^5}$.  One of the polynomials $h_{i,j,\ell}(m)$ is $\delta^q$ and thus if $\det(M_{1,2,j}^{1,4,i})=0$ for each $i=3,4,5$ and $j=2,3$ then $f_\delta(m)=0$.  
\end{itemize}

To sum up, we have that the system has a solution in $\F_{q^5}$ if and only if $f_\delta(Y)$ has a root in $\F_{q^5}$. 
\end{proof}

In the following propositions, we prove that for specific choices of $\delta$, the subspace $U_\delta$ is not $\Gamma \mathrm{L}(2,q^5)$-equivalent to any of the maximum scattered subspaces from the families (C1)--(C4) listed in Theorem~\ref{Th:Classification}. In particular, Propositions \ref{Prop:C1} and \ref{Prop:C2} work for any $q$ and any $\delta$ with  $\NN_{q^5/q}(\delta)=1$.

\begin{proposition}\label{Prop:C1}
    The subspace $$U_{\delta}:=\{(x,x^{q}+\delta x^{q^4}) : \Tr_{q^5/q}(x)=0\}\leq \mathbb{F}_{q^5}^2,$$
   $\NN_{q^5/q}(\delta)=1$, is not contained up to $\Gamma \mathrm{L}(2,q^5)$-equivalence in any scattered subspace of the type 
    $$\mathrm{PR}_{s} := \{(x,x^{q^s}) : x \in \mathbb{F}_{q^5}\}, \quad s=1,\ldots,4.$$
\end{proposition}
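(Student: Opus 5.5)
We argue by contradiction: assume that some $\varphi\in\Gamma\mathrm{L}(2,q^5)$ maps $U_\delta$ into $\mathrm{PR}_s$, and translate this inclusion into an identity between reduced $q$-polynomials. Comparing coefficients should then force either a singular matrix or a relation on $\delta$ ruled out by Proposition~\ref{incomp}.

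\emph{Reduction to $\mathrm{GL}(2,q^5)$.} Write $\varphi=A\circ\sigma$ with $\sigma$ a field automorphism applied coordinatewise. Since $\Tr_{q^5/q}$ commutes with $\sigma$, we have $\sigma(U_\delta)=U_{\delta^\sigma}$, and $\NN_{q^5/q}(\delta^\sigma)=1$. So, replacing $\delta$ by $\delta^\sigma$, we may assume $\varphi=A\in\mathrm{GL}(2,q^5)$. Set $A^{-1}=\begin{pmatrix} a&b\\ c&d\end{pmatrix}$ with $ad-bc\neq0$. The inclusion $U_\delta\subseteq A^{-1}\mathrm{PR}_s$ means that for every $x\in\ker\Tr_{q^5/q}$ the element
\[
a x+b(x^q+\delta x^{q^4})
\]
is mapped by $y\mapsto y^{q^s}$ to $c x+d(x^q+\delta x^{q^4})$.

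\emph{Key step.} The $q$-polynomial
\[
L(x)=c x+d x^q+d\delta x^{q^4}-\bigl(a x+b x^q+b\delta x^{q^4}\bigr)^{q^s}
\]
(reduced modulo $x^{q^5}-x$) vanishes on the hyperplane $\ker\Tr_{q^5/q}$. Hence, as an $\F_q$-linear map, it factors through $\F_{q^5}/\ker\Tr_{q^5/q}\cong\F_q$, so $L(x)=\mu\,\Tr_{q^5/q}(x)$ for some $\mu\in\F_{q^5}$. By the isomorphism $\tilde{\mathcal L}_{5,q}\cong\End_{\F_q}(\F_{q^5})$ this is an identity of reduced polynomials, so all five coefficients of $L$ equal $\mu$. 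The exponents that occur are $0,1,4$ on the first part and $s,s+1,s+4$ (mod $5$) on the second.

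\emph{Case analysis.}
\begin{itemize}
\item[$s=1$.] The coefficient of $x^{q^3}$ is $0$, so $\mu=0$. The coefficient of $x^{q^2}$ is $-b^q$, so $b=0$, and the coefficient of $x^{q^4}$ is $d\delta$, so $d=0$. Then $ad-bc=0$, a contradiction.
\item[$s=4$.] The monomial $x^{q^2}$ does not occur, so $\mu=0$. The coefficient of $x^{q^3}$ is $-b^{q^4}\delta^{q^4}$, giving $b=0$; the coefficient of $x^{q^4}$ is $d\delta-a^{q^4}$ and that of $x$ is $c-b^{q^4}$. These, together with the remaining equations, should again give a singular matrix.
\item[$s=2$.] Every monomial occurs. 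The coefficients are $c$ for $x$; $d-b^{q^2}\delta^{q^2}$ for $x^q$; $-a^{q^2}$ for $x^{q^2}$; $-b^{q^2}$ for $x^{q^3}$; and $d\delta$ for $x^{q^4}$. If $\mu=0$, then $a=b=0$, which is singular. Otherwise $b^{q^2}=-\mu$ and $d=\mu/\delta$, and the $x^q$ equation becomes $\mu/\delta+\mu\delta^{q^2}=\mu$, that is, $\delta^{q^2+1}-\delta+1=0$. This contradicts Proposition~\ref{incomp}.
\item[$s=3$.] Every monomial occurs again, with $s+4\equiv2$. The same elimination yields an equation of the shape $1/\delta+\delta^{q^3}=1$ (up to the exact bookkeeping). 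Raising it to a suitable $q$-power and using $\NN_{q^5/q}(\delta)=1$, we aim to bring it to the form excluded by Proposition~\ref{incomp}, or to derive directly that $\delta^{q^i}=0$ as in that proof.
\end{itemize}

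The main obstacle is $s=3$ (and, to a lesser degree, $s=2$), since there no coefficient vanishes trivially and one must reach a norm-incompatible relation. For $s=3$ I would first try to imitate the proof of Proposition~\ref{incomp}: iterate the Frobenius on the relation and substitute $\NN_{q^5/q}(\delta)=1$. If that fails, an alternative is to note that $\mathrm{PR}_3$ and $\mathrm{PR}_2$ are related by swapping coordinates and applying Frobenius, so that the $s=3$ case transfers to an analogous computation already handled for $s=2$.
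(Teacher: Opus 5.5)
Your proposal is correct and is essentially the paper's argument. Writing $L=\mu\,\Tr_{q^5/q}$ is equivalent to the paper's substitution $x^{q^4}=-x-x^q-x^{q^2}-x^{q^3}$ followed by comparing coefficients, and your cases $s=1,2$ reproduce the paper's computations; your explicit reduction from $\Gamma\mathrm{L}$ to $\mathrm{GL}$ via $\sigma(U_\delta)=U_{\delta^\sigma}$ is a detail the paper leaves implicit.

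The paper disposes of $s=3,4$ by your fallback. The coordinate swap alone, with no Frobenius needed, maps $\mathrm{PR}_{5-s}$ onto $\mathrm{PR}_s$: put $y=x^{q^{5-s}}$. The direct route also closes quickly:
\begin{itemize}
\item For $s=4$, the coefficient of $x^q$ is $d$. Hence $d=\mu=0$, and together with $b=0$ the matrix is singular.
\item For $s=3$, the case $\mu=0$ gives $c=d=0$, so the matrix is singular. For $\mu\neq0$ the relation is actually $\delta+\delta^{-q^3}=1$, not $1/\delta+\delta^{q^3}=1$. Equivalently $\delta^{q^3+1}-\delta^{q^3}+1=0$, and its $q^2$-th power is exactly $\delta^{q^2+1}-\delta+1=0$, which Proposition~\ref{incomp} excludes.
\end{itemize}
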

\begin{proof}
    Since any $\mathrm{PR}_{s}$ is $\Gamma L$-equivalent to $\PR_{1}$ or $\PR_{2}$, we can divide the proof in two cases.

\noindent{\bf Case 1. $s=1$}

We have that $U_{\delta}\leq \mathrm{PR}_{1}$ if and only if there exist $A,B,C,D\in \mathbb{F}_{q^5}$, $AD\neq BC$ such that for all $x\in \mathbb{F}_{q^5}$ with $\Tr_{q^5/q}(x)=0$ there exists $y\in \mathbb{F}_{q^5}$ satisfying
$$
\begin{cases}
    Ax+B(x^q+\delta x^{q^4}) =y\\
    Cx+D(x^q+\delta x^{q^4}) =y^q.
\end{cases}$$
This yields
$$Cx+D(x^q+\delta x^{q^4}) = A^qx^q+B^q(x^{q^2}+\delta^q x)$$
for all $x\in \mathbb{F}_{q^5}$ with $\Tr_{q^5/q}(x)=0$, and so 

$$
\begin{cases}
    C-D\delta =B^q\delta^q\\
    D-D\delta=A^q\\
    -D\delta=B^q\\
    -D\delta =0.
\end{cases}$$
This implies $D=0=B=A=C$, a contradiction. 

\noindent{\bf Case 2. $s=2$}

We have that $U_{\delta}\leq \mathrm{PR}_{2}$ if and only if there exist $A,B,C,D\in \mathbb{F}_{q^5}$, $AD\neq BC$ such that for all $x\in \mathbb{F}_{q^5}$ with $\Tr_{q^5/q}(x)=0$ there exists $y\in \mathbb{F}_{q^5}$ satisfying
$$
\begin{cases}
    Ax+B(x^q+\delta x^{q^4}) =y\\
    Cx+D(x^q+\delta x^{q^4}) =y^{q^2}.
\end{cases}$$
This yields
$$Cx+D(x^q+\delta x^{q^4}) = A^{q^2}x^{q^2}+B^{q^2}(x^{q^3}+\delta^{q^2} x^q)$$
for all $x\in \mathbb{F}_{q^5}$ with $\Tr_{q^5/q}(x)=0$, and so

$$
\begin{cases}
    C-D\delta =0\\
    D-D\delta=B^{q^2}\delta^{q^2}\\
    -D\delta=A^{q^2}\\
    -D\delta =B^{q^2}.
\end{cases}$$

If $D=0$, then we obtain $A=B=C=D=0$, a contradiction. 
If $D\neq0$ we obtain $D\delta=C=-A^{q^2}=-B^{q^2}$, and so from the second equation we obtain
$1-\delta=-\delta^{q^2+1}$ that is incompatible with $\NN_{q^5/q}(\delta)=1$ by Proposition \ref{incomp}, so we have a contradiction.

\end{proof}

\begin{proposition}\label{Prop:C2}
    The subspace $$U_{\delta}:=\{(x,x^{q}+\delta x^{q^4}) : \Tr_{q^5/q}(x)=0\}\leq \mathbb{F}_{q^5}^2,$$
    with $\NN_{q^5/q}(\delta)=1$, is not contained up to $\Gamma \mathrm{L}(2,q^5)$-equivalence in any scattered subspace of the type 
    $$\mathrm{LP}_{s,\eta} := \{(x,x^{q^s}+\eta x^{q^{5-s}}) : x \in \mathbb{F}_{q^5}\}, \quad s=1,2, \quad \NN_{q^5/q}(\eta)\neq 1.$$
\end{proposition}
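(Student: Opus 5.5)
The plan is to mimic the proof of Proposition~\ref{Prop:C1}. The first step is to reduce from $\Gamma\mathrm{L}(2,q^5)$ to $\mathrm{GL}(2,q^5)$. A field automorphism $\sigma$ maps $\ker(\Tr_{q^5/q})$ onto itself. It therefore maps $U_\delta$ to $U_{\delta^\sigma}$, and $\NN_{q^5/q}(\delta^\sigma)=1$. It also maps $\LP_{s,\eta}$ to $\LP_{s,\eta^\sigma}$, and $\NN_{q^5/q}(\eta^\sigma)\neq 1$. Since the statement is meant for every admissible $\delta$ and $\eta$, it suffices to show that no $M=\begin{pmatrix}A&B\\ C&D\end{pmatrix}\in\mathrm{GL}(2,q^5)$ satisfies $M\,U_\delta\le \LP_{s,\eta}$, for $s\in\{1,2\}$.

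Write $L(x)=x^q+\delta x^{q^4}$. The condition $M\,U_\delta\le \LP_{s,\eta}$ means that, with $y=Ax+BL(x)$, we have
$$Cx+DL(x)=y^{q^s}+\eta y^{q^{5-s}}\qquad\text{for all } x\in\ker(\Tr_{q^5/q}).$$
Expand the right-hand side as a $q$-polynomial, reducing exponents modulo $q^5$. Then eliminate $x^{q^4}$ everywhere using $x^{q^4}=-(x+x^q+x^{q^2}+x^{q^3})$. The result is a $q$-polynomial of $q$-degree at most $3$ that vanishes on a $4$-dimensional $\F_q$-space, so it has more than $q^3$ roots and must be identically zero. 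This gives four equations over $\F_{q^5}$, namely the coefficients of $x,x^q,x^{q^2},x^{q^3}$. Their unknowns are $C$, $D$, $\delta$, $\eta$ and the conjugates $A^{q^s},B^{q^s},A^{q^{5-s}},B^{q^{5-s}}$.

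For $s=1$, the terms $\eta A^{q^4}x^{q^4}$ and $\eta B^{q^4}\delta^{q^4}x^{q^3}$ enter, and after the substitution every coefficient picks up $-\eta A^{q^4}$. Because $L$ has no $x^{q^2}$ or $x^{q^3}$ term, the coefficients of $x^{q^2}$ and $x^{q^3}$ give two linear relations among $A^q,A^{q^4},B^q,B^{q^4}$ and $D\delta$. I plan to solve these for $A^{q^4}$ and $B^q$ (and so for $A$ and $B$) in terms of $D$. Substituting into the remaining two equations should yield either $A=B=0$, or $C=D=0$, or a relation of the form $\eta\cdot(\text{expression in }\delta)=\text{unit}$. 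Taking $\NN_{q^5/q}$ of that last relation and using $\NN_{q^5/q}(\delta)=1$ should force $\NN_{q^5/q}(\eta)=1$, which is excluded. I also expect the relation $\delta^{q^2+1}-\delta+1=0$ to appear in some subcase, as it did in Case~2 of Proposition~\ref{Prop:C1}; that subcase is excluded by Proposition~\ref{incomp}. Each subcase thus ends in one of three contradictions: $AD=BC$, $\NN_{q^5/q}(\eta)=1$, or a violation of Proposition~\ref{incomp}. The case $s=2$ is handled the same way with $y^{q^2}+\eta y^{q^3}$.

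The main obstacle is this elimination step. Unlike the case of $\PR_s$, the system mixes two different Frobenius conjugates of $A$ and $B$. Consequently the unknowns cannot simply be read off coefficient by coefficient, and one must combine equations with their Frobenius images. My first attempt will be to split into $D=0$ and $D\neq0$. If $D=0$, I expect the equations to force $A=B=0$ quickly. If $D\neq 0$, I will normalise $D=1$ by scaling $M$. Then I will express $B$, and afterwards $A$, as a monomial multiple of $D$ times a rational function of $\delta$. The leftover consistency equation, after applying Frobenius and taking norms, should reduce to a norm condition on $\eta$ or to the polynomial in Proposition~\ref{incomp}.
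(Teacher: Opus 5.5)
Your reduction from $\Gamma\mathrm{L}(2,q^5)$ to $\mathrm{GL}(2,q^5)$ is correct; the paper leaves it implicit. For $s=1$ your plan is essentially the paper's argument, with one correction. The $x^{q^2}$- and $x^{q^3}$-equations are $-D\delta=B^q-\eta A^{q^4}$ and $-D\delta=-\eta A^{q^4}+\eta\delta^{q^4}B^{q^4}$. They involve $B^{q^4}$ (not $A^q$), so you cannot solve them for $A^{q^4}$ and $B^q$ in terms of $D$ alone. Instead, subtract them to get $B^q=\eta\delta^{q^4}B^{q^4}$. This forces $B=0$, because $\NN_{q^5/q}(\eta\delta^{q^4})=\NN_{q^5/q}(\eta)\neq 1$. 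Then $D=\eta A^{q^4}/\delta$, and the $x^q$-equation becomes $A^q=(\eta/\delta)A^{q^4}$. Hence $A=0=D$, contradicting $AD\neq BC$.

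The genuine gap is your claim that $s=2$ ``is handled the same way''. There the four coefficient equations are
\[
C-D\delta=-\eta B^{q^3},\qquad D(1-\delta)=\delta^{q^2}B^{q^2}-\eta B^{q^3},
\]
\[
-D\delta=A^{q^2}+\eta(\delta^{q^3}-1)B^{q^3},\qquad -D\delta=B^{q^2}+\eta A^{q^3}-\eta B^{q^3}.
\]
In the subcases $A=0$ and $\delta=1$ a binomial in a single unknown does appear, and the norm trick works. Otherwise no combination isolates a binomial. For $A\neq 0$ and $\delta\neq 1$, eliminate $D$ with the second equation and then $A$ via $(A^{q^2})^q=A^{q^3}$. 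What remains is a linearized equation in $B$ involving $B^{q^2}$, $B^{q^3}$ and $B^{q^4}$. Its outer coefficients are nonzero, being multiples of $\delta^{q^2+1}-\delta+1$ and of its $q^4$-th power.

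So $B$ is not a monomial multiple of $D$, and taking norms tells you nothing about whether a $q$-trinomial has nonzero roots. Normalising $D=1$ does not help either. The second equation then determines $B$ as the unique solution of an invertible binomial equation, but that solution is a five-term expression in conjugates of $\delta$ and $\eta$, and checking it against the remaining equation is no easier.

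The missing ingredient is a criterion showing this trinomial has trivial kernel in $\mathbb{F}_{q^5}$. The paper computes the determinant of its Dickson matrix. Under $\NN_{q^5/q}(\delta)=1$ this determinant factors as $(\NN_{q^5/q}(\eta)-1)^2$ times powers of $\delta$, $\delta-1$ and $\delta^{q^2+1}-\delta+1$, together with the three further factors $\delta^{q^3+q^2+q+1}-1$, $\delta^{q^3+q^2+q+1}+\delta^q-1$ and $\delta^{q^3+q^2+q+1}-\delta^{q^3+q+1}-1$. Each factor must then be shown nonzero. The first forces $\delta=1$. The last two reduce to Proposition~\ref{incomp} only after Frobenius manipulations using $\NN_{q^5/q}(\delta)=1$. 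Without this step, or an equivalent kernel criterion for $q$-trinomials over $\mathbb{F}_{q^5}$, your proposal does not prove the case $s=2$.
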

\begin{proof}
    Consider first the case $s=1$.
We have that $U_{\delta}\leq \mathrm{LP}_{1,\eta}$ if and only if there exist $A,B,C,D\in \mathbb{F}_{q^5}$, $AD\neq BC$ such that for all $x\in \mathbb{F}_{q^5}$ with $\Tr_{q^5/q}(x)=0$ there exists $y\in \mathbb{F}_{q^5}$ satisfying
$$
\begin{cases}
    Ax+B(x^q+\delta x^{q^4}) =y\\
    Cx+D(x^q+\delta x^{q^4}) =y^q+\eta y^{q^4}.
\end{cases}$$
This yields
\begin{equation}\label{Eq:Lp}Cx+D(x^q+\delta x^{q^4}) = A^qx^q+B^q(x^{q^2}+\delta^q x)+\eta (A^{q^4}x^{q^4}+B^{q^4}(x+\delta^{q^4} x^{q^3}))\end{equation}
for all $x\in \mathbb{F}_{q^5}$ with $\Tr_{q^5/q}(x)=0$, and so 

$$
\begin{cases}
    C-D\delta =B^q\delta^q-\eta A^{q^4}+\eta B^{q^4}\\
    D-D\delta=A^q-\eta A^{q^4}\\
    -D\delta=B^q-\eta A^{q^4}\\
    -D\delta =-\eta A^{q^4}+\eta B^{q^4}\delta^{q^4}.
\end{cases}$$
Combining the last two equations one gets $\eta B^{q^4}\delta^{q^4}=B^q$ and thus, since $$\NN_{q^5/q}(\eta \delta^{q^4})=\NN_{q^5/q}(\eta) \NN_{q^5/q}(\delta)\neq 1,$$
we obtain $B=0$. 

Thus $ D=\frac{\eta}{\delta} A^{q^4}$, and the second equation reads 
$$\Big(\frac{\eta(1-\delta)}{\delta} +\eta\Big)A^{q^4}=A^q \iff \frac{\eta}{\delta}A^{q^4}=A^q,$$
yielding again $A=0$. This is a contradiction to $AD\neq BC$.

We consider now the case $s=2$. Arguing as for $s=1$ one gets, similarly  to Equation \eqref{Eq:Lp}, 
\begin{equation}\label{Eq:Lp2}Cx+D(x^q+\delta x^{q^4}) = A^{q^2}x^{q^2}+B^{q^2}(x^{q^3}+\delta^{q^2} x^q)+\eta (A^{q^3}x^{q^3}+B^{q^3}(x^{q^4}+\delta^{q^3} x^{q^2}))\end{equation}
and thus 

$$
\begin{cases}
    C-D\delta =-\eta B^{q^3}\\
    D-D\delta=B^{q^2}\delta^{q^2}-\eta B^{q^3}\\
    -D\delta=A^{q^2}-\eta B^{q^3}+\eta B^{q^3}\delta ^{q^3}\\
    -D\delta =B^{q^2}+\eta A^{q^3}-\eta B^{q^3}.
\end{cases}$$
If $A=0$, combining the last two equations we obtain 
$$\eta \delta^{q^3}B^{q^3}=B^{q^2},$$
yielding again, by the assumptions on $\eta$ and $\delta$, $B=0$, a contradiction to $AD\neq BC$. 

Suppose that $A\neq 0$. Combining the last three equations we obtain 
$$ \delta A^{q^2} - \delta \eta A^{q^3} - \delta B^{q^2} + \delta ^{q^3+1}\eta B^{q^3}=0=(\eta -\eta \delta)A^{q^3} + (\delta^{q^2+1}-\delta+1)B^{q^2} - \eta B^{q^3}.$$

If $\delta=1$ then  $\eta B^{q^3}=B^{q^2}$, 
which is equivalent to $B=0$ due to our assumptions on $\eta$. This yields $ A^{q^2} =  \eta A^{q^3}$ and so $A=0$, again a contradiction to $AD\neq BC$.

\begin{comment}
\begin{verbatim}
clear;
K<delta,deltaq,deltaq2,deltaq3,deltaq4> := PolynomialRing(Integers(),5);
Frob := function (pol)
    return Evaluate(pol,[deltaq,deltaq2,deltaq3,deltaq4,delta]);
end function;
EQ_DELTAq3 := deltaq3*deltaq2*delta-deltaq3*delta+deltaq3-1;

DELTAq3 := -Coefficients(EQ_DELTAq3,deltaq3)[1]/Coefficients(EQ_DELTAq3,deltaq3)[2];
//IL DENOMINATORE NON SI ANNULLA ALTRIMENTI HO UNA CONTRADDIZIONE CON EQ=0

DELTAq4 := Evaluate(Frob(DELTAq3),[delta,deltaq,deltaq2,DELTAq3,deltaq4]);

EQ_DELTAq2 := Numerator(delta*deltaq*deltaq2*DELTAq3*DELTAq4-1);
//IL COEFFCIENTE DI deltaq2 NON SI ANNULLA MAI PERCHE'
//CHAR 2 -> delta !=0
//CHAR 3 -> delta=-1, IMP
//CHAR >3 IMP


DELTAq2 := -Coefficients(EQ_DELTAq2,deltaq2)[1]/Coefficients(EQ_DELTAq2,deltaq2)[2];
DELTAq3 := Evaluate(Frob(DELTAq2),[delta,deltaq,DELTAq2,deltaq3,deltaq4]);
DELTAq4 := Evaluate(Frob(DELTAq3),[delta,deltaq,DELTAq2,deltaq3,deltaq4]);
EQFIN1 := Numerator(delta*deltaq*DELTAq2*DELTAq3*DELTAq4-1);
EQFIN2 := Numerator(Evaluate(Frob(EQFIN1),[delta,deltaq,DELTAq2,deltaq3,deltaq4]));

EQFIN3 := Resultant(EQFIN1,EQFIN2,deltaq) div delta^18 div (delta-2)^21 div (delta-1) div (5*delta-1)^2;
EQFIN4 := Resultant(Frob(EQFIN3),EQFIN1,deltaq);
EQFIN5 := Resultant(Frob(EQFIN3),EQFIN3,deltaq);

Factorization(Resultant(EQFIN3,EQFIN4,delta));

\end{verbatim}
\end{comment}
From now on we can suppose  that $\delta\neq1$. Thus
$$A^{q^2}= \frac{\delta^{q^2+1} B^{q^2} +\eta(- \delta^{q^3+1}+ \delta^{q^3} - 1)B^{q^3}}{\delta-1}, \qquad A^{q^3} = \frac{\delta^{q^3+q} B^{q^3} +\eta^q(- \delta^{q^4+1}+ \delta^{q^4} - 1)B^{q^4}}{\delta^q-1}.$$

From $\delta A^{q^2} - \delta \eta A^{q^3} - \delta B^{q^2} + \delta ^{q^3+1}\eta B^{q^3}=0$, 
$$(\delta - 1)^q\delta (\delta^{q^2+1} - \delta + 1)B^{q^2}+\eta \delta(\delta^{q^3+q+1} - \delta^{q^3+q}+ \delta^q- 1)B^{q^3}+\eta^{q+1}(\delta^{q^4+q} - \delta^{q^4} + 1)\delta (\delta-1)B^{q^4}.$$
The determinant of the Dickson matrix  of the above linearized polynomial, when considering also $\NN_{q^5/q}(\delta)=1$ factorizes as 

\begin{eqnarray*}
    (\NN_{q^5/q}(\eta)-1)^2\delta^{q^3+2q^2 +q+1}(\delta-1)^{q^3+q^2+q+1}(\delta^{q^2+1} - \delta + 1)^{q^3+q+1}\cdot\\(\delta^{q^3+q^2+q+1}+\delta^q-1)
(\delta^{q^3+q^2+q+1}-1)(\delta^{q^3+q^2+q+1}-\delta^{q^3+q+1}-1)=0,
\end{eqnarray*}
and we have a non-zero solution $B$ if and only if one of these factors is vanishing.

The first three are trivially non-zero. 

\begin{enumerate}
\item $\delta^{q^3+q^2+q+1}=1$. This would yield $\delta=1$, a contradiction.
\item $\delta^{q^2+1} - \delta + 1=0$. This would yield  $\delta^{q^4}=\frac{1}{1-\delta}$, $\delta^{q}=\delta\in \mathbb{F}_q$, and thus $\delta^5=1$. Now,  $\delta^5=1$ and $\delta^2 - \delta + 1=0$ provide a contradiction since $\gcd(x^5-1,x^2-x+1)=1$. 
\item $\delta^{q^3+q^2+q+1}+\delta^q-1=0$. This would yield $\delta^{q^4+q^3+q^2+q}=1-\delta^{q^2}$ and thus $1=\delta-\delta^{q^2+1}$, that is $\delta^{q^2}=\frac{\delta-1}{\delta}$, a contradiction as above.
\item $\delta^{q^3+q^2+q+1}-\delta^{q^3+q+1}-1=0$. This would yield  $\delta^{q^4+q^3+q^2+q}-\delta^{q^4+q^2+q}-1=0$ and then, multiplying by $\delta$, $1-\delta^{q^4+q^2+q+1}-\delta=0$. Raising it to the power $q$, one obtains $$1-\delta^{q^3+q^2+q+1}-\delta^q=0$$ and a contradiction as above.
    
\end{enumerate}



 Thus the only possibility is $B=0$, a contradiction since from $\delta A^{q^2} - \delta \eta A^{q^3} - \delta B^{q^2} + \delta ^{q^3+1}\eta B^{q^3}=0$ and our assumptions on $\eta$ one gets $A=0$.
\end{proof}

Note that when $q=3^{2h+1}$, by Proposition \ref{Prop:Cond}, $U_1$ is scattered and of dimension $4$,  since $5$ is not a square in $\mathbb{F}_q$.  In the following we prove that, with this particular choice of $q$ and with $\delta=1$, $U_1$ is not contained up to $\Gamma \mathrm{L}(2,q^5)$-equivalence in any subspace $\WW_{\eta,\rho}$.

\begin{proposition}\label{Prop:C3}
    Let $q=3^{2h+1}$. The subspace
    $$U_{1}:=\{(x,x^{q}+ x^{q^4}) : \Tr_{q^5/q}(x)=0\}\leq \mathbb{F}_{q^5}^2,$$
    is not contained in any of the subspaces in the $\Gamma \mathrm{L}(2,q^5)$-orbit of 
    $$\WW_{\eta,\rho}:= \{(\eta (x^q - x) + \Tr_{q^5/q}(\rho x), x^q - x^{q^4}) : x\in \mathbb{F}_{q^5}\},$$
    for every $\eta,\rho\in\F_{q^5}$ with $\Tr_{q^5/q}(\eta)=0\neq \Tr_{q^5/q}(\rho)$.
\end{proposition}
\begin{proof}
    The subspace $U_1$ is contained in any subspace within the $\mathrm{GL}$-orbit of $\WW_{\eta,\rho}$ if and only if there exist $A,B,C,D\in\F_{q^5}$ with $AD\neq BC$, such that for every $x\in \mathbb{F}_{q^5}$ with $\Tr_{q^5/q}(x)=0$, there exists a $y\in\F_{q^5}$ for which the following holds:
$$
\begin{pmatrix}
A&B\\C&D
\end{pmatrix}\begin{pmatrix}
x\\x^q+x^{q^4}
\end{pmatrix}=\begin{pmatrix}
\eta (y^q - y) + \Tr_{q^5/q}(\rho y)\\y^q - y^{q^4}
\end{pmatrix}.
$$
This matrix equation yields two separate equations:
\begin{eqnarray}
\label{eq1}Ax+B(x^q+x^{q^4})&=&\eta (y^q - y) + \Tr_{q^5/q}(\rho y),\\
\label{eq2}Cx+D(x^q+x^{q^4})&=&y^q - y^{q^4}.
\end{eqnarray}
Let us first focus on Equation \eqref{eq2}, which can be rewritten as $Cx+C(x^q+x^{q^4})-y^q + y^{q^4}=0$. We can then consider the system of equations $(Cx+C(x^q+x^{q^4})-y^q + y^{q^4})^{q^\ell}=0$ for $\ell=0,\dots,4$. From the equations corresponding to $\ell=1$ and $\ell=4$, we obtain expressions for $y^{q^2}$ and $y^{q^3}$, respectively. These can be written as $y^{q^2}=f_2(y)$ and $y^{q^3}=f_3(y)$, for some polynomials $f_2(Y),f_3(Y)\in \mathbb{F}_{q^5}[Y]$. We can then substitute these expressions into the other equations. Next, from the equation with $\ell=3$, we obtain $y^{q^4}=f_4(y)$, where $f_4(Y)\in \mathbb{F}_{q^5}[Y]$. Once these substitutions are made, the equations for $\ell=0$ and $\ell=2$ become
    \begin{equation*}\begin{cases}
        2Cx + 2C^qx^q + 2C^{q^3}x^{q^3} + 2D(x^q + x^{q^4}) + 2D^q(x + x^{q^2}) + 2D^{q^3}(x^{q^2} + x^{q^4}) + 2y + y^q=0,\\
    2C^{q^2}x^{q^2} + 2C^{q^4}x^{q^4} + 2D^{q^2}x^q + 2D^{q^2}x^{q^3} + 2D^{q^4}x + 2D^{q^4}x^{q^3} + y + 2y^{q}=0.
    \end{cases}
    \end{equation*}
From the second equation, we can obtain $y^q = f_1(y)$, where $f_1(Y) \in \mathbb{F}_{q^5}[Y]$. By substituting this into the first equation, we get the following:
\begin{eqnarray*}
(2C^{q^4} + 2D + 2D^{q^3})x^{q^4} &+& (2C^{q^3} + 2D^{q^2} + 2D^{q^4})x^{q^3} + (2C^{q^2} + 2D^{q} +2D^{q^3})x^{q^2}\\
&+& (2C^q+2D+2D^{q^2})x^q + (2C+ 2D^{q} +2D^{q^4})x=0.
\end{eqnarray*}
By substituting $x^{q^4}=-x^{q^3}-x^{q^2}-x^q-x$, we obtain a polynomial in $x$ of degree $q^3$. Since this polynomial has at least $q^4$ solutions, it must be the zero polynomial, which means all of its coefficients must be equal to zero. This leads to the following system of equations:
\begin{equation}\begin{cases}\label{CoeffEq2}
C - C^{q^4} - D + D^q - D^{q^3} + D^{q^4}=0\\
C^q - C^{q^4} + D^{q^2} - D^{q^3}=0\\
C^{q^2} - C^{q^4} - D + D^q=0\\
C^{q^3} - C^{q^4} - D + D^{q^2} - D^{q^3} + D^{q^4}=0.
\end{cases}
\end{equation}

    We can apply a similar process to Equation \eqref{eq1}. Consider the expression $Ax+B(x^q+x^{q^4})-\eta (y^q - y) - \Tr_{q^5/q}(\rho y)=0$  and so the system of equations $(Ax+B(x^q+x^{q^4})-\eta (y^q - y) - \Tr_{q^5/q}(\rho y))^{q^\ell}=0$ for $\ell=0,\dots,4$. By substituting the expressions for $y^{q^i}=f_i(y)$ with $i=1,\dots,4$, the third equation becomes:
\begin{eqnarray*}
&&(A^{q^4} + C^{q^4}\rho^q + C^{q^4}\rho^{q^3} + D^{q^3}\eta^{q^4} + 2D^{q^3}\rho^{q^4})x^{q^4}\\ &+& (B^{q^4} + C^{q^3}\eta^{q^4} + 2C^{q^3}\rho^{q^4} + D^{q^2}\rho^q + D^{q^4}\rho^q + D^{q^4}\rho^{q^3})x^{q^3}\\
&+& (C^{q^2}\rho^q + D^q\eta^{q^4} + 2D^q\rho^{q^2} + 2D^q\rho^{q^4} + D^{q^3}\eta^{q^4} + 2D^{q^3}\rho^{q^4})x^{q^2}\\
&+& (C^q\eta^{q^4} + 2C^q\rho^{q^2}  +2C^q\rho^{q^4} + D^{q^2}\rho^q)x^q\\
&+& (B^{q^4} + D^q\eta^{q^4} + 2D^q\rho^{q^2} + 2D^q\rho^{q^4} + D^{q^4}\rho^q + D^{q^4}\rho^{q^3})x+y(\Tr_{q^5/q}(\rho))=0.
\end{eqnarray*}
Since $\Tr_{q^5/q}(\rho) \neq 0$, we can express $y$ as a function of $x$, i.e., $y=f(x)$. We can then substitute this expression for $y$ into the other equations. This will give us four equations that depend solely on $x$. As before, by substituting $x^{q^4}=-x^{q^3}-x^{q^2}-x^q-x$, we will obtain four polynomials in $x$. Since these polynomials must be the zero polynomial, every coefficient of each polynomial must be zero, leading to the following system of equations:
\begin{equation}\begin{cases}\label{CoeffEq1}
2A^{q^4} + B^{q^3} + B^{q^4} + 2C^{q^4}\eta^{q^3} + D^q\eta^{q^3} + D^q\eta^{q^4} + 2D^{q^3}\eta^{q^3} + 2D^{q^3}\eta^{q^4} + D^{q^4}\eta^{q^3}=0\\
2A^{q^4} + 2B^q + C^{q^2}\eta^q + 2C^{q^4}\eta^q + D^q\eta^q + D^q\eta^{q^4}=0\\
2A^{q^4} + B^{q^3} + C^q\eta^{q^3} + C^q\eta^{q^4} + 2C^{q^4}\eta^{q^3} + 2D^{q^3}\eta^{q^3} + 2D^{q^3}\eta^{q^4}=0\\
2A + 2A^{q^4} + B + B^{q^4} + C^{q^4}\eta + D^q\eta^{q^4} + 2D^{q^3}\eta^{q^4} + 2D^{q^4}\eta=0\\
2A^{q^2} + 2A^{q^4} + C^{q^4}\eta^{q^2} + 2D^q\eta^{q^2} + D^q\eta^{q^4}=0\\
2A^q + 2A^{q^4} + C^q\eta^q + C^q\eta^{q^4} + 2C^{q^4}\eta^q + D^{q^2}\eta^q + 2D^{q^3}\eta^{q^4}=0\\
2A^{q^4} + C^q\eta^{q^4} + C^{q^4}\eta + 2D^{q^2}\eta + 2D^{q^3}\eta^{q^4}=0\\
2A^{q^4} + 2B^q + B^{q^4} + 2C^{q^4}\eta^q + D^q\eta^q + D^q\eta^{q^4} + 2D^{q^3}\eta^{q^4} + D^{q^4}\eta^q=0\\
2A^{q^4} + 2B^{q^2} + 2C^q\eta^{q^2} + C^q\eta^{q^4} + C^{q^4}\eta^{q^2} + 2D^{q^3}\eta^{q^4}=0\\
2A^{q^4} + B^{q^4} + C^{q^3}\eta^{q^4} + 2C^{q^4}\eta^q + D^{q^2}\eta^q + 2D^{q^3}\eta^{q^4} + D^{q^4}\eta^q=0\\
2A^{q^4} + B^{q^4} + C^{q^4}\eta^{q^2} + 2D^q\eta^{q^2} + D^q\eta^{q^4} + 2D^{q^3}\eta^{q^4} + 2D^{q^4}\eta^{q^2}=0\\
2A^{q^4} + B + B^{q^4} + C^{q^3}\eta^{q^4} + C^{q^4}\eta + 2D^{q^2}\eta + 2D^{q^3}\eta^{q^4} + 2D^{q^4}\eta=0\\
2A^{q^4} + B + 2C^{q^2}\eta + C^{q^4}\eta + D^q\eta^{q^4}=0\\
2A^{q^4} + 2C^{q^4}\eta^{q^3} + D^q\eta^{q^3} + D^q\eta^{q^4}=0\\
2A^{q^4} + 2B^{q^2} + B^{q^4} + C^{q^3}\eta^{q^4} + C^{q^4}\eta^{q^2} + 2D^{q^3}\eta^{q^4} + 2D^{q^4}\eta^{q^2}=0\\
2A^{q^3} + 2A^{q^4} + B^{q^3} + B^{q^4} + C^{q^3}\eta^{q^3} + C^{q^3}\eta^{q^4} + 2C^{q^4}\eta^{q^3} + 2D^{q^3}\eta^{q^3} + 2D^{q^3}\eta^{q^4} + D^{q^4}\eta^{q^3}=0.
\end{cases}
\end{equation}

    From the equations in \eqref{CoeffEq2}, we obtain expressions for $C^{q^i}$ in terms of $C^{q^4}$ for $i=0,1,2,3$. We can substitute these into \eqref{CoeffEq1}.

Let $\eta=0$. Then $A=0$ from  the third-to-last equation in \eqref{CoeffEq1}. 
From the fourth-to-last equation, we obtain $B=0$. This contradicts the assumption that $AD \neq BC$. Therefore, from now on we make the following assumption
\begin{enumerate}
    \item[(A1)]  $\eta\neq 0$.
\end{enumerate}
Again from the third-to-last equation in \eqref{CoeffEq1},
we can express $C^{q^4}$ in terms of $D^q$, and then substitute this into the other equations.

The second equation in \eqref{CoeffEq1} becomes:
\begin{equation}\label{findA}
A^{q^4} + B^q + 2D\eta^q + 2D^q\eta^{q^4}=0,
\end{equation}
from which we find $A=2B^{q^2}+D^{q}\eta^{q^2} + D^{q^2}\eta$. We can substitute this into the other equations.

The fourth-to-last equation in \eqref{CoeffEq1} becomes $B + B^q + 2D\eta + 2D\eta^q + D^q\eta = 0$. Similarly, we can find expressions for $B^{q^i}$ in terms of $B^{q^4}$ for $i=0,1,2,3$, and substitute these. The eighth equation in \eqref{CoeffEq1} becomes:
\begin{equation}\label{findB}
B^{q^4}(\eta^q+2\eta^{q^3}) + D(\eta^{q+1}+\eta^{q^3+q}+\eta^{q^4+q})+2D^q\eta^{q+1} D^{q^3}\eta^{q^4+q^3}+D^{q^4}(2\eta^{q+1}+2\eta^{q^3+q}+2\eta^{q^4+q})=0.
\end{equation}
Noting that $\eta^q+2\eta^{q^3}\neq0$ since $\eta\neq0$, we can express $B^{q^4}$ in terms of the variables represented by $D$, and substitute this into the other equations. After removing the dependent equations, we are left with
    \begin{equation*}
        \begin{cases}\label{SystemofaD}
            D(\eta+\eta^{q^3}+\eta^{q^4})(\eta+\eta^{q}+\eta^{q^4})+D^q\eta(\eta+\eta^{q}+\eta^{q^4})+D^{q^2}(\eta+\eta^{q^4})(\eta^q+2\eta^{q^3})+\\
            \hspace{0.5 cm}+D^{q^3}\eta^{q^4}(\eta+\eta^{q^3}+\eta^{q^4})+D^{q^4}(\eta+\eta^{q^3}+\eta^{q^4})(\eta+\eta^{q}+\eta^{q^4})&=0,\\
            D(\eta^{q^3+1} + \eta^{q^4+1} + 2\eta^{2q} + 2\eta^{q+q^3} + 2\eta^{q^4+q^3} + \eta^{2q^4})\\
            \hspace{0.5 cm}+D^q(\eta^{q^3+1} + \eta^{q^4+1} + 2\eta^{2q} + \eta^{q^3+q})+D^{q^2}(\eta^q+2\eta^{q^4})(\eta^q+2\eta^{q^3})\\
            \hspace{0.5 cm}+D^{q^3}\eta^{q^4}(\eta^q + \eta^{q^3} + 2\eta^{q^4})+D^{q^4}(\eta^{q^3+1} + \eta^{q^4+1} + \eta^{q+q^3} + 2\eta^{q^4+q^3} + \eta^{2q^4})&=0,\\
            D(\eta^{q+1} + \eta^{q^2+1} + \eta^{q^3+1} + \eta^{q^2+q} + 2\eta^{q^3+q} + \eta^{q^4+q} + \eta^{q^4+q^2} + \eta^{q^4+q^3})\\
            \hspace{0.5 cm}+D^q\eta(\eta^q+\eta^{q^2}+\eta^{q^3})+D^{q^3}\eta^{q^4}(\eta^q+\eta^{q^2}+\eta^{q^3})\\
            \hspace{0.5 cm}+D^{q^4}(\eta^{q+1} + \eta^{q^2+1} + \eta^{q^3+1} +  2\eta^{q^3+q} + \eta^{q^4+q} + 
            \eta^{q^3+q^2}+ \eta^{q^4+q^2} + \eta^{q^4+q^3})&=0,\\
            D(\eta^{q^3+1} + 2\eta^{q^4+1} + \eta^{2q} + 2\eta^{q^4+q} + \eta^{q^4+q^3} + 2\eta^{2q^4})\\
            \hspace{0.5 cm}+D^q(\eta^{q^3+1} + 2\eta^{q^4+1} + \eta^{2q} + \eta^{q^2+q} + 2\eta^{q^3+q} + 2\eta^{q^3+q^2})\\
            \hspace{0.5 cm}+D^{q^2}(\eta^q+2\eta^{q^3})(\eta^q+\eta^{q^2}+2\eta^{q^4})+D^{q^3}(\eta^{q^2+q} + 2\eta^{q^3+q^2} + 2\eta^{q^4+q^3} + \eta^{2q^4})\\
            \hspace{0.5 cm}+D^{q^4}(\eta^{q^3}+2\eta^{q^4})(\eta+\eta^q+\eta^{q^4})&=0,\\
            D(\eta^{q+1} + 2\eta^{q^3+1} + 2\eta^{q^4+1} + 2\eta^{q^4+q^3} + 2\eta^{2q^4})+D^q\eta(\eta^q+2\eta^{q^3}+2\eta^{q^4})\\ \hspace{0.5 cm}+D^{q^2}(\eta^{q^3}+\eta^{q^4})(\eta^{q}+2\eta^{q^3})++D^{q^3}(\eta^{q^3+q} + \eta^{q^4+q} + 2\eta^{2q^3} + 2\eta^{q^4+q^3} + 2\eta^{2q^4})+\\
            \hspace{0.5 cm}+D^{q^4}(\eta^{q+1} + 2\eta^{q^3+1} + 2\eta^{q^4+1} + \eta^{q^3+q}+ \eta^{2q^3}+ 2\eta^{q^4+q^3} + 2\eta^{2q^4})&=0.
        \end{cases}
    \end{equation*}
    Notice that if $D=0$, then from \eqref{findB} we obtain $B=0$, which contradicts the assumption that $AD \neq BC$. Thus, we can consider $D \neq 0$. This would yield a solution $(D,D^q,D^{q^2},D^{q^3},D^{q^4})\neq\mathbf{0}$ to the system of equations in (\ref{SystemofaD}). If we consider \eqref{SystemofaD} to be a linear system with unknowns $(D,D^q,D^{q^2},D^{q^3},D^{q^4})$, the associated matrix must be singular. Therefore, we can require the determinant to be zero. Let $\omega$ be the generator of $\mathbb{F}_9^*$, where $w^2 + 2w + 2=0$. We have that the determinant is equal to $2c_1c_2c_3$, where
    \begin{eqnarray*}
   c_1&=&\eta^q + 2\eta^{q^3},\\
   c_2&=&\eta^{q^2+q+1} + \omega^2\eta^{q^3+q+1} + \eta^{q^4+q+1} + \omega^2\eta^{q^3+q^2+1} + \omega^2\eta^{q^4+q^2+1} + \eta^{q^4+q^3+1} + \eta^{q^3+q^2+q}+\\&& +
        \omega^2\eta^{q^4+q^2+q} + \omega^2\eta^{q^4+q^3+q} + \eta^{q^4+q^3+q^2},\\
        c_3&=&\eta^{q^2+q+1} + \omega^6\eta^{q^3+q+1} + \eta^{q^4+q+1} + \omega^6\eta^{q^3+q^2+1} + \omega^6\eta^{q^4+q^2+1} + \eta^{q^4+q^3+1} + \eta^{q^3+q^2+q}+\\&& +
        \omega^6\eta^{q^4+q^2+q} + \omega^6\eta^{q^4+q^3+q} + \eta^{q^4+q^3+q^2}.
    \end{eqnarray*}
    We first note that $c_1 \neq 0$. We can also see that $c_2^{q^5}=c_3$, which allows us to consider a case where both coefficients are zero. In this case, we can substitute the expression for $\eta^{q^4}$ given by $\eta^{q^4}=2\eta^{q^3}+2\eta^{q^2}+2\eta^{q}+2\eta$ into both equations and then proceed to consider  both the following expressions to be vanishing 
    \begin{eqnarray*}
        t_1&:=&\frac{c_2+c_3}{2}=\eta^{q+1}(\eta+\eta^q)+\eta^{q^3}(\eta^2 + 2\eta^{q+1} + 2\eta^{q^2+1} + \eta^{2q^2})+\eta^{2q^3}(\eta+\eta^{q^2})\in\mathbb{F}_{q^5},\\
        t_2&:=&\frac{c_2-c_3}{\omega^6}=\eta^{q^2}(\eta+\eta^q)(\eta+\eta^q+\eta^{q^2})+\eta^{q^3}(\eta^q)(\eta^q+2\eta^{q^2})+\eta^{2q^3}\eta^q\in\mathbb{F}_{q^5}.
    \end{eqnarray*}

       Let $t_3:=(\eta+\eta^{q^2})t_2-\eta^qt_1=2(\eta+\eta^q)(\eta^{q^2+2} + 2\eta^{2q+1} + \eta^{q^2+q+1} + 2\eta^{2q^2+1} + \eta^{2q^2+q} + \eta^{3q^2})+\eta^{q^3}\eta^q(\eta+2\eta^{q^2})(\eta+\eta^q+\eta^{q^2})$. Since $\eta^q(\eta+2\eta^{q^2})(\eta+2\eta^{q^2})\neq0$ from the assumption on $\eta$, we can express $\eta^{q^3}$ as a function of $\eta, \eta^q, \text{ and } \eta^{q^2}$, i.e., $\eta^{q^3}=h(\eta,\eta^q,\eta^{q^2})$. We can then substitute this into the expression for $t_2$ to get:
$$\eta^{q^2+q+1}(\eta+\eta^{q})^{q+1}(\eta+\omega\eta^q+\eta^{q^2})^2(\eta+\omega^3\eta^q+\eta^{q^2})^2=0.$$
The first two factors are clearly non-zero. As before, we can observe that the third and fourth factors are expressions in $\mathbb{F}_{q^{10}}$. By setting both to zero and then combining them, we arrive at the conclusion that $\eta=0$, which is a contradiction to our  assumption (A1).
\end{proof}

In order to deal with the case (C4), we first obtain necessary conditions on $k\in \mathbb{F}_{q^5}$, $q=3^{2h+1}$, for $U_1$ to be contained in $\ZZ_{k}$.

\begin{proposition}\label{Prop:Zk_1}
    Let $q=3^{2h+1}$. If $U_1$ is contained, up to $\Gamma \mathrm{L}(2,q^5)$ equivalence, in $\ZZ_k$, $\NN_{q^5/q}(k)=1$, then  $u_1(k)u_2(k)=0$ and $t(k)=0$, where 
    \begin{eqnarray*}
    u_1(k) &:=& k^{2+2q+2q^{2}}+2 k^{1+2q+2q^{2}}+2 k^{1+2q+q^{2}}+ k^{1+q+2q^{2}}+2 k^{1+q^{2}}+ k^{q+q^{2}}+2 k^{q^{2}}+1;\\
    u_2(k)&:=& k^{2+2q+2q^{2}}+2 k^{2+2q+q^{2}}+ k^{2+q+q^{2}}+2 k^{1+2q+q^{2}}+ k^{1+q}+2 k^{1+q^{2}}+2 k+1;\\
    t(k)&:=&k^{1+q+q^2}(k^{3+3q+3q^{2}+q^{3}}+2 k^{3+3q+2q^{2}+q^{3}}+2 k^{2+3q+3q^{2}+q^{3}}+k^{2+2q+q^{2}+q^{3}}+k^{2+2q+q^{2}}\\&&+2 k^{2+q+2q^{2}+q^{3}}+k^{2+q+q^{2}+q^{3}}
    +2 k^{2+q+q^{2}}+k^{1+3q+2q^{2}+q^{3}}+k^{1+2q+3q^{2}+q^{3}}+2 k^{1+2q+2q^{2}}\\
    &&+2 k^{1+2q+q^{2}+q^{3}}+2 k^{1+q+3q^{2}+q^{3}}+k^{1+q+2q^{2}}+2 k^{1+q}+k^{1+2q^{2}+q^{3}}+2 k^{1+q^{3}}+k\\
    &&+2 k^{2q+2q^{2}+q^{3}}+k^{q+2q^{2}+q^{3}}+k^{q+q^{2}}+2 k^{q^{2}+q^{3}}+2 k^{q^{2}}+k^{q^{3}}).
    \end{eqnarray*}
\end{proposition}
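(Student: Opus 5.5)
Following the template of Propositions \ref{Prop:C1}, \ref{Prop:C2} and \ref{Prop:C3}, I will translate containment into polynomial identities. First, the field automorphism part of $\Gamma\mathrm{L}(2,q^5)$ is harmless: $U_1$ is fixed by Frobenius (its defining data lies over $\mathbb{F}_q$), and the family $\ZZ_k$ with $\NN_{q^5/q}(k)=1$ is mapped to itself by $k\mapsto k^{\sigma}$. Hence it suffices to treat $\mathrm{GL}(2,q^5)$. Then $U_1$ is contained in the image of $\ZZ_k$ if and only if there are $A,B,C,D\in\mathbb{F}_{q^5}$ with $AD\neq BC$ such that for every $x$ with $\Tr_{q^5/q}(x)=0$ we have $y:=Ax+B(x^q+x^{q^4})$ and
\[
Cx+D(x^q+x^{q^4})=k(y^q+y^{q^3})+y^{q^2}+y^{q^4}.
\]
Here the first equation lets me eliminate $y$ directly, in contrast with the $\WW_{\eta,\rho}$ case, where $y$ had to be recovered.

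Next I expand the right-hand side as a $q$-polynomial in $x$ and reduce with $x^{q^4}=-x-x^q-x^{q^2}-x^{q^3}$. The result is a $q$-polynomial of $q$-degree at most $3$ that vanishes on the $q^4$ elements of $\ker\Tr_{q^5/q}$, so all four coefficients vanish. This yields four linear equations in the unknowns $A^{q^i},B^{q^i},C,D$ with coefficients in $k,k^q,\dots$. I will use two of them to express $C$ and $D$ in terms of $A^{q^i},B^{q^i}$. If $A$ and $B$ are both forced to vanish, then $C$ and $D$ are forced to be $0$ as well, contradicting $AD\neq BC$. So the remaining two equations give a nontrivial semilinear system in $(A,B)$.

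The key step is to add the $q^\ell$-conjugates of these two equations, which gives a $10\times10$ linear system in $(A^{q^i},B^{q^i})_{i=0,\dots,4}$. This system must be singular, exactly as with the Dickson matrices in the earlier proofs. Its determinant is a polynomial in $k,\dots,k^{q^4}$. I will reduce it using $k^{q^4}=k^{-1-q-q^2-q^3}$ and characteristic $3$, then factor it. I expect nonvanishing factors (powers of $k$, and factors killed by Proposition \ref{incomp}-type arguments) together with the factor $u_1(k)u_2(k)$, up to conjugation. This gives the first necessary condition.

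For $t(k)=0$, I will use a second independent minor condition, in the spirit of Proposition \ref{Prop:Cond}, where rank conditions were obtained from selected $3\times3$ minors. Concretely, I will eliminate $A$ explicitly from one equation, as was done with \eqref{findA}, and substitute to get a single linearized equation in $B$ alone. Nonvanishing of its Dickson determinant, combined with compatibility with the other equation via a resultant in $k^{q^3}$, should produce $t(k)$. The factor $k^{1+q+q^2}$ in $t$ suggests that this is exactly what emerges after clearing denominators.

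The main obstacle is the size of the symbolic elimination. I would carry it out by computer algebra over $\mathbb{F}_3$, treating $k,k^q,k^{q^2},k^{q^3}$ as independent variables. The hardest part will be identifying which factors can genuinely vanish under $\NN_{q^5/q}(k)=1$.
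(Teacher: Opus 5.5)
Your setup is the paper's: you reduce to $\mathrm{GL}(2,q^5)$, eliminate $y$, read off the four coefficient equations, and solve for $C,D$. You then conjugate the two remaining equations in $(A,B)$ into a square system whose determinant must vanish; the paper computes the equivalent Dickson determinant of a linearized equation in $B$ alone. The gap is in the last step. The statement asserts the \emph{conjunction} of two independent conditions, $u_1(k)u_2(k)=0$ and $t(k)=0$ (only $t$ involves $k^{q^3}$). Your plan produces only one condition. Existence of $(A,B)\neq(0,0)$ is equivalent to the vanishing of a single determinant $\Delta(k)$, and factoring one polynomial only ever gives a disjunction of single equations. There is no ``second independent minor condition'': requiring further minors to vanish would mean the rank drops by two, which is not forced. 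The Dickson determinant of the $B$-equation obtained after eliminating $A$ is the same singularity condition again, not new information. Nor is $u_1u_2$ a factor of $\Delta$. In the paper, $\Delta$ factors as $\NN_{q^5/q}(k^{1+q+q^2}-1)\,(\NN_{q^5/q}(k)+1)\,g(k)^{1+q^5}$, where $g$ is a polynomial in $k,\dots,k^{q^4}$ with coefficients in $\mathbb{F}_9\setminus\mathbb{F}_3$ (with $\xi^2+2\xi+2=0$). The first two factors are nonzero for $k\neq 1$; for $k=1$ one checks $u_1(1)=u_2(1)=t(1)=0$ in characteristic $3$.

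The missing idea is arithmetic, and it is exactly where $q=3^{2h+1}$ is used; your proposal only ever uses characteristic $3$. Since $q^5$ is an odd power of $3$, $\mathbb{F}_9\not\subseteq\mathbb{F}_{q^5}$, so $\{1,\xi\}$ is an $\mathbb{F}_{q^5}$-basis of $\mathbb{F}_{q^{10}}$. Write $g=g_0+\xi g_1$ with $g_0,g_1$ over $\mathbb{F}_3$. Then $g(k)=0$ with $k\in\mathbb{F}_{q^5}$ forces $g_0(k)=g_1(k)=0$; this is the paper's $g(k)\pm g(k)^{q^5}=0$. Equivalently, $g^{1+q^5}=g_0^2+g_0g_1+2g_1^2$ is a norm form that is anisotropic over $\mathbb{F}_{q^5}$, so one factor yields two equations. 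With $\NN_{q^5/q}(k)=1$ these become $s_1(k)=s_2(k)=0$. Eliminating $k^{2q^3}$ between them gives $t(k)=0$. Since $t$ is linear in $k^{q^3}$, substituting back into $s_2$ yields $(k-1)^{1+q^2}(k^{q+1}-1)^{q+1}(k^{q^2+q+1}-1)\,u_1(k)u_2(k)=0$, hence $u_1(k)u_2(k)=0$ for $k\neq1$. So both conclusions come from the single pair $s_1,s_2$ produced by this splitting, not from different minors. For $q=3^{2h}$ the splitting is unavailable and the argument would give only one equation.
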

\begin{proof}
    Let us consider  $k\neq 1$.

    The subspace $U_1$ is contained in any of the subspaces in the $\Gamma \mathrm{L}$-orbit of $\ZZ_k$ if and only if there exist $A,B,C,D\in\F_{q^5}$ with $AD\ne BC$, such that for every $x\in \mathbb{F}_{q^5}$, $\Tr_{q^5/q}(x)=0$, there exists $y\in\F_{q^5}$ for which 

    $$
    \begin{cases}
        y=A x +B (x^q+x^{q^4}),\\
k(y^q+y^{q^3})+y^{q^2}+y^{q^4}=C x +D (x^q+x^{q^4}).
    \end{cases}$$
    
    From the system above we obtain 
    $$\begin{cases}
         A^{q^4} + 2 B^{q} k + B^{q^3} k + 2 B^{q^4} + C + 2D&=0,\\
    2 A^{q} k + A^{q^4} + 2 B^{q^2} + B^{q^3} k&=0,\\
    2 A^{q^2} + A^{q^4} + 2 B^{q} k + 2 D&=0,\\
    2 A^{q^3} k + A^{q^4} + 2 B^{q^2} + B^{q^3} k + 2 B^{q^4} + 2D&=0,
    \end{cases}$$
and thus 
$$A^{q^2} + 2 A^{q^3} k + B^{q} k + 2 B^{q^2} + B^{q^3} k + 2 B^{q^4}=0=2 A^{q} k + A^{q^4} + 2 B^{q^2} + B^{q^3}k. $$
In particular, considering the $q$-Frobenius of the above quantities and combining them,
$$\begin{cases}
         2 A^{q} k + A^{q^4} + 2 B^{q^2} + B^{q^3} k&=0,\\
    2 A k^{q^4} + A^{q^3} + 2 B^{q} + B^{q^2} k^{q^4}&=0,\\
    A k^{q^4+1} + 2 A^{q^2} + 2 B^{q^2} k^{q^4+1} + B^{q^2} + 2 B^{q^3} k + B^{q^4}&=0,\\
    2 A k^{2q^4+1} + A^{q} + B k^{q^4} + 2 B^{q} + B^{q^2} k^{2q^4+1} + B^{q^3} k^{q^4+1} + 2 B^{q^3} + 2 B^{q^4} k^{q^4}&=0,\\
    A k^{q^4+q+1} + 2 A + 2 B^{q^2} k^{q^4+q+1} + B^{q^2} k^{q} + 2 B^{q^3} k^{q+1} + B^{q^3}&=0,\\
    2 B k^{2q^4+q^3+q+2} + B k^{q^4+q+1} + B k^{q^4+q^3+1} + 2 B + B^{q}k^{q^4+q^3+q+2}\\
    \hspace{0.5 cm}+ 2 B^{q}k^{q^4+q^3+q+1}+ 
    2 B^{q}k^{q^3+1} + B^{q} k^{q^3} + 2 B^{q^2} k^{2q^4+q^3+q+2} + B^{q^2} k^{2q^4+q^3+2}  \\
    \hspace{0.5 cm}+ B^{q^2}k^{q^4+q^3+q+1}+ 2 B^{q^2}k^{q^4+1} + 
    2 B^{q^2} k^{q^3} + B^{q^2} + 2 B^{q^3} k^{2q^4+q^3+2} + B^{q^3} k^{q^4+q^3+2}  \\ 
    \hspace{0.5 cm}+ B^{q^3}k^{q^4+1} + 2 B^{q^3} k+ B^{q^4}k^{2q^4+q^3+q+2}+ 
    2 B^{q^4}k^{q^4+q+1} + 2 B^{q^4}k^{q^4+q^3+1} + B^{q^4}&=0.
    \end{cases}$$

Also, the determinant of the Dickson matrix with respect to $B$ of the last linearized polynomial in the above system factorizes as 
    \begin{eqnarray*}
        \Big(\NN_{q^5/q}(k^{1+q+q^2}-1)\Big)\Big(\NN_{q^5/q}(k)+1\Big)\Big(k^{1+q+q^{2}+q^{3}+q^{4}}+\xi k^{1+q+q^{2}+q^{3}}+\xi k^{1+q+q^{2}+q^{4}}+2 k^{1+q+q^{2}}\\+\xi k^{1+q+q^{3}+q^{4}}+2 k^{1+q+q^{4}}+k^{1+q}+\xi k^{1+q^{2}+q^{3}+q^{4}}+2 k^{1+q^{3}+q^{4}}+k^{1+q^{4}}+\xi^5 k+\xi k^{q+q^{2}+q^{3}+q^{4}}\\+2 k^{q+q^{2}+q^{3}}+k^{q+q^{2}}+\xi^5 k^{q}+2 k^{q^{2}+q^{3}+q^{4}}+k^{q^{2}+q^{3}}+\xi^5 k^{q^{2}}+ k^{q^{3}+q^{4}}+\xi^5 k^{q^{3}}+\xi^5 k^{q^{4}}+2)^{1+q^5}\Big),
    \end{eqnarray*}
where $\xi \in \mathbb{F}_{9}$ satisfies $\xi^2 + 2\xi + 2=0$. 

Suppose that the above quantity is different from zero. Then the unique solution is $B=0$ and from 
$$A k^{q^4+q+1} + 2 A + 2 B^{q^2} k^{q^4+q+1} + B^{q^2} k^{q} + 2 B^{q^3} k^{q+1} + B^{q^3}=0$$
we obtain $A=0$ since $k^{q^4+q+1}=1$ and $\NN_{q^5/q}(k)=1$ would yield $k=1$, a contradiction. 

This means that the determinant of the Dickson matrix must vanish. 

The first 2 factors are different from $0$ since, together with $\NN_{q^5/q}(k)=1$, they yield  a contradiction to  our assumptions on $k$. Since  $\mathbb{F}_{9}\not \leq \mathbb{F}_{q^5}$, if the last factor, say $g(k)$ for some $g(Y)\in\F_{q^5}[Y]$, is zero then
$$g(k)+(g(k))^{q^5}=0=g(k)-(g(k))^{q^5},$$
i.e., 
$$
\begin{cases}
    2 k^{1+q+q^{2}+q^{3}+q^{4}}+ k^{1+q+q^{2}+q^{3}}+ k^{1+q+q^{2}+q^{4}}+ k^{1+q+q^{2}}+ k^{1+q+q^{3}+q^{4}}+ k^{1+q+q^{4}}+2 k^{1+q}\\
    \hspace{1 cm}+ k^{1+q^{2}+q^{3}+q^{4}}+ k^{1+q^{3}+q^{4}}+2 k^{1+q^{4}}+2 k+ k^{q+q^{2}+q^{3}+q^{4}}+ k^{q+q^{2}+q^{3}}+2 k^{q+q^{2}}+2 k^{q}\\
    \hspace{1 cm}+ k^{q^{2}+q^{3}+q^{4}}+2 k^{q^{2}+q^{3}}+2 k^{q^{2}}+2 k^{q^{3}+q^{4}}+2 k^{q^{3}}+2 k^{q^{4}}+1=0,\\
k^{1+q+q^{2}+q^{3}}+ k^{1+q+q^{2}+q^{4}}+ k^{1+q+q^{3}+q^{4}}+ k^{1+q^{2}+q^{3}+q^{4}}+2 k+ k^{q+q^{2}+q^{3}+q^{4}}\\
\hspace{1 cm}+2 k^{q}+2 k^{q^{2}}+2 k^{q^{3}}+2 k^{q^{4}}=0.
\end{cases}$$
Combining them with $\NN_{q^5/q}(k)=1$ we obtain 
$$
\begin{cases}
  s_1(k):=2 k^{2+2q+2q^{2}+2q^{3}}+2 k^{2+2q+2q^{2}+q^{3}}+k^{2+2q+q^{2}+q^{3}}+k^{2+q+q^{2}+q^{3}}+2 k^{1+2q+2q^{2}+2q^{3}}\\
  \hspace{1 cm}+k^{1+2q+2q^{2}+q^{3}}+k^{1+2q+q^{2}+q^{3}}+k^{1+q+2q^{2}+2q^{3}}  +k^{1+q+2q^{2}+q^{3}}+k^{1+q+q^{2}+2q^{3}}+2 k^{1+q+q^{2}}\\
  \hspace{1 cm}+2 k^{1+q+q^{3}}+2 k^{1+q}+2 k^{1+q^{2}+q^{3}}+2 k^{1+q^{3}}+k+2 k^{q+q^{2}+q^{3}}+2 k^{q^{2}+q^{3}}+k^{q^{3}}+1=0\\
        s_2(k):=2 k^{2+2q+2q^{2}+2q^{3}}+k^{2+q+q^{2}+q^{3}}+k^{1+2q+q^{2}+q^{3}}+k^{1+q+2q^{2}+q^{3}}+k^{1+q+q^{2}+2q^{3}}+2 k^{1+q+q^{2}}\\
        \hspace{1 cm}+2 k^{1+q+q^{3}}+2 k^{1+q^{2}+q^{3}}+2 k^{q+q^{2}+q^{3}}+1=0,\\
\end{cases}
$$
and thus eliminating $k^{2q^3}$ from the two equations above, we obtain $t(k)=0$.

Getting $k^{q^3}$ from $t(k)=0$ and substituting it in $s_2(k)$ we obtain 
$$(k-1)^{1+q^2} (k^{q+1}-1)^{q+1}(k^{q^2+q+1}-1)u_1(k)u_2(k)=0,$$
and the claim follows, since $(k-1)^{1+q^2} (k^{q+1}-1)^{q+1}(k^{q^2+q+1}-1)=0$ implies $k=1$. 
\end{proof}

The following proposition deals with the case (C4). 
\begin{proposition}\label{Prop:C4}
    Let $q=3^{2h+1}$, $h>1$. If $U_1$ is scattered and of dimension $4$ and it is contained in $$\ZZ_{k}:= \{(x,k(x^{q}+x^{q^3})+x^{q^2}+x^{q^4}) : x\in \mathbb{F}_{q^5}\},$$
with $\NN_{q^5/q}(k)=1$,  then $\ZZ_{k}$ is not scattered.
\end{proposition}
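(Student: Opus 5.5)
By Proposition \ref{Prop:Zk_1}, the containment of $U_1$ in (a $\Gamma\mathrm{L}$-image of) $\ZZ_k$ with $\NN_{q^5/q}(k)=1$ forces $t(k)=0$ and $u_1(k)u_2(k)=0$. The plan is to use these two polynomial relations to pin $k$ down to a small explicit set, and then show that $\ZZ_k$ fails to be scattered for every $k$ in that set. We would carry out the case split $u_1(k)=0$ and $u_2(k)=0$ separately. The case $k=1$ is handled directly: $\ZZ_1=\{(x,x^q+x^{q^2}+x^{q^3}+x^{q^4})\}$ is equivalent to $\{(x,\Tr_{q^5/q}(x)-x)\}$, and the point $\langle(1,-1)\rangle_{\F_{q^5}}$ meets it in the whole $4$-dimensional trace-zero hyperplane, so $\ZZ_1$ is not scattered.

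In each case we use the relation to express a Frobenius conjugate of $k$ rationally in the lower ones. Since $u_1$ and $u_2$ involve only $k,k^q,k^{q^2}$, we solve for $k^{q^2}$; note that $u_1$ is quadratic in $k^{q^2}$, so this step may need a discriminant or a resultant. We then substitute into $t(k)=0$, which involves $k^{q^3}$ linearly after the factor $k^{1+q+q^2}$ is removed, and solve for $k^{q^3}$. Next we apply Frobenius to propagate these expressions and obtain $k^{q^4}$. Finally we impose consistency with $(k^{q^4})^q=k$ and $\NN_{q^5/q}(k)=1$, taking resultants to eliminate $k^q$.

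The expected outcome is a univariate polynomial in $k$ over $\F_3$ (or $\F_9$). Its roots lie either in a small field such as $\F_3$, $\F_9$ or $\F_{3^5}$, or they are incompatible with $\NN_{q^5/q}(k)=1$. Because $q=3^{2h+1}$, we have $\F_9\not\le\F_{q^5}$, and this kills the roots in $\F_9\setminus\F_3$, exactly as was done in Propositions \ref{Prop:C3} and \ref{Prop:Zk_1}. The assumption $h>1$ (so $q\ge 3^5$) should be used to discard spurious roots that only occur for small $q$. It also allows the argument that the polynomial identities in $x$ have degree below the number of their roots.

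For each surviving $k$, which we expect to be essentially $k\in\F_q$, e.g. $k=1$ or $k=-1$, we exhibit directly a point of weight at least $2$ in $\ZZ_k$. Concretely, for $k\in\F_q$ the second coordinate is $k(x^q+x^{q^3})+x^{q^2}+x^{q^4}$, and we look for $m$ such that $mx+k(x^q+x^{q^3})+x^{q^2}+x^{q^4}$ has a kernel of dimension at least $2$. For instance, on $\F_q$ itself the map is multiplication by $m+2k+2$. We would check the Dickson-matrix rank of this map, in the spirit of Proposition \ref{Prop:Cond}, to find such an $m$. The alternative conclusion is that, because $U_1$ is scattered of dimension $4$, no such $k$ survives at all.

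The main obstacle is the elimination step: controlling the resultants of $u_i$ and $t$ under the Frobenius closure without an explosion of spurious factors. We would first try the substitution chain above, and keep removing the factors already shown nonvanishing, such as $k-1$, $k^{q+1}-1$ and $k^{q^2+q+1}-1$.
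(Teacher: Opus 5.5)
\textbf{The gap is in your first step.} The relations $t(k)=0$ and $u_1(k)u_2(k)=0$ cannot pin $k$ down to a small explicit set.

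In the proof of Proposition~\ref{Prop:Zk_1} these relations come solely from the vanishing of the last factor $g(k)$ of the Dickson determinant. The two components $g(k)\pm g(k)^{q^5}$ are cyclically symmetric in $k,k^q,\dots,k^{q^4}$, hence $\F_q$-valued. Using $\NN_{q^5/q}(k)=1$, they reduce to just
\[
\Tr_{q^5/q}(k)=\Tr_{q^5/q}(k^{-1}),\qquad \Tr_{q^5/q}(k^{1+q})=\Tr_{q^5/q}(k^{1+q+q^2}).
\]
These are two $\F_q$-conditions on the roughly $q^4$ elements of norm one. The relations $t=0$ and $u_1u_2=0$ are consequences of them, obtained by eliminating $k^{q^4}$ via the norm and then $k^{q^3}$.

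So the admissible $k$ form a two-parameter family: one expects about $q^2$ values of $k$, not confined to any fixed subfield. In your chain (solve $u_1$ for $k^{q^2}$, $t$ for $k^{q^3}$, propagate by Frobenius, impose $k^{q^5}=k$ and the norm) the consistency relations are compatible on this whole family. The elimination therefore cannot end in a nonzero univariate polynomial in $k$.

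Your guesses about the survivors are also off. Since $5\nmid 3^{2h+1}-1$, the only $k\in\F_q$ of norm one is $k=1$, and $k=-1$ has norm $-1$. The ``alternative conclusion'' that no $k\neq 1$ survives is unsupported.

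\textbf{What is actually needed} is an argument that is uniform in $k$. Your treatment of $k=1$ is fine. For every other admissible $k$ you must produce $m\in\F_{q^5}$ with $\rank\bigl(mx+k(x^q+x^{q^3})+x^{q^2}+x^{q^4}\bigr)\le 3$, that is, a kernel of dimension at least $2$. Your only concrete device is to restrict to $x\in\F_q$, where the map is multiplication by $m+2k+2$. That gives a kernel of dimension $1$, which does not contradict scatteredness, and it only applies to $k\in\F_q$ anyway.

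The paper proceeds as follows:
\begin{itemize}
\item It expresses the rank-$3$ condition through $\det(M)=0$ and the vanishing of a $4\times 4$ minor, and reduces this to $h_1,\dots,h_4$, with $h_1,h_2,h_3$ linear in $m^{q^4},m^{q^3},m^{q^2}$.
\item In normal-basis coordinates the solutions then form an $\F_q$-rational curve birational to the plane quartic $\widetilde{h_4}(m_0,m_1)=0$.
\item It proves this curve is absolutely irreducible for all admissible $k$ and bounds its genus by $1$.
\item Hasse--Weil then gives an $\F_q$-rational point. This is where $h>1$ is really used, not to discard spurious roots.
\end{itemize}
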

\begin{proof}
By Proposition~\ref{Prop:Zk_1}, we can assume that $u_1(k)u_2(k)=t(k)=0$. Note that for $k=1$, the set $\ZZ_k$ is not scattered. In what follows, we consider $k \neq 1$. We will prove that for all $k \neq 1$ satisfying $u_1(k)u_2(k)=t(k)=0$, the set $\ZZ_k$ is not scattered. To this end, we will show that there exists an element $m \in \F_{q^5}$ such that
\begin{equation}\label{rankZk}
    \rank(mx+k(x^q+x^{q^3})+x^{q^2}+x^{q^4})=3.
\end{equation}
Consider the Dickson matrix $M$ associated with the linearized polynomial in \eqref{rankZk}:
\begin{equation*}
    M=\begin{pmatrix}
        m&k&1&k&1\\
        1&m^q&k^q&1&k^q\\
        k^{q^2}&1&m^{q^2}&k^{q^2}&1\\
        1&k^{q^3}&1&m^{q^3}&k^{q^3}\\
        k^{q^4}&1&k^{q^4}&1&m^{q^4}
    \end{pmatrix}.
\end{equation*}
We have that \eqref{rankZk} holds if and only if $\det(M)=0$ and $\det(M^{2,3,4,5}_{1,2,3,4})=0$ \cite{MR4038645}. These conditions, together with their Frobenius conjugates, yield six equations in the variables $m_i \in \F_q$, where $m = \sum_{i=0}^4 m_i \xi^{q^i}$ and $\{\xi, \xi^q, \dots, \xi^{q^4}\}$ is a normal basis of $\F_{q^5}$ over $\F_q$. These equations define a variety over $\F_q$, parametrized by $k \in \F_{q^5}$. Our aim is to prove, for each $k$ satisfying our constraints, the existence of a solution $(m_0, m_1, m_2, m_3, m_4) \in \F_q^5$ to these six equations.

Our proof is divided into several steps.
\begin{enumerate}
    \item First, we show that there is a smaller set of equations, $\{h_1, h_2, h_3, h_4\}$, whose solutions are also solutions to the original system in terms of $(m_0, m_1, m_2, m_3, m_4) \in \F_q^5$.
    
    \item Let $\mathcal{V}$ be the variety defined by $\{h_1, h_2, h_3, h_4\}$. We consider the affine transformation
   \begin{equation}\label{Eq:phi}
        \phi(m_0,m_1,m_2,m_3,m_4)=\Big(\sum_{i=0}^4 \xi^{q^i} m_i, \dots, \sum_{i=0}^4 \xi^{q^{i+4}}m_{i+4 \pmod 5}\Big),
    \end{equation}
    and the corresponding variety $\mathcal{W}$. This projectivity preserves the absolute irreducibility of varieties but is not defined over $\F_q$.
    
    \item We prove that $\mathcal{W}$ is absolutely irreducible for each $k$ satisfying our constraints, and thus so is $\mathcal{V}$. Furthermore, $\mathcal{W}$ is fixed by the shift $\psi$ that sends each $m_i$ to $m_{i+1 \pmod 5}$ and raises the coefficients to the power $q$. This shows that $\mathcal{V}$ is $\F_q$-rational (i.e., fixed by the Frobenius map $\phi_q$).
\end{enumerate}
    
First, we consider the following set of equations $\{h_1(m),h_2(m),h_3(m),h_4(m)\}$, where 

\begin{eqnarray*}
    h_1(m) &:=&2 m_{q^4+q+1} + k^{q^4+q^3}m^{q+1} + k^{q^3+q}m^{q^4+1}+ 2 k^{q^4+q^3+q}m + 2 k^{q^3+q}m  \\
    &&+k^qm+ m^{q^4+q}+ 2 k^{q^4+q^3}m^q + 
    2 k^{q+1}m^{q^4} + k m^{q^4} + 2k^{q^3} m^{q^4} \\
    &&+ k^{q^4+q^3+q+1} + 2k^{q^4+q^3+1}+ k^{q^4+q^3} + k^{q^3} + 2;\\
    h_2(m) &:=&2 m^{q^3+q^2+q}+ k^{q^3+1}m^{q^2+q} + m^{q^3+q} + 2km^{q} + 2 k^{q^3+q^2}m^{q} + k^{q^2} m^q\\ &&+ k^{q+1}m^{q^3+q^2} + 2 k^{q^3+q^2+1}m^{q^2} + 
    2 k^{q^3+1}m^{q^2}+ k^{q^3}m^{q^2}\\ 
    &&+ 2k^{q+1} m^{q^3} + k^{q^3+q^2+q+1}+ 2 k^{q^2+q+1} + k^{q+1} + k + 2;\\
    h_3(m) &:=& 2 ^{q^2+q+1} + k^{q^4+q^2} m^{q+1} + m^{q^2+1} + 2k^{q^2+q} m + k^q m + 2 k^{q^4}m\\ 
    &&+ k^{q^4+1}m^{q^2+q} + 2 k^{q^4+q^2+1}m^q + 
    2k^{q^4+q^2} m^q+ k^{q^2}m^q + 2k^{q^4+1} m^{q^2}\\
    &&+ k^{q^4+q^2+q+1} + 2 k^{q^4+q+1} + k^{q^4+1}+ k^{q^4}+ 2;\\
    h_4(m) &:=& (k^{q^4+q^3+q^2} + 2) m^{2q+2}
    + (2  k^{q^4+q^3+q^2+q} + 2 k^{q^4+q^3+q^2} + 
    k^{q^3+q} +  k^{q} + 2k^{q^4+q^3} +1)m^{2q+1}\\
    &&+ (k^{q^3+q^2+2q} +  2 k^{q^3+2q} + k^{q^4+q^3+q} + 2 k^q)m^2 + (2 k^{q^4+q^3+q^2+1}+  k^{q^4+1} + 2k^{q^4+q^3+q^2}\\
    &&+ 2 k^{q^4+q^2} + k^{q^2} + 1)m^{q+2} 
    + (2  k^{q^4+q^3+q^2+q+1} + k^{q^3+q^2+q+1} + k^{q^4+q^2+q+1}+ 2 k^{q^2+q+1}\\
    &&+ 2k^{q^4+q^3+q+1} + 2 k^{q^4+q+1} + 2 k^{q+1} + 
    2 k^{q^4+q^3+q^2+1} + k^{q^4+q^3+1}+ 2 k^{q^4+1} +  k\\
    &&+ k^{q^4+q^3+q^2+q}+ 2 k^{q^3+q^2+q}+  k^{q^2+q} + 2 k^q+ k^{q^4+q^3+q^2}+  k^{q^3+q^2} + 2 k^{q^2} +  k^{q^4+q^3} + 2 k^{q^3}\\
    &&+k^{q^4} + 1)m^{q+1} + (2 k^{q^4+q^3+q^2+2q+1}+2 k^{q^4+q^3} + 2 k^{q^2+2q+1} + 
    k^{q^4+q^3+2q+1} + k^{2q+1}\\
    &&+ k^{q^2+q+1} + 2 k^{q^4+q^3+q+1} +    2 k^{q^3+q+1} + k^{q+1} + k^{q^4+q^3+1}+ 2 k + 2k^{q^3+q^2+q} + 
    2 k^{q^4+q^3+q}\\
    &&+ 2k^{q^3+q} + 2 k^{q^4+q^3} + 1)m + ( k^{q^4+q^3+q^2+1} + 
    2 k^{q^4+1} +  k^{q^4+q^2} + 2 k^{q^2})m^{2q}\\
    &&+ (2 k^{q^4+q^3+q^2+q+2} + k^{q^4+q+2} + k^{q^4+q^3+q^2+2} + 2 k^{q^4+2}+ k^{q^4+q^2+q+1}+ 
    k^{q^2+q+1} + k^{q^4+q+1}\\
    &&+ 2 k^{q^4+q^3+q^2+1} + 2 k^{q^3+q^2+1}+ k^{q^4+q^2+1} + k^{q^4+1} + 2 k^{q^4+q^3+q^2} + k^{q^2+q^3} + 2 k^{q^4}+ 1)m^q\\
    &&+ k^{q^4+q^2+2q+2}+ 
    2 k^{q^4+2q+2} + 2 k^{q^4+q^2+q+2} + k^{q^4+q^3+q+2} + 2 k^{q^4+q^3+2} + k^{q^4+2}\\
    &&+ k^{q^4+q^3+q^2+q+1}+ 2 k^{q^4+q^3+q+1} + k^{q^4+q+1} + 2 k^{q+1} + k^{q^3+1} + 2 k^{q^4+1}+ k^{q^4+q^3} + 2 k^{q^3}.
\end{eqnarray*}
A direct computation using MAGMA confirms that if $m$ is a common root of the four preceding polynomials, then the rank of the matrix $M$ is less than 4. This can be verified by noting that the polynomials $h_1, h_2,$ and $h_3$ are linear in $m^{q^4}, m^{q^3},$ and $m^{q^2}$, respectively. One can therefore solve for these terms and substitute the resulting expressions into the conditions $\det(M)=0$ and $\det(M^{2,3,4,5}_{1,2,3,4})=0$. After making these substitutions, the two resulting polynomials in $m$ and $m^q$ are found to be divisible by $h_4$. This accomplishes Task (a).

We now establish the existence of an element $m \in \mathbb{F}_{q^5}$ that is a common zero of the polynomials $h_1, h_2, h_3,$ and $h_4$.

To this end, we employ an approach based on algebraic varieties. We represent an element $m \in \mathbb{F}_{q^5}$ using a normal basis $\{\xi, \xi^q, \dots, \xi^{q^4}\}$, such that $m = \sum_{i=0}^4 m_i \xi^{q^i}$ with coordinates $m_i \in \mathbb{F}_q$. The four equations $h_j(m_0, m_1, m_2, m_3, m_4)=0$ for $j=1,2,3,4$ define a variety $\mathcal{V} \subseteq \mathbb{A}^5(\mathbb{F}_{q^5})$ over $\mathbb{F}_{q^5}$. Let $\mathcal{W}=\phi(\mathcal{V})$, where $\phi$ is defined in \eqref{Eq:phi}.

The Frobenius automorphism $\Phi_q: x \mapsto x^q$ acts on an element $m$ by mapping it to $\sum_{i=0}^4 m_i \xi^{q^{i+1}}$, which induces a cyclic permutation on its coordinates. Let $\psi$ be the corresponding action on the polynomial ring, which raises the coefficients of polynomials to the $q$-th power and cyclically permutes the variables $m_i$.

To prove that the variety $\mathcal{V}$ is defined over $\mathbb{F}_q$, we must show that the defining ideal of $\mathcal{W}=\phi(\mathcal{V})$ is invariant under the action of $\psi$. 

Let $\widetilde{f_i} = h_i(\phi(m_0, m_1, m_2, m_3, m_4))$. To achieve our goal, it is sufficient to verify the following conditions:
\begin{itemize}
    \item $\psi(\widetilde{h_1}) = \widetilde{h_3}$
    \item $\psi(\widetilde{h_3}) = \widetilde{h_2}$
    \item The polynomials $\psi(\widetilde{h_2})$ and $\psi(\widetilde{h_4})$ are multiples of $\widetilde{h_4}$ modulo the ideal $\langle \widetilde{h_1}, \widetilde{h_2}, \widetilde{h_3} \rangle$.
\end{itemize}

The above conditions can be easily verified by using MAGMA.

The next step is to prove that $\mathcal{W}$ is absolutely irreducible. We begin by noting that the equations $\widetilde{h_1}=0, \widetilde{h_2}=0,$ and $\widetilde{h_3}=0$ are of degree one in the variables $m_2, m_3,$ and $m_4$. This linearity allows us to eliminate these variables, thereby establishing a birational equivalence between $\mathcal{W}$ and a plane curve defined over $\mathbb{F}_{q^5}$ in the variables $m_0$ and $m_1$. Note that $\widetilde{h_4}(m_0,m_1)$ can be written as $\sum_{i,j\leq 2}a_{i,j}m_0^im_1^j$, where 
\begin{eqnarray*}
    a_{2,2}&:=&k^{q^4+q^3+q^2} + 2;\\
    a_{2,1}&:=& 2  k^{q^4+q^3+q^2+q} + 2 k^{q^4+q^3+q^2} + 
    k^{q^3+q} +  k^{q} + 2k^{q^4+q^3} +1;\\
    a_{2,0}&:=& k^{q^3+q^2+2q} +  2 k^{q^3+2q} + k^{q^4+q^3+q} + 2 k^q;\\
    a_{1,2}&:=&2 k^{q^4+q^3+q^2+1}+  k^{q^4+1} + 2k^{q^4+q^3+q^2}+ 2 k^{q^4+q^2} + k^{q^2} + 1;\\
    a_{1,1}&:=&2  k^{q^4+q^3+q^2+q+1} + k^{q^3+q^2+q+1} + k^{q^4+q^2+q+1}+ 2 k^{q^2+q+1}+ 2k^{q^4+q^3+q+1}+ 2 k^{q^4+q+1}\\
    &&  + 2 k^{q+1} + 
    2 k^{q^4+q^3+q^2+1} + k^{q^4+q^3+1}+ 2 k^{q^4+1} +  k+ k^{q^4+q^3+q^2+q}+ 2 k^{q^3+q^2+q}\\
    &&+  k^{q^2+q} + 2 k^q+ k^{q^4+q^3+q^2}+  k^{q^3+q^2} + 2 k^{q^2} +  k^{q^4+q^3} + 2 k^{q^3}+k^{q^4} + 1;\\
    a_{1,0}&:=&2 k^{q^4+q^3+q^2+2q+1}+2 k^{q^4+q^3} + 2 k^{q^2+2q+1} + 
    k^{q^4+q^3+2q+1} + k^{2q+1}+ k^{q^2+q+1} \\
    &&+ 2 k^{q^4+q^3+q+1} +    2 k^{q^3+q+1} + k^{q+1} + k^{q^4+q^3+1}+ 2 k + 2k^{q^3+q^2+q} + 
    2 k^{q^4+q^3+q}\\
    &&+ 2k^{q^3+q} + 2 k^{q^4+q^3} + 1;\\
    a_{0,2}&:=&k^{q^4+q^3+q^2+1} + 
    2 k^{q^4+1} +  k^{q^4+q^2} + 2 k^{q^2};\\
    a_{0,1}&:=& 2 k^{q^4+q^3+q^2+q+2} + k^{q^4+q+2} + k^{q^4+q^3+q^2+2} + 2 k^{q^4+2}+ k^{q^4+q^2+q+1}+ 
    k^{q^2+q+1} + k^{q^4+q+1}\\
    &&+ 2 k^{q^4+q^3+q^2+1} + 2 k^{q^3+q^2+1}+ k^{q^4+q^2+1} + k^{q^4+1} + 2 k^{q^4+q^3+q^2} + k^{q^2+q^3} + 2 k^{q^4}+ 1;\\
    a_{0,0}&:=& k^{q^4+q^2+2q+2}+ 
    2 k^{q^4+2q+2} + 2 k^{q^4+q^2+q+2} + k^{q^4+q^3+q+2} + 2 k^{q^4+q^3+2} + k^{q^4+2}\\
    &&+ k^{q^4+q^3+q^2+q+1}+ 2 k^{q^4+q^3+q+1} + k^{q^4+q+1} + 2 k^{q+1} + k^{q^3+1} + 2 k^{q^4+1}+ k^{q^4+q^3} + 2 k^{q^3}.
\end{eqnarray*}

Suppose that $(m_1-\lambda)$ is a factor of $\widetilde{f_4}(m_0,m_1)$ for some $\lambda\in \overline{\mathbb{F}_q}$. Then $\widetilde{h_4}(m_0,\lambda)$ must be the zero polynomial in $m_0$. By inspecting the coefficient of $m_0^2$, we deduce that either
$$\lambda=k^q \quad \textrm{or} \quad \lambda=\frac{k^{q^3+q^2+q}+ 2k^{q^3+q}+ k^{q^4+q^3}+ 2}{k^{q^4+q^3+q^2} + 2}.$$
Note that, since $\NN_{q^5/q}(k)=1$, the denominator above vanishes only if $k=1$.

\begin{itemize}
    \item In the former case, we substitute $m_1=k^q$ into $\widetilde{f_4}(m_0,m_1)$. Setting the coefficient of $m_0$ to zero yields
    $$(k^{q^3+q}+ k^{q^4+q}+ k^q + 2k^{q^4+q^3} + 1)(k^{q^2+q} + 2)(k + 2)=0.$$
    The last two factors yield $k=1$, so we are left with the condition $k^{q^3+q}+ k^{q^4+q}+ k^q + 2k^{q^4+q^3} + 1=0$. It can be verified that no $k\in \mathbb{F}_{q^5}$ satisfying $\NN_{q^5/q}(k)=1$, $k\neq 1$, $u_1(k)u_2(k)=0$, and $t(k)=0$ also satisfies this last condition.
    
    \item In the latter case, we again substitute the expression for $m_1$ into $\widetilde{h_4}(m_0,m_1)$ and consider the coefficient of $m_0$. Setting this coefficient to zero implies that either $k=1$ or
    \begin{multline*}
     k^{q^{4}}+k^{1+2q+2q^{2}+3q^{3}+3q^{4}}+ k^{1+2q+2q^{2}+3q^{3}+2q^{4}}+ k^{1+2q+2q^{2}+2q^{3}+q^{4}}+ k^{1+2q+q^{2}+3q^{3}+2q^{4}}\\
     +2 k^{q^{3}}+2 k^{1+2q+q^{2}+3q^{3}+q^{4}}+2 k^{1+2q+q^{2}+2q^{3}+2q^{4}}+2 k^{1+2q+q^{2}+2q^{3}+q^{4}}+ k^{1+2q+q^{2}+2q^{3}}\\
     +2 k^{1+2q+q^{2}+q^{3}+q^{4}}+2 k^{1+2q+q^{2}+q^{3}}+2 k^{1+2q+q^{3}}+ k^{1+2q}+ k^{1+q+2q^{2}+3q^{3}+2q^{4}}\\
     +2 k^{1+q+2q^{2}+2q^{3}+2q^{4}}
     +2 k^{1+q+q^{2}+3q^{3}+3q^{4}}+ k^{1+q+q^{2}+3q^{3}+2q^{4}}+2 k^{1+q+q^{2}+2q^{3}+3q^{4}}\\
     +2 k^{1+q+q^{2}+2q^{3}+2q^{4}}
     + k^{1+q+q^{2}+q^{3}+2q^{4}}+ k^{1+q+q^{2}+q^{3}+q^{4}}+2 k^{1+q+2q^{3}+2q^{4}}+ k^{1+q+2q^{3}+q^{4}}\\
     + k^{1+q+q^{3}+2q^{4}}
     +2 k^{1+q+q^{4}}+ k^{1+2q^{3}+3q^{4}}+2 k^{1+2q^{3}+2q^{4}}
     +2 k^{1+q^{3}+2q^{4}}+ k^{1+q^{3}+q^{4}}\\+2 k^{2q+2q^{2}+3q^{3}+2q^{4}}+2 k^{2q+2q^{2}+3q^{3}+q^{4}}
     +2 k^{2q+2q^{2}+2q^{3}}+2 k^{2q+q^{2}+3q^{3}+q^{4}}+ k^{2q+q^{2}+2q^{3}+q^{4}}\\+2 k^{2q+q^{2}+2q^{3}}
     + k^{2q+q^{2}+q^{3}}+ k^{2q+2q^{3}}+2 k^{2q+q^{3}}+2 k^{q+2q^{2}+3q^{3}+q^{4}}
     + k^{q+2q^{2}+2q^{3}+q^{4}}\\+ k^{q+q^{2}+3q^{3}+2q^{4}}+ k^{q+q^{2}+3q^{3}+q^{4}}+ k^{q+q^{2}+2q^{3}+2q^{4}}
     +2 k^{q+q^{2}+2q^{3}+q^{4}}+ k^{q+q^{2}+2q^{3}}\\\ +2 k^{q+q^{2}+q^{3}+q^{4}}+ k^{q+q^{2}+q^{3}}
     +2 k^{q+2q^{3}+q^{4}}+2 k^{q+2q^{3}}+ k^{q+q^{3}+q^{4}}+2 k^{q+q^{3}}+2 k^{q}\\
     +2 k^{q^{2}+3q^{3}+2q^{4}}+ k^{q^{2}+2q^{3}+2q^{4}}+ k^{q^{2}+2q^{3}+q^{4}}+2 k^{q^{2}+q^{3}+q^{4}}
     + k^{2q^{3}+q^{4}}+2 k^{q^{3}+2q^{4}} =0.
    \end{multline*}
    As in the previous case, it can be verified that no $k\in \mathbb{F}_{q^5}$ satisfying $\NN_{q^5/q}(k)=1$ and $k\neq 1$, along with $u_1(k)u_2(k)=0$ and $t(k)=0$, satisfies this final equation.
\end{itemize}
Since a similar argument holds for factors of the form $(m_0-\lambda)$, where $\lambda \in \overline{\mathbb{F}_q}$, we conclude that $\widetilde{h_4}(m_0,m_1)$ has no linear (degree-one) factors. Therefore, if $\widetilde{h_4}(m_0,m_1)$ is not absolutely irreducible, it must split into two quadratic (degree-two) factors. Furthermore, both quadratic factors must be linear in each variable, $m_0$ and $m_1$. Otherwise, a factor depending on only one variable would have to exist, which contradicts the previous conclusion.
Thus, the only possibility is that the two factors are
\begin{align*}
j_1(m_0,m_1) &:= (m_1 + 2 k^q)m_0+A m_1+B, \\
j_2(m_0,m_1) &:= ((k^{q^4+q^3+q^2} + 2)m_1+2k^{q^3+q^2+q}+ k^{q^3+q}+ 2k^{q^4+q^3}+ 1 ) m_0+C m_1+D,
\end{align*}
for some $A,B,C,D \in \overline{\mathbb{F}_{q}}$.

The polynomial $J(m_0,m_1):=\widetilde{h_4}(m_0,m_1)-j_1(m_0,m_1)j_2(m_0,m_1)$ must vanish identically. By setting the coefficients of $m_0$ and $m_0m_1^2$ in $J(m_0,m_1)$ to zero, we get:
\begin{align*}
-k^q D ={}& 2k^{1+2q+q^{2}+q^{3}+q^{4}}+2k^{1+2q+q^{2}}+ k^{1+2q+q^{3}+q^{4}}+ k^{1+2q}+ k^{1+q+q^{2}} \\
& +2k^{1+q+q^{3}+q^{4}} +2k^{1+q+q^{3}}+ k^{1+q}+ k^{1+q^{3}+q^{4}}+2k+2k^{q+q^{2}+q^{3}} \\
& +2k^{q+q^{3}+q^{4}} +2k^{q+q^{3}} +2k^{q^{3}+q^{4}}+1+(k^{q^3+q^2+q} + 2+k^{q^4+q^3}+2k^{q^3+q})B, \\
C ={}& (2k^{q^4+q^3+q^2}+1)A +2k^{q^4+q^3+q^2+1}+k^{q^4+1} + 2k^{q^4+q^3+q^2} + 2k^{q^4+q^2} + k^{q^2} +1.
\end{align*}
From the coefficients of $m_1^2$ and $m_0m_1$, we also obtain the conditions:
\begin{equation*}
(A + 1)(k^{q^4+q^3+q^2+1}+ 2k^{q^4+1} + k^{q^4+q^3+q^2}A + k^{q^4+q^2}+ 2k^{q^2} + 2 A)=0,
\end{equation*}
and
\begin{multline*}
(k^{q+q^{2}+q^{3}+q^{4}}+2k^{q+q^{2}+q^{3}}+ k^{q+q^{3}}+2k^{q}+2k^{q^{3}+q^{4}}+1)B \\
+(k^{2q+q^{2}+q^{3}+q^{4}}+2k^{2q+q^{2}+q^{3}}+ k^{2q+q^{3}}+2k^{2q}+2k^{q+q^{3}+q^{4}} + k^{q})A \\
+2k^{1+2q+q^{2}+q^{3}}+2k^{1+2q+q^{2}+q^{4}}+2k^{1+2q+q^{2}} + k^{1+q+q^{2}+q^{3}+q^{4}}+2k^{1+q+q^{2}} \\
+ k^{1+q+q^{3}} + k^{1+q+q^{4}}+ k^{1+q}+2k^{1+q^{3}+q^{4}}+ 2k+ k^{2q+q^{2}+q^{3}}+ k^{2q+q^{2}+q^{4}}+ k^{2q+q^{2}} \\
+2k^{q+q^{2}+q^{3}+q^{4}}+ k^{q+q^{2}}+2k^{q+q^{3}}+2k^{q+q^{4}}+2k^{q}+ k^{q^{3}+q^{4}}+2 =0.
\end{multline*}
The equation involving $A$ yields two possibilities. Either 
$$A=-1 \textrm{ or } A = - \frac{k^{q^4+q^3+q^2+1}+ 2k^{q^4+1} + k^{q^4+q^2}+ 2k^{q^2}}{k^{q^4+q^3+q^2} + 2}.$$ It can be verified that both cases lead to a contradiction when substituted into the other coefficient equations, given the conditions on $k$ (namely that $\NN_{q^5/q}(k)=1$, $k\neq 1$, $u_1(k)u_2(k)=0$, and $t(k)=0$).

This shows that the curve $\mathcal{V}$ is $\mathbb{F}_q$-rational and absolutely irreducible. An upper bound on its genus can be determined by examining the plane curve defined by $\widetilde{h_4}(m_0,m_1)=0$. This curve has degree four and at least two singular points (its points at infinity), so its geometric genus is at most one. Since the genus is a birational invariant (i.e., it is preserved by such transformations), and both $\mathcal{V}$ and $\mathcal{W}$ are birationally equivalent to this plane curve, we conclude that the genus of both $\mathcal{V}$ and $\mathcal{W}$ is at most one. Therefore, by the celebrated Hasse-Weil Theorem \cite{doi:10.1073/pnas.27.7.345}, the curve $\mathcal{V}$ has at least one affine $\mathbb{F}_q$-rational point (recalling that $q=3^{2h+1}$ and $h>1$). This implies that $\ZZ_k$ is not scattered, which concludes the proof.
\end{proof}

We can now prove our main result.

\begin{thm}\label{MainTh}
Let $q=3^{2h+1}$, with $h>1$. The subspace
$$ U_{1} := \{(x, x^{q} + x^{q^4}) : \Tr_{q^5/q}(x)=0\} \leq \mathbb{F}_{q^5}^2 $$
is maximally scattered.
\end{thm}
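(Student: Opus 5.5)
The proof will assemble the preceding propositions, arguing by contradiction. There are two things to show: that $U_1$ is scattered of dimension $4$, and that it cannot be enlarged.

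First, $U_1$ is a scattered $4$-dimensional subspace. For $\delta=1$ the polynomial $f_1(Y)$ has coefficients in $\F_3$. Indeed, $f_1(Y)=Y^2+(1-1+1-1-1)Y-1-1+1=Y^2-Y-1$, whose discriminant is $5$. Since $q=3^{2h+1}$ is an odd power of $3$, the element $5=2$ is a non-square in $\F_3$ and remains a non-square in $\F_q$. Because $\F_{q^5}$ is an odd-degree extension of $\F_q$, it is also a non-square in $\F_{q^5}$, so $f_1$ has no roots in $\F_{q^5}$. Proposition~\ref{Prop:Cond} then gives that $U_1$ is scattered of dimension $4$.

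Now suppose $U_1$ is not maximally scattered. Then it is properly contained in a scattered subspace $U'$ of dimension at least $5$. By the bound \eqref{eq:scattered_bound} with $k=2$ and $m=5$, we have $\dim U'\le 5$, so $U'$ is a maximum scattered subspace of $\F_{q^5}^2$. By Theorem~\ref{Th:Classification}, $U'$ is $\Gamma\mathrm{L}(2,q^5)$-equivalent to a member of one of the classes (C1)--(C4), so $U_1$ is contained, up to equivalence, in such a member. We rule out each class in turn:
\begin{itemize}
\item (C1) is excluded by Proposition~\ref{Prop:C1}, applied with $\delta=1$, which has $\NN_{q^5/q}(1)=1$.
\item (C2) is excluded by Proposition~\ref{Prop:C2}, since the parameter $\eta$ there satisfies $\NN_{q^5/q}(\eta)\neq 0,1$.
\item (C3) is excluded by Proposition~\ref{Prop:C3}, which covers the whole $\Gamma\mathrm{L}$-orbit of each $\WW_{\eta,\rho}$.
\item (C4) is excluded by Proposition~\ref{Prop:C4}. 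If $U_1$ were contained, up to equivalence, in some $\ZZ_k$, then, since $U_1$ is scattered of dimension $4$ and $h>1$, that $\ZZ_k$ would not be scattered. This contradicts $U'$ being scattered.
\end{itemize}
Every class is impossible, so $U_1$ is maximally scattered.

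The main obstacle is the equivalence bookkeeping in the containment step. Propositions~\ref{Prop:C1}--\ref{Prop:C3} are stated for containment up to $\Gamma\mathrm{L}$-equivalence, but Proposition~\ref{Prop:C4} is stated for literal containment in $\ZZ_k$. To apply it, I would let $\varphi\in\Gamma\mathrm{L}(2,q^5)$ map $U'$ onto some $\ZZ_k$, so that $\varphi(U_1)\le \ZZ_k$. The field automorphism part of $\varphi$ sends $U_1$ to a subspace of the same shape, and the proof of Proposition~\ref{Prop:Zk_1} already allows a matrix $\begin{pmatrix}A&B\\C&D\end{pmatrix}$. Hence that argument covers $\varphi(U_1)$ as well, and Proposition~\ref{Prop:C4} applies after reducing to the linear part. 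A second, minor point is the irreducibility of $Y^2-Y-1$ over $\F_{q^5}$, which follows from the odd-degree argument above.
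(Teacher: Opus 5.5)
Your proposal is correct and follows the paper's proof, which simply combines Proposition~\ref{Prop:Cond} (with $f_1(Y)=Y^2-Y-1$ having the non-square discriminant $5$ in $\F_{q^5}$) with the exclusion of the classes (C1)--(C4) via Propositions~\ref{Prop:C1}--\ref{Prop:C4}. Your equivalence bookkeeping for (C4) is sound and can be put even more simply: $U_1$ is fixed by every componentwise field automorphism, so $\Gamma\mathrm{L}$-containment reduces to the $\GL$-containment treated in Proposition~\ref{Prop:Zk_1}, and the proof of Proposition~\ref{Prop:C4} uses only that proposition's conclusion on $k$.
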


\begin{proof}
This directly follows from Propositions \ref{Prop:Cond}, \ref{Prop:C1}, \ref{Prop:C2}, \ref{Prop:C3}, and \ref{Prop:C4}.
\end{proof}

\section{Self-duality of the code $\mathcal{C}_\delta$}\label{Sec:DualCode}

Let $\mathcal{C}$ be an $[n,k]_{q^m/q}$ rank-metric code. Its dual code $\mathcal{C}^\perp$ is the $[n,n-k]_{q^m/q}$ code defined by the standard inner product on $\F_{q^m}^n$. It is a well-known geometric property (see \cite{BorelloPolverinoZullo2025,csajbok2021generalising}) that if $U$ is a $q$-system associated with $\mathcal{C}$, then the $q$-system associated with the dual code $\mathcal{C}^\perp$ is exactly the \emph{Delsarte dual} $\bar{U}$ of $U$. 

In our specific case, $U_\delta$ is a 4-dimensional $\F_q$-subspace of $\mathbb{F}_{q^5}^2$. The associated code $\mathcal{C}_\delta$ is a $[4,2]_{q^5/q}$ code, and its dual $\mathcal{C}_\delta^\perp$ is also a $[4,2]_{q^5/q}$ code. Computing the Delsarte dual $\bar{U}_\delta$ reduces directly to determining the generators of $\mathcal{C}_\delta^\perp$.

\begin{thm}\label{Thm:SelfDual}
Let $q$ be an odd prime power, $\delta \in \F_{q^5}$ with $\NN_{q^5/q}(\delta) = 1$. Let $\mathcal{C}_\delta$ be the $[4,2]_{q^5/q}$ code associated with the $q$-system
$$U_\delta := \{(x,\, x\q + \delta x\qqqq) : \Tr_{q^5/q}(x) = 0\} \subseteq \mathbb{F}_{q^5}^2.$$
Then $\mathcal{C}_\delta^\perp$ is equivalent to $\mathcal{C}_\delta$. Geometrically, this means $U_\delta$ is Delsarte self-dual.
\end{thm}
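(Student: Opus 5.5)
We will compute the dual code explicitly from a generator matrix of $\mathcal{C}_\delta$, and then exhibit an element of $\GL(2,q^5)$ (possibly combined with a field automorphism) mapping $U_\delta$ onto the $q$-system of $\mathcal{C}_\delta^\perp$.

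\textbf{Generator matrix of $\mathcal{C}_\delta$.} Fix an $\F_q$-basis $x_1,\dots,x_4$ of $\mX_1=\ker \Tr_{q^5/q}$. Then $\mathcal{C}_\delta$ has generator matrix
$$G=\begin{pmatrix} x_1 & \cdots & x_4\\ L(x_1) & \cdots & L(x_4)\end{pmatrix},\qquad L(x)=x^q+\delta x^{q^4}.$$
The dual $\mathcal{C}_\delta^\perp$ is the $[4,2]_{q^5/q}$ code of vectors $(c_1,\dots,c_4)$ with $\sum c_i x_i=0$ and $\sum c_i L(x_i)=0$.

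\textbf{Dual basis and the natural candidate for the dual.} We use the standard description of the Delsarte dual through the trace form. Let $x_1^*,\dots,x_4^*$ be elements of $\F_{q^5}$ such that $\Tr_{q^5/q}(x_i x_j^*)=\delta_{ij}$ modulo the extra relation coming from $\mX_1$. Concretely, complete $\{x_i\}$ with $x_5=1$ (note $\Tr(1)=5\ne 0$ in odd characteristic different from $5$; if $p=5$ we pick another complement) and take the trace-dual basis $\{x_i^*\}$. Since $\Tr(x_i x_5^*)=0$ for $i\le 4$, the $x_i^*$ for $i\le4$ span the complement $\{y:\Tr(y\cdot x_5)\ldots\}$; we track which hyperplane they span. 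The vectors $(y^{q^j}\text{-type expressions evaluated at } x_i^*)$ will then give the candidate rows.

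Using the adjoint of $L$ with respect to the trace form, namely $\hat L(y)=y^{q^4}+(\delta y)^{q}=y^{q^4}+\delta^q y^q$, the identity $\Tr(yL(x))=\Tr(x\hat L(y))$ shows that for suitable $a,b$ the columns $(\,\cdot\,)$ built from $x_i^*$ and $\hat L(x_i^*)$ are orthogonal to the rows of $G$. From this we will obtain that $\mathcal{C}_\delta^\perp$ is associated with a $q$-system of the form
$$\bar U=\{(y,\ y^{q^4}+\delta^q y^{q})+\text{correction}\ :\ y\in \mX'\},$$
where $\mX'$ is an $\F_q$-hyperplane of $\F_{q^5}$. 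The correction term comes from quotienting by the direction $x_5=1$ forced by $\Tr(x)=0$. Equivalently, we will realize $\bar U$ as $\{(\Tr\text{-adjusted } y,\ \hat L(y))\}$ and verify the orthogonality relations directly.

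\textbf{Equivalence with $U_\delta$.} We then transform $\bar U$ into $U_{\delta'}(\mX_\lambda)$, using two operations:
\begin{enumerate}
\item Apply the Frobenius $y\mapsto y^q$ (an element of $\Gamma\mathrm{L}$, which is allowed for code equivalence up to semilinear maps). It turns $y^{q^4}+\delta^q y^q$ into the shape $y+\delta^{q^2}y^{q^2}$ after reparametrizing, or more naturally swaps the roles of the two coordinates.
\item Swap the coordinates and rescale, using $z=y^{q^4}$. This gives $(z^q,\ z+\delta^q z^{q^2})$; renormalizing to the form $(w,\ w^q+\delta'' w^{q^4})$ is done by writing $w=\delta^{\ast}(z+\ldots)$.
\end{enumerate}
Finally, Proposition~\ref{U:equiv} moves any hyperplane $\mX_\lambda$ to $\mX_1$ while preserving $\NN_{q^5/q}(\delta')=1$. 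The key check is that the new parameter still has norm $1$, which follows because $\NN$ is invariant under Frobenius and under multiplication by $\lambda^{q-q^4}$.

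\textbf{Main obstacle.} The hardest step is the second one: correctly handling the restriction to $\ker\Tr$. The dual of a code whose $q$-system lies over a hyperplane $\mX_1$ is described over the quotient $\F_{q^5}/\langle 1\rangle$ rather than over a subspace, so the naive adjoint description must be corrected. We would first try to show that, modulo $\F_q\cdot 1$ in the domain, $\hat L$ still induces a well-defined map. This would use $\hat L(1)=1+\delta^q$ together with the second coordinate, and it likely requires the relation $x^{q^4}=-x-x^q-x^{q^2}-x^{q^3}$ on $\mX_1$ to replace $L$ by an equivalent $q$-polynomial whose adjoint vanishes suitably. Once the correct $\bar U$ is obtained as a set $\{(y,\ y^{q}+\epsilon y^{q^4}):y\in\mX_\mu\}$ with $\NN(\epsilon)=1$, Proposition~\ref{U:equiv} concludes the proof.
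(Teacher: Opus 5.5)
\textbf{There is a genuine gap: the proposal never identifies the $q$-system of $\mathcal{C}_\delta^\perp$, and the tool it proposes for this is the wrong one.} You leave this step as the ``main obstacle''.

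\emph{Why the adjoint does not help.} The identity $\Tr(yL(x))=\Tr(x\hat L(y))$ is $\F_q$-valued. It says that the graph of $\hat L$ is orthogonal to the graph of $L$ for the form $((a,b),(c,d))\mapsto\Tr(ad-bc)$ on $\F_{q^5}^2$. So it describes the orthogonal complement $U_\delta^{\perp'}$ (the duality of linear sets). Here that space has $\F_q$-dimension $10-4=6$, so it is not the Delsarte dual. Membership in $\mathcal{C}_\delta^\perp$ is given by the $\F_{q^5}$-valued equations $\sum_jc_jx_j=0=\sum_jc_j(x_j^q+\delta x_j^{q^4})$, and these are governed by Moore-matrix identities.

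\emph{Your candidate fails as stated.} Complete the basis by $x_5$ with $\Tr(x_5)\neq0$ and let $x_1^*,\dots,x_5^*$ be the trace-dual basis. Then $x_5^*=1/\Tr(x_5)\in\F_q$, since it is trace-orthogonal to $\ker\Tr$. Moreover $\sum_{j=1}^5x_j^{q^l}(x_j^*)^{q^i}$ equals $1$ if $i=l$ and $0$ otherwise. Already $\sum_{j\le4}x_j^*x_j=1-x_5/\Tr(x_5)\neq0$ for $q$ odd, so $(x_j^*)_j\notin\mathcal{C}_\delta^\perp$. The shape $\{(y,\hat L(y))+\text{correction}\}$, with the correction left unspecified, therefore says nothing.

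\emph{What does work with your dual-basis setup.} For $g(z)=\sum_ig_iz^{q^i}$ with $g(1)=0$ one gets $\sum_{j\le4}g(x_j^*)x_j^{q^l}=g_l$. Hence, by a dimension count, $\mathcal{C}_\delta^\perp=\{(g(x_j^*))_{j\le4}\}$ with $g$ in the span of $a(z)=z^{q^2}-z^{q^3}$ and $b(z)=-\delta z^q+(\delta-1)z^{q^3}+z^{q^4}$. These are exactly the $g$ with $g_0=0$, $g_1+\delta g_4=0$, $g(1)=0$. The evaluation map is injective on this family because $g(1)=0$ and $1\notin\langle x_1^*,\dots,x_4^*\rangle_{\F_q}$.

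Put $w=a(z)$; it runs over $\ker\Tr$ as $z$ runs over $\langle x_1^*,\dots,x_4^*\rangle_{\F_q}$. Then $z^q=z-w^{q^3}$, $z^{q^2}=z-w^{q^3}-w^{q^4}$, and so on. This gives $b(z)=-(w^q+\delta w+\delta w^{q^4})$, and $(w,b)\mapsto(w,-b-\delta w)$ sends the dual system onto $U_\delta$ itself.

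The paper reaches the same conclusion differently, via the cofactor vector $y$ of the last row of $M_4(x)$. The relations $\sigma_0(y)=\sigma_1(y)=\sigma_2(y)=0$ and $\sigma_3(y)=s\in\F_q^\times$ give $\mathcal{C}_\delta^\perp=\langle y^{(q^4)},\,y+\delta y^{(q^3)}\rangle$. Since $u=y^{(q^4)}$ is again an $\F_q$-basis of $\ker\Tr$, the dual has generator matrix $GA$ with $A\in\GL(4,q)$.

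\textbf{Separately, your endgame would not prove the statement even with the dual in hand.} Suppose you reach some $\{(y,y^q+\epsilon y^{q^4}):y\in\mX_\mu\}$ with $\NN_{q^5/q}(\epsilon)=1$. Proposition~\ref{U:equiv} then only yields $U_{\epsilon'}$ with $\epsilon'=\epsilon\mu^{q-q^4}$. Norm $1$ keeps you inside the family, but it does not make $U_{\epsilon'}$ equivalent to $U_\delta$. The equivalence class depends on the parameter; already scatteredness does, by Proposition~\ref{Prop:Cond}.

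So the ``key check'' you single out is not the relevant one. You must show that the final parameter is $\delta$ itself. Alternatively, if you allow Frobenius twists, it suffices that it is a conjugate $\delta^{q^i}$, but then you only get semilinear equivalence rather than the $\GL(4,q)$-equivalence the paper establishes. As the computation above shows, no Frobenius, coordinate swap or change of hyperplane is needed. Neither $\NN_{q^5/q}(\delta)=1$ nor the parity of $q$ enters the computation of the dual.
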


In order to prove Theorem \ref{Thm:SelfDual}, we need a few auxiliary results.

\medskip 

Fix an $\F_q$-basis $x = (x_1, x_2, x_3, x_4)$ of $\ker(\Tr_{q^5/q})$ and write $x^{(q^i)} = (x_1^{q^i}, \ldots, x_4^{q^i})$ for $i\in\{1,\ldots,4\}$. The code associated with $U_\delta$ is
$$\mathcal{C}_\delta =\langle x,\;\; x^{(q)} + \delta\,x^{(q^4)}\rangle_{\F_{q^5}} \subseteq \F_{q^5}^4.$$
For $c= (c_1, c_2, c_3, c_4) \in \F_{q^5}^4$, set $\sigma_i(c) := \sum_{j=1}^{4} x_j^{q^i} c_j$. Then $c \in \mathcal{C}_\delta^\perp$ if and only if
\begin{equation}\label{eq:dual_cond}
\sigma_0(c) = 0 \qquad\text{and}\qquad \sigma_1(c) + \delta\,\sigma_4(c) = 0.
\end{equation}
Since $x_j \in \ker(\Tr_{q^5/q})$, we have $x_j\qqqq = -x_j - x_j\q - x_j\qq - x_j\qqq$, giving
\begin{equation}\label{eq:sigma4_rel}
\sigma_4(c) = -\sigma_0(c) - \sigma_1(c) - \sigma_2(c) - \sigma_3(c).
\end{equation}

Consider the $4 \times 4$ Moore matrix
$$M_4(x) = \begin{pmatrix} x_1 & x_2 & x_3 & x_4 \\ x_1\q & x_2\q & x_3\q & x_4\q \\ x_1\qq & x_2\qq & x_3\qq & x_4\qq \\ x_1\qqq & x_2\qqq & x_3\qqq & x_4\qqq \end{pmatrix}.$$
 Since $x_1, \ldots, x_4$ are $\F_q$-independent in $\F_{q^5}$, the matrix $M_3(x)$ has rank $3$ and $M_4(x)$ is non-singular. Set $s := \det(M_4(x))  \neq 0$ and define $y = (y_1, y_2, y_3, y_4)$ as the vector of cofactors of the last row of $M_4(x)$:
$$y_j := (-1)^{j}\,\det(M_3(x_1, \ldots, \widehat{x_j}, \ldots, x_4)), \qquad j = 1, 2, 3, 4,$$
where $(x_1, \ldots, \widehat{x_j}, \ldots, x_4)$ denotes the vector without the $j$-th component. By the standard properties of cofactor expansions, $\sigma_i(y) = 0$ for $i = 0, 1, 2$ (expansion along a wrong row produces a matrix with two identical rows) and $\sigma_3(y) = s$ (expansion along the correct row). Together with~\eqref{eq:sigma4_rel}:
\begin{equation}\label{eq:sigma_vals}
\sigma_0 = \sigma_1 = \sigma_2 = 0, \qquad \sigma_3 = s, \qquad \sigma_4 = -s.
\end{equation}

\begin{lemma}\label{Lem:SigmaIK}
Setting $\sigma_{ik} := \sum_j x_j^{q^i} y_j^{q^k} = \sigma_i(y^{(q^k)})$, one has $\sigma_{ik} = \sigma_{(i-k)}(y)^{q^k}$.
\end{lemma}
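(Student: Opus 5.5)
The plan is a direct computation from the definitions, using only that the Frobenius map $z \mapsto z^{q^k}$ is a ring automorphism of $\F_{q^5}$ and that exponents of $q$ may be read modulo $5$, since $z^{q^5}=z$ for every $z\in\F_{q^5}$.

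Write $\sigma_{(i-k)}(y)$ for $\sigma_{r}(y)$ with $r\equiv i-k \pmod 5$, $0\le r\le 4$. Since every $x_j$ lies in $\F_{q^5}$, we have $x_j^{q^i}=x_j^{q^{r}}\,{}^{q^k}$, because $r+k\equiv i \pmod 5$. I would make this identification explicit first, so that negative or large indices cause no ambiguity.

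Next, expand the definition:
$$\sigma_{ik}=\sum_{j=1}^4 x_j^{q^i}\,y_j^{q^k}=\sum_{j=1}^4 \bigl(x_j^{q^{r}}\bigr)^{q^k}\,y_j^{q^k}.$$
Applying additivity and multiplicativity of $z\mapsto z^{q^k}$ termwise, this equals
$$\Bigl(\sum_{j=1}^4 x_j^{q^{r}}\,y_j\Bigr)^{q^k}=\sigma_{r}(y)^{q^k}=\sigma_{(i-k)}(y)^{q^k},$$
as claimed.

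No step is a genuine obstacle; the only point needing care is the index convention modulo $5$. The lemma is meant to be combined with \eqref{eq:sigma_vals}: it gives $\sigma_{ik}=0$ when $i-k\equiv 0,1,2$, $\sigma_{ik}=s^{q^k}$ when $i-k\equiv 3$, and $\sigma_{ik}=-s^{q^k}$ when $i-k\equiv 4 \pmod 5$. I would record these values as an immediate corollary for later use in the proof of Theorem~\ref{Thm:SelfDual}.
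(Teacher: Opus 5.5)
Your proof is correct and is essentially the paper's own one-line argument, namely applying the automorphism $z\mapsto z^{q^k}$ to get $\sigma_{ik}=\bigl(\sum_j x_j^{q^{i-k}}y_j\bigr)^{q^k}$. Your explicit reading of $i-k$ modulo $5$ only makes precise what the paper leaves implicit, and the corollary values you record agree with the paper's table.
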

\begin{proof}
$\sigma_{ik} = \bigl(\sum_j x_j^{q^{i-k}} y_j\bigr)^{q^k}$.
\end{proof}

Combined with~\eqref{eq:sigma_vals}, the complete table of $\sigma_i(y^{(q^k)})$ is:

\medskip
\begin{center}
\renewcommand{\arraystretch}{1.3}
\begin{tabular}{c|ccc|c}
$k$ & $\sigma_0$ & $\sigma_1$ & $\sigma_4$ & $\sigma_1 + \delta\,\sigma_4$ \\[2pt]
\hline
$0$ & $0$ & $0$ & $-s$ & $-\delta\,s$ \\
$1$ & $-s\q$ & $0$ & $s\q$ & $\delta\,s\q$ \\
$2$ & $s\qq$ & $-s\qq$ & $0$ & $-s\qq$ \\
$3$ & $0$ & $s\qqq$ & $0$ & $s\qqq$ \\
$4$ & $0$ & $0$ & $0$ & $0$ \\
\end{tabular}
\end{center}

\medskip

The row $k = 4$ shows that $y^{(q^4)} \in \mathcal{C}_\delta^\perp$. For the second generator, set $c_2 = y + \alpha\,y^{(q^3)}$, with $\alpha\in\F_{q^5}$. The condition $\sigma_0(c_2) = 0 + \alpha \cdot 0 = 0$ is automatic, while condition~\eqref{eq:dual_cond} gives
$$\alpha\,s\qqq + \delta(-s) = 0 \qquad\Longrightarrow\qquad \alpha = \delta\,s^{1-q^3}.$$
The two generators are $\F_{q^5}$-independent since they involve distinct Frobenius shifts of $y$, so
\begin{equation}\label{eq:Cperp_gen}
\mathcal{C}_\delta^\perp = \langle y^{(q^4)},\;\; y + \delta\,s^{1-q^3}\,y^{(q^3)}\rangle_{\F_{q^5}}.
\end{equation}

\begin{proposition}\label{Prop:MooreFq}
$s = \det(M_4(x)) \in \F_q^{\times}$.
\end{proposition}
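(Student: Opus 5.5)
We show $s^q=s$ by applying Frobenius to the determinant and using the trace relation, and then check the sign of the resulting cyclic permutation. Raising every entry of $M_4(x)$ to the $q$-th power gives
$$s^q=\det\begin{pmatrix} x^{(q)}\\ x^{(q^2)}\\ x^{(q^3)}\\ x^{(q^4)}\end{pmatrix},$$
where $x^{(q^i)}$ denotes the row $(x_1^{q^i},\ldots,x_4^{q^i})$.

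\textbf{Step 1: eliminate the last row.} Each $x_j$ lies in $\ker(\Tr_{q^5/q})$, so, exactly as in \eqref{eq:sigma4_rel}, we have $x^{(q^4)}=-x^{(1)}-x^{(q)}-x^{(q^2)}-x^{(q^3)}$, with $x^{(1)}=x$. By multilinearity of the determinant, the terms $-x^{(q)},-x^{(q^2)},-x^{(q^3)}$ repeat existing rows and contribute zero. Only the term $-x^{(1)}$ survives, giving
$$s^q=-\det\begin{pmatrix} x^{(q)}\\ x^{(q^2)}\\ x^{(q^3)}\\ x\end{pmatrix}.$$

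\textbf{Step 2: restore the original row order.} Moving the last row to the top is a cyclic permutation of four rows, i.e.\ a $4$-cycle, which is odd and has sign $-1$. Therefore
$$s^q=-(-1)\det M_4(x)=s.$$

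\textbf{Conclusion.} Since $s^q=s$, we get $s\in\F_q$. Moreover, $s\neq 0$ was already noted, because $x_1,\dots,x_4$ are $\F_q$-independent and the Moore matrix $M_4(x)$ is therefore nonsingular. Hence $s\in\F_q^\times$.

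The main obstacle is only the sign bookkeeping: the minus sign coming from the trace relation has to cancel the sign of the $4$-cycle. An extra sign would instead give $s^{q-1}=-1$ and $s\notin\F_q$. I would double-check this by writing the permutation explicitly: new row order $(4,1,2,3)\mapsto(1,2,3,4)$ takes three transpositions.
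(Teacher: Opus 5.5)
Your proposal is correct and follows essentially the same argument as the paper. Both apply Frobenius to the determinant, eliminate the row $x^{(q^4)}$ by multilinearity using the trace relation so that only $-x$ survives, and cancel that minus sign against the sign $(-1)^3$ of the cyclic row permutation to get $s^q=s$, with $s\neq 0$ coming from the nonsingularity of the Moore matrix.
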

\begin{proof}
Applying the $q$-th power to the determinant shifts each row by one Frobenius:
$$s\q =\det(M_4(x^{(q)}))= \det\begin{pmatrix} x_1\q & x_2\q & x_3\q & x_4\q \\ x_1\qq & x_2\qq & x_3\qq & x_4\qq \\ x_1\qqq & x_2\qqq & x_3\qqq & x_4\qqq \\ x_1\qqqq & x_2\qqqq & x_3\qqqq & x_4\qqqq\end{pmatrix}.$$
Substituting $x_j\qqqq = -(x_j + x_j\q + x_j\qq + x_j\qqq)$ in the last row and expanding by multilinearity, all terms with a repeated row vanish, leaving
$$s\q = -\det\begin{pmatrix} x_1\q & \cdots & x_4\q \\ x_1\qq & \cdots & x_4\qq \\ x_1\qqq & \cdots & x_4\qqq \\ x_1 & \cdots & x_4\end{pmatrix} = (-1)(-1)^3\,s = s,$$
where three row swaps bring the last row to the top, restoring the standard Moore order.
\end{proof}

Since $s \in \F_q^{\times}$, we have $s^{q^3} = s$, hence $\alpha = \delta\,s^{1-q^3} = \delta$, and~\eqref{eq:Cperp_gen} simplifies to
\begin{equation}\label{eq:Cperp_final}
\mathcal{C}_\delta^\perp = \langle y^{(q^4)},\;\; y + \delta\,y^{(q^3)}\rangle_{\F_{q^5}}.
\end{equation}

\begin{proposition}\label{Prop:WkerTr}
Let $y = (y_1, y_2, y_3, y_4)$. The components of $y$ form an $\mathbb{F}_q$-basis of $\ker(\Tr_{q^5/q})$.
\end{proposition}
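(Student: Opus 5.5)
We need to show two things: each $y_j$ lies in $\ker(\Tr_{q^5/q})$, and $y_1,\ldots,y_4$ are $\F_q$-linearly independent.

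\emph{Step 1: $y_j\in\ker(\Tr_{q^5/q})$.} From \eqref{eq:sigma_vals}, Lemma \ref{Lem:SigmaIK} and the table, the vector $y^{(q^4)}$ satisfies $\sigma_i(y^{(q^4)})=0$ for $i=0,1,2,3,4$. Indeed $\sigma_{i4}=\sigma_{i-4}(y)^{q^4}=\sigma_{i+1}(y)^{q^4}$ with indices mod $5$, so
\[
\sigma_{04}=\sigma_1(y)^{q^4}=0,\quad \sigma_{14}=\sigma_2(y)^{q^4}=0,\quad \sigma_{44}=\sigma_0(y)^{q^4}=0,
\]
and we also need $\sigma_{24}=\sigma_3(y)^{q^4}=s^{q^4}=s$. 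So it is \emph{not} true that all $\sigma_{i4}$ vanish, and I would not pursue that route. Instead I will use an orthogonality argument. Fix $j$ and consider the element $\Tr(y_j)$. A cleaner approach is to write $y_j$ explicitly through the adjugate: since $y$ is the last row of $s\,(M_4(x)^{-1})^\top$, we have $M_4(x)\,y^\top=(0,0,0,s)^\top$. For a given $c\in\F_{q^5}^4$, the relations $\sigma_0(c)=\sigma_1(c)=\sigma_2(c)=0$ cut out an $\F_{q^5}$-line, and $y$ spans it.

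Now I exhibit a natural candidate on that line. Choose $z\in\F_{q^5}$ and define $c_j:=\Tr_{q^5/q}$-dual coordinates. Concretely, let $x_1^*,\ldots,x_4^*,x_5^*$ be the dual basis to the basis $x_1,\ldots,x_4,x_5$ of $\F_{q^5}$ (with $x_5\notin\ker\Tr$) under the trace form, so that $\sum_j x_j^{q^i}(x_j^*)^{q^k}$ behaves like an identity. The dual-basis relation $\sum_{j=1}^5 x_j^{q^i}x_j^{*}=\delta_{i0}$ type identities show that, after restricting to $j\le4$ and correcting by the $x_5$ term, the vector $(x_j^*)_{j\le4}$ is proportional to a Frobenius shift of $y$. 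The components of $y$ then inherit membership in a trace-zero hyperplane from the fact that $x_5^*$ is orthogonal to $\ker\Tr$, i.e.\ $x_5^*\in\F_q^{\times}$, which forces $\Tr(x_j^*)=0$ for $j\le4$ after adjusting.

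\emph{Step 2: $\F_q$-independence.} If $\sum_j a_jy_j=0$ with $a_j\in\F_q$, then the same holds for every Frobenius power, so $M_4(y)$ would have $(a_j)$ in its kernel. Using $M_4(x)y^\top=(0,0,0,s)^\top$ and its conjugates, I will check that $\mathrm{rowsp}$ of $(y^{(q^k)})_k$ has rank $4$: the matrix $\bigl(\sigma_i(y^{(q^k)})\bigr)_{i,k}$ from the table restricted to $i=0,\ldots,3$, $k=0,\ldots,3$ is nonsingular, being triangular-like with entries $\pm s^{q^k}\neq0$. Since this matrix equals $M_4(x)\,M_4(y)^\top$, we get $\det M_4(y)\neq0$, hence independence.

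The main obstacle is Step 1: making the dual-basis identification precise, including the correct Frobenius shift and scalar, and verifying that $\Tr(y_j)=0$. A fallback is a direct check that $\sum_{k=0}^4 y_j^{q^k}$ is a $5\times5$ determinant with a repeated row, obtained by bordering $M_4(x)$ with the all-ones relation that encodes $\Tr(x_j)=0$.
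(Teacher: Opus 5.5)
The gap is Step 1: the trace-zero property is never actually proved. Your dual-basis paragraph does not say which vector built from the $x_j^*$ equals $y$. The two concrete claims it does make are false.
First, $(x_j^*)_{j\le 4}$ is not proportional to any single shift $y^{(q^k)}$. The quantities $\sum_{j\le 4}x_j^{q^i}x_j^*=\delta_{i0}-x_5^{q^i}x_5^*$ are nonzero for $i=1,2,3$, whereas for fixed $k$ only two of the $\sigma_{ik}$ are nonzero.
Second, $x_5^*\in\F_q^{\times}$ does not force $\Tr_{q^5/q}(x_j^*)=0$ for $j\le 4$. Expanding $1=\sum_{l}\Tr_{q^5/q}(x_l^*)\,x_l$ shows this happens only if $x_5$ itself is chosen in $\F_q$.
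The bordered-determinant fallback is only announced, not carried out.

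Your Step 2 is correct, and more explicit than the paper's appeal to cofactor properties. The matrix $(\sigma_{ik})_{0\le i,k\le 3}=M_4(x)M_4(y)^\top$ has $s$ as the only nonzero entry in its column $k=0$. Its complementary block is upper triangular with diagonal $-s^{q},-s^{q^2},-s^{q^3}$, so its determinant is $s^{1+q+q^2+q^3}\neq 0$.

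The missing idea uses only the data from your abandoned computation with $y^{(q^4)}$: sum all five Frobenius shifts instead of looking at one. Put $v:=\sum_{k=0}^{4}y^{(q^k)}$, so that $v_j=\Tr_{q^5/q}(y_j)$.
By Lemma \ref{Lem:SigmaIK} and \eqref{eq:sigma_vals}, only $k\equiv i-3$ and $k\equiv i-4 \pmod 5$ contribute to $\sigma_i(v)=\sum_{k}\sigma_{ik}$. The contributions are $s^{q^k}$ and $-s^{q^k}$, which equal $s$ and $-s$ by Proposition \ref{Prop:MooreFq}. Hence $\sigma_i(v)=0$ for all $i$.
Since $(\sigma_0(v),\dots,\sigma_3(v))^\top=M_4(x)v^\top$ and $M_4(x)$ is nonsingular, $v=0$. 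This is the paper's argument.

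If you want to keep the dual-basis route, the correct identification is $y=s\bigl((x_j^*)^{q^3}-(x_j^*)^{q^4}\bigr)_{j\le 4}$. Start from $\sum_{j=1}^{5}x_j^{q^i}(x_j^*)^{q^k}=\delta_{ik}$. The $x_5$-term cancels in the difference precisely because $x_5^*\in\F_q$. So this vector satisfies $\sigma_0=\sigma_1=\sigma_2=0$ and $\sigma_3=s$, and therefore equals $y$.
Then $y_j=w_j-w_j^{q}$ with $w_j=s(x_j^*)^{q^3}$ (using $s\in\F_q$), which visibly has trace zero. That version gives an extra structural description of $y$. It hinges on taking a difference of two consecutive shifts, not on any property of $\Tr_{q^5/q}(x_j^*)$.
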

\begin{proof}
Set $v = \sum_{k=0}^{4}y^{(q^k)}$, so that $v_j = \Tr_{q^5/q}(y_j)$. Then
$$\sigma_i(v) = \sum_{k=0}^{4} \sigma_{ik} = s^{q^{(i-3)}} - s^{q^{(i-4)}} = s - s = 0$$
for all $i=1,\dots,4$, where the last equality uses $s \in \mathbb{F}_q$. Since the $4\times 4$ Moore matrix $M_4(x)$ is non-singular, $v$ must be the null vector. Hence $y_j \in \ker(\Tr_{q^5/q})$ for every $j$. Since $\det(M_4(y)) \neq 0$ (which follows from standard cofactor matrix properties), the elements $y_1, y_2, y_3, y_4$ are $\mathbb{F}_q$-linearly independent, forming a basis of $\ker(\Tr_{q^5/q})$.
\end{proof}

We are now ready to prove Theorem \ref{Thm:SelfDual}.

\begin{proof}[Proof of Theorem \ref{Thm:SelfDual}]
By \eqref{eq:Cperp_final}, the dual code $\mathcal{C}_\delta^\perp$ is generated by the rows of the matrix
$$H = \begin{pmatrix} y^{(q^4)} \\ y + \delta y^{(q^3)} \end{pmatrix}.$$
Let $u = y^{(q^4)}$. Notice that $\ker(\Tr_{q^5/q})$ is closed under the Frobenius map $z \mapsto z^q$, since $\Tr_{q^5/q}(z^q) = \Tr_{q^5/q}(z)^q = 0$. Because the map $z \mapsto z^{q^4}$ is an $\mathbb{F}_q$-automorphism of $\mathbb{F}_{q^5}$ that preserves this kernel, the components of $u$ also form an $\mathbb{F}_q$-basis of $\ker(\Tr_{q^5/q})$. 

Note that $u^{(q)} = y^{(q^5)} = y$ and $u^{(q^4)} = y^{(q^8)} = y^{(q^3)}$. Thus, the generator matrix $H$ can be rewritten as
$$H = \begin{pmatrix} u \\ u^{(q)} + \delta u^{(q^4)} \end{pmatrix}.$$
Recall that the generator matrix of the original code $\mathcal{C}_\delta$ is
$$G = \begin{pmatrix} x \\ x^{(q)} + \delta x^{(q^4)} \end{pmatrix}.$$
Since both $x = (x_1, x_2, x_3, x_4)$ and $u = (u_1, u_2, u_3, u_4)$ are $\mathbb{F}_q$-bases of the same 4-dimensional $\mathbb{F}_q$-vector space $\ker(\Tr_{q^5/q})$, there exists an invertible matrix $A \in \GL(4, q)$ such that
$$u = x A.$$
Because $A$ has entries in $\mathbb{F}_q$, the Frobenius map acts component-wise and leaves $A$ invariant. Therefore, $u^{(q)} = x^{(q)} A$ and $u^{(q^4)} = x^{(q^4)} A$. 
Substituting this into $H$, we obtain
$$H = \begin{pmatrix} x A \\ x^{(q)}A + \delta x^{(q^4)}A \end{pmatrix} = \begin{pmatrix} x \\ x^{(q)} + \delta x^{(q^4)} \end{pmatrix} A = G A.$$
Since two rank-metric codes are equivalent if they have two generator matrices differing by right multiplication by a matrix in $\GL(n, q)$, the codes $\mathcal{C}_\delta^\perp$ and $\mathcal{C}_\delta$ are equivalent.
\end{proof}

\section{Concluding remarks and open problems}
In this paper, we presented the first infinite family of non-extendable $\mathbb{F}_{q^m}$-linear MRD codes that do not reach the maximum possible length (i.e., $n < m$). Geometrically, these correspond to scattered subspaces that are not maximum scattered, but are maximal with respect to inclusion.

Our results rely heavily on the partial classification of maximum length MRD codes in $\mathbb{F}_{q^5}^2$ (corresponding to maximum scattered subspaces) obtained in \cite{Longobardi2025classificationmaximumscatteredlinear}. Consequently, if the codes corresponding to families (C3) and (C4) from Theorem~\ref{Th:Classification} are indeed non-existent---as conjectured and supported by computational evidence---then our construction holds for any value of $q$ and any $\delta$ satisfying the conditions in Proposition~\ref{Prop:Cond}.

We conclude with some open problems arising from this coding theory perspective:

\begin{enumerate}[(i)]
\item Are the MRD codes corresponding to families (C3) and (C4) in Theorem~\ref{Th:Classification} non-existent for any $q$?
\item Experimental results suggest that the codes associated with the $q$-systems $U_\delta$, $\NN_{q^m/q}(\delta)=1$, whenever they are MRD, are also non-extendable, at least when $m$ is odd. Is this true? Unfortunately, in these cases, we cannot use any partial classification of scattered subspaces in $\F_{q^m}^2$.
\item A general strategy for constructing non-extendable MRD codes could be as follows:
\begin{enumerate}
    \item Start with an $\mathbb{F}_{q^m}$-linear code $\mathcal{C}_f$ associated with $U_f := \{(x,f(x)) : x \in \mathbb{F}_{q^m}\}$, which is not MRD.
    \item Puncture the code by restricting its geometric domain to a suitable $(n-1)$-dimensional $\mathbb{F}_q$-subspace $H \subset \mathbb{F}_{q^m}$, creating a candidate MRD code $\mathcal{C}_{f,H}$ of length $n-1$.
\end{enumerate}
This approach raises a key question: For a given $f$ and $H$, can we determine all the possible  codes $\mathcal{C}_g$ that can act as one-column extensions of $\mathcal{C}_{f,H}$? Solving this would provide a powerful tool to prove non-extendability by ruling out known families of extended codes.

\item The discussion so far has focused exclusively on codes of dimension $2$ (scattered subspaces in $\mathbb{F}_{q^m}^2$). Until very recently, no examples were known of non-extendable MRD codes with higher dimension ($k\geq3$ or $h>1$). However, using MAGMA, we found a sporadic example of a non-extendable $[6,3,4]_{2^7/2}$ MRD code (geometrically, a maximally $2$-scattered subspace of dimension $6$ in $\mathbb{F}_{2^7}^3$). This isolated case suggests that such optimal, non-extendable codes may exist more generally.

\begin{rem}
A sporadic non-extendable $[6,3,4]_{2^7/2}$ MRD code can be described explicitly by its associated $q$-system:
$$
U = \{(x, x^2, x^{4}+\xi^3x^{64})+\mu(\xi^6, \xi^{27}, \xi^{34}) : x \in S,\mu\in\mathbb{F}_2\},
$$
where $
S:=\langle1,\xi,\xi^2,\xi^3,\xi^4\rangle_{\mathbb{F}_2}
$
and $\xi$ is a primitive element of $\mathbb{F}_{2^7}$.
It can be proven using MAGMA that $U$ is $2$-scattered but not contained in any larger $2$-scattered subspace, hence the associated MRD code cannot be extended.
\end{rem}

\item Are there structural differences between extendable and non-extendable MRD codes in terms of their generalized rank weights? Geometrically, this translates to investigating differences in the intersection numbers of maximally scattered versus non-maximally scattered subspaces with $\mathbb{F}_{q^m}$-subspaces.
\end{enumerate}
\section*{Acknowledgements}
The authors thank Bence Csajb{\'o}k for interesting and fruitful discussions. This research was partially supported by the Italian National Group for Algebraic and Geometric Structures and their Applications (GNSAGA—INdAM)
through the INdAM - GNSAGA Project ``\emph{Noncommutative polynomials in coding theory}'', CUP E53C24001950001.
This research was also partially supported by Bando Galileo 2026 - G26-260. 

\section*{Declarations}
{\bf Conflicts of interest.} The authors have no conflicts of interest to declare that are relevant to the content of this
article.

\bibliographystyle{abbrv} 
\bibliography{BIB} 

\end{document}